\documentclass[11pt]{article}

\usepackage[T1]{fontenc}

\usepackage[margin=1in]{geometry}

\usepackage[numbers,sort]{natbib}

\usepackage{pstricks}
\usepackage{subfigure}

\usepackage{times}
\usepackage{color}
\usepackage[ruled,vlined,linesnumbered]{algorithm2e}
\usepackage{pstricks, pst-node, pst-tree}
\usepackage{subfigure}
\usepackage{amssymb,amsmath,stmaryrd}

\newtheorem{theorem}{Theorem}
\newtheorem{definition}[theorem]{Definition}
\newtheorem{lemma}[theorem]{Lemma}
\newtheorem{claim}[theorem]{Claim}
\newtheorem{fact}[theorem]{Fact}
\newtheorem{example}[theorem]{Example}

\newcommand{\sq}{\hbox{\rlap{$\sqcap$}$\sqcup$}}
\newcommand{\qed}{\hspace*{\fill}\sq}
\newenvironment{proof}{\noindent {\bf Proof.}\ }{\qed\par\vskip 4mm\par}

\newcommand{\seq}[1]{#1}

\newcommand{\tuple}[1]{\langle#1\rangle}

\def\rmIndex{\mathsf{rmIdx}}
\def\brmIndex{\mathsf{\widehat{rm}Idx}}
\def\rmLastIndex{\mathsf{rmLastIdx}}
\def\crpIndex{\mathsf{crLastIdx}}
\def\crpFirstIndex{\mathsf{crFirstIdx}}
\def\lmIndex{\mathsf{lmIdx}}
\def\leftIndex{\mathsf{leftIdx}}
\def\rightIndex{\mathsf{rightIdx}}
\def\l{\mathsf{left}}
\def\r{\mathsf{right}}
\def\rm{\mathsf{rm}}
\def\lm{\mathsf{lm}}
\def\crp{\mathsf{cr}}
\def\brm{\mathsf{\widehat{rm}}}
\def\occ{\mathit{occ}}
\def\nodes{\mathit{nodes}}
\def\len{\mathit{len}}

\def\cost{w}
\def\opt{\mathit{opt}}
\def\feasible{\mathit{feasible}}
\def\R{\mathcal{R}}
\def\C{\mathcal{C}}

\def\GETS{\leftarrow}
\def\cw{\mathsf{cw}}
\def\acw{\mathsf{acw}}
\def\cwIndex{\mathsf{cwIdx}}
\def\acwIndex{\mathsf{acwIdx}}
\def\sym{\mathsf{sym}}

\def\min{\mathsf{min}}
\def\max{\mathsf{max}}

\title{Ride Sharing with a Vehicle of Unlimited Capacity\\ (full version)}

\author{
  Angelo Fanelli\footnote{
  CNRS, France.  Email: \texttt{angelo.fanelli@unicaen.fr}
  }
  \and
  Gianluigi Greco\footnote{
  Department of Mathematics and Computer Science, University of Calabria, Italy.
  Email:  \texttt{ggreco@mat.unical.it}
  }
}
\date{}


\begin{document}

\maketitle

\begin{abstract}
A ride sharing problem is considered where we are given a graph, whose edges are equipped with a travel cost, plus a set of objects, each
associated with a transportation request given by a pair of origin and destination nodes. A vehicle travels through the graph, carrying each
object from its origin to its destination without any bound on the number of objects that can be simultaneously transported. The vehicle starts
and terminates its ride at given nodes, and the goal is to compute a minimum-cost ride satisfying all requests. This ride sharing problem is
shown to be tractable on paths by designing a $O(h \log h+n)$ algorithm, with $h$ being the number of distinct requests and with $n$ being the
number of nodes in the path. The algorithm is then used as a subroutine to efficiently solve instances defined over cycles, hence covering all
graphs with maximum degree $2$. This traces the frontier of tractability, since $\bf NP$-hard instances are exhibited over trees whose maximum
degree is $3$.
\end{abstract}

\section{Introduction}\label{sec:intro}
{Vehicle routing problems} have been drawn to the attention of the research community in the late 50's~\cite{doi:10.1287/mnsc.6.1.80}. Since
then, they have attracted much attention in the literature due to their pervasive presence in real-world application scenarios, till becoming
nowadays one of the most studied topics in the field of operation research and combinatorial optimization (see,
e.g.,~\cite{Laporte1992345,doi:10.1137/1.9780898718515,Eksioglu:2009:SVR:1651928.1652074} and the references therein).

Within the broad family of vehicle routing problems, a noticeable class is constituted by the {pickup and delivery problems}, where a given set
of objects, such as passengers or goods, have to be {picked} at certain nodes of a transportation network and {delivered} at certain
destinations~\cite{Cordeau_transportationon}. Pickup and delivery problems can be divided in two main groups~\cite{PDP}.
The first group refers to situations where we have a single type of object to be transported, so that pickup and delivery locations are
unpaired (see, e.g., \cite{OCPD}).
The second group deals, instead, with problems where each transportation request is associated with a specific origin and a specific
destination, hence resulting in paired pickup and delivery points (see, e.g.,~\cite{Kalantari1985377,Dumas19917}).

In the paper, we focus on problems of the latter kind, and we deal with the most basic setting where \emph{one vehicle} is available only. The
vehicle is initially located at some given source node and it must reach a given destination node by means of a \emph{feasible} ride, that is,
of a ride satisfying all requests. The edges of the network are equipped with weights, and the goal is to compute an \emph{optimal} ride, that
is, a feasible ride minimizing the sum of the weights of the edges traversed by the vehicle.

Ride sharing with one vehicle has attracted much research in the literature and most of the foundational results in the area of vehicle routing
precisely refer to this setting---see Section~\ref{sec:related}.
In fact, earlier works have mainly focused on the case where the capacity of the vehicle is bounded by some given constant. But, there are
application scenarios where the capacity of the vehicle can be better thought as being \emph{unlimited}, as it happens, for instance, when we
are transporting intangible objects, such as messages. More generally, we might know beforehand that the number of objects to be transported is
less than the capacity of the vehicle; and, accordingly, we would like to use solution algorithms that are more efficient than those proposed
in the literature and designed in a way that this knowledge is not suitably taken into account.

The goal of the paper is to fill this gap, and to study complexity and algorithmic issues arising with ride sharing problems in presence of one
vehicle of unlimited capacity. The analysis has been conducted by considering different kinds of \emph{undirected} graph topologies, which have
been classified on the basis of the degree of their nodes. Let $n$ be the number of nodes in the underlying graph, let $q$ be the number of
requests (hence, of objects to be transported), and let $h$ denote the number of distinct requests (so, $h\leq q$ and $h\leq n^2$). Then, our
results can be summarized as follows:
\begin{itemize}
  \item[$\blacktriangleright$] Optimal rides can be computed in polynomial time over graphs that are \emph{paths}. In particular, an
      algorithm is exhibited to compute an optimal ride in $O(h  \log h+n)$. This improves the $O(q n+n^2)$ bound that we obtain with the
      state-of-the-art algorithm by Guan and Zhu~\cite{guan1998multiple} for vehicles with limited capacity, by na\"ively setting the limit
      to $k$.

  \item[$\blacktriangleright$] The design and the analysis of the above algorithm is the main technical achievement of the paper. By using
      the algorithm as a basic subroutine, we are then able to show that optimal rides can be computed in polynomial time over
      \emph{cycles} too, formally in $O(m^2 \cdot (h  \log h+n))$, with $m$ being the number of distinct nodes that are endpoints of some
      request, so that $m\leq 2h$ and $m\leq n$. The result has no counterpart in the limited capacity setting, where no polynomial time
      algorithm over cycles has been exhibited so far---special cases have been actually addressed, as discussed in
      Section~\ref{sec:related}.

  \item[$\blacktriangleright$] Path and cycles completely cover all graphs whose maximum degree is 2. In fact, this value precisely traces
      the frontier of tractability for the ride sharing problem we have considered, as $\bf NP$-hard instances are exhibited over graphs
      whose maximum degree is 3 and which are moreover trees.
\end{itemize}

\noindent The rest of the paper is organized as follows. The formal framework and some basic results are illustrated in
Section~\ref{sec:prelim}. The algorithms for paths and cycles are presented and their complexity is analyzed in Section~\ref{sec:path} and
Section~\ref{sec:cycles}, respectively. A discussion of relevant related works is reported in Section~\ref{sec:related}, while a few concluding
remarks are discussed in Section~\ref{sec:conclusion}.

\section{Ride Sharing Scenarios}\label{sec:prelim}
\subsection{Formal Framework}

Let $G=(V, E, w)$ be an undirected weighted graph, where $V$ is a set of {nodes} and $E$ is a set of {edges}. Each edge $e\in E$ is a set
$e\subseteq V$ with $|e|=2$, and it is equipped with a cost $w(e)\in \mathbb{Q}^+$.
%

A \emph{ride} $\pi$ in $G$ is a sequence of nodes $\seq{\pi_1, \ldots, \pi_k}$ such that $\pi_i\in V$ is the node reached at the \emph{time
step} $i$ and $\{\pi_i, \pi_{i+1}\} \in E$, for each $i$ with $1\leq i\leq k-1$. The time step $k> 0$ is called the \emph{length} of $\pi$,
hereinafter denoted by $\len(\pi)$.
The value $\sum_{i=1}^{k-1} w(\{\pi_i, \pi_{i+1}\})$ is the \emph{cost} of $\pi$ (w.r.t.~$w$) and is denoted by $\cost(\pi)$.
Moreover, $\nodes(\pi)$ denotes the set of all nodes $v\in V$ occurring in $\pi$.

A \emph{request} on $G=(V, E, w)$ is a pair $(s,t)$ such that $\{s,t\}\subseteq V$. Note that $s$ and $t$ are not necessarily distinct, and
they are called the starting and terminating nodes, respectively, of the request.
We say that a ride $\pi$ in $G$ \emph{satisfies} the request $(s,t)$ if there are two time steps $i$ and $i'$ such that $1\leq i\leq i'\leq
\len(\pi)$, $\pi_i=s$ and $\pi_{i'}=t$. If $\mathcal{C}$ is a set of requests on $G$, then $V_\mathcal{C}$ is the set of all starting and
terminating nodes occurring in it.

A \emph{ride-sharing} scenario consists of a tuple $\mathcal{R}=\tuple{G, (s_0,t_0),\mathcal{C}}$, where $G=(V, E, w)$ is an undirected
weighted graph, $(s_0,t_0)$ is a request on $G$ and $\mathcal{C}$ is a non-empty set of requests.
%
A ride $\pi=\seq{\pi_1, \ldots, \pi_k}$ in $G$ is \emph{feasible} for $\mathcal{R}$ if $\pi_1=s_0$, $\pi_k=t_0$, and $\pi$ satisfies each
request in $\mathcal{C}$.
The set of all feasible rides for $\mathcal{R}$ is denoted by $\feasible(\mathcal{R})$.
A feasible ride $\pi$ is \emph{optimal} if $\cost(\pi')\geq \cost(\pi)$, for each feasible ride $\pi'$. The set of all optimal rides for
$\mathcal{R}$ is denoted by $\opt(\mathcal{R})$.

Let $\mathcal{R}=\tuple{G, (s_0,t_0),\mathcal{C}}$ be a ride-sharing scenario, and let $\pi$ be a ride in $G$. Let $i$ and $i'$ be two time
steps such that $1\leq i\leq i'\leq \len(\pi)$. Then, we denote by $\pi[i,i']$ the ride $\pi_i,\ldots,\pi_{i'}$ obtained as the sequence of the
nodes occurring in $\pi$ from time step $i$ to time step $i'$.
%
If $\pi$ and $\pi'$ are two rides on $G$, then we write $\pi'\preceq \pi$ if either $\pi'=\pi$ or, recursively, if there are two time steps $i$
and $i'$ such that $1\leq i< i'\leq \len(\pi)$, $\pi_{i+1}=\pi_{i'}$ or $\pi_{i}=\pi_{i'-1}$, and $\pi' \preceq \pi[1,i],\pi[i',\len(\pi)]$ (informally speaking, $\pi'$ can be obtained from $\pi$ by removing a subsequence of nodes).

\begin{fact}\label{lem:preserving}
Let $\pi$ and $\pi'$ be two rides such that $\pi'\preceq \pi$. Then:  $w(\pi')\leq w(\pi)$;  if $\pi'$ satisfies a request $(s,t)\in
\mathcal{C}$, then $\pi$ satisfies $(s,t)$, too;  if $\pi$ is feasible (resp., optimal) and $V_\mathcal{C}\cap (\nodes(\pi)\setminus
\nodes(\pi')) =\emptyset$, then $\pi'$ is feasible (resp., optimal), too.
\end{fact}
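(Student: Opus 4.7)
My plan is to induct on the depth of the recursive derivation witnessing $\pi' \preceq \pi$. The base case $\pi' = \pi$ makes each of the three assertions trivial. For the inductive step, let $\pi'' = \pi[1,i] \cdot \pi[i',\len(\pi)]$ be the one-step reduct, so that $\pi' \preceq \pi''$. Observing that $\nodes(\pi') \subseteq \nodes(\pi'') \subseteq \nodes(\pi)$, the set-theoretic hypothesis $V_\mathcal{C} \cap (\nodes(\pi)\setminus\nodes(\pi')) = \emptyset$ implies both $V_\mathcal{C} \cap (\nodes(\pi)\setminus\nodes(\pi'')) = \emptyset$ and $V_\mathcal{C} \cap (\nodes(\pi'')\setminus\nodes(\pi')) = \emptyset$. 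It therefore suffices to prove the three assertions for the single-step passage from $\pi$ to $\pi''$ and then invoke the induction hypothesis on $\pi'' \succeq \pi'$ to cover the remaining link.

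For the cost, I would write $w(\pi)$ as the sum of weights of its $\len(\pi)-1$ consecutive edges and note that passing to $\pi''$ deletes the edges indexed by $j \in \{i,\ldots,i'-1\}$ and inserts the single new edge $\{\pi_i,\pi_{i'}\}$. Under either alternative $\pi_{i+1}=\pi_{i'}$ or $\pi_i=\pi_{i'-1}$ of the recursion, this new edge coincides with one of the removed edges, so the net effect is to subtract a non-negative quantity and $w(\pi'') \leq w(\pi)$. The upward transfer of request satisfaction is equally short: time steps of $\pi''$ correspond, via an order-preserving bijection, to the retained positions $R = \{1,\ldots,i\} \cup \{i',\ldots,\len(\pi)\}$ of $\pi$, so any ordered occurrence of $s$ followed by $t$ in $\pi''$ lifts to such an occurrence in $\pi$.

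The delicate part, and the one I expect to be the main obstacle, is the downward transfer of feasibility. The endpoints are preserved because $i \geq 1$ and $i' \leq \len(\pi)$, so $\pi''_1 = s_0$ and $\pi''_{\len(\pi'')} = t_0$. For each request $(s,t) \in \mathcal{C}$, the hypothesis guarantees that both $s$ and $t$ survive in $\nodes(\pi'')$, and the real task is to extract an \emph{ordered} witness pair in $\pi''$: if the earliest surviving occurrence of $s$ falls in the left piece $\{1,\ldots,i\}$ and the latest surviving occurrence of $t$ falls in the right piece $\{i',\ldots,\len(\pi)\}$, the partition itself delivers the ordered pair. Otherwise both surviving witnesses collapse into a single piece of $R$, and I would exploit the boundary identities $\pi_{i+1}=\pi_{i'}$ or $\pi_i=\pi_{i'-1}$ to transport the original witness of $\pi$ to a retained position without disturbing the order induced by $\pi$. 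Optimality then follows as an immediate corollary: once $\pi' \preceq \pi$ is shown to be feasible with $w(\pi') \leq w(\pi)$, an optimal $\pi$ forces $\pi'$ to match its cost and inherit optimality.
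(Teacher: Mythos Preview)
The paper offers no proof of this statement; it is recorded as a ``Fact'' and left unjustified. Your inductive scheme and your treatment of the first two assertions (the cost inequality and the upward transfer of request satisfaction) are correct and cleanly argued.

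The third assertion, however, cannot be rescued by the plan you sketch, because \emph{the claim is false as written}. Take the path $1\text{--}2\text{--}3$ with unit weights, $s_0=1$, $t_0=3$, and $\mathcal{C}=\{(2,1)\}$. The ride $\pi=1,2,1,2,3$ is feasible (indeed optimal). With $i=1$ and $i'=4$ one has $\pi_{i+1}=\pi_2=2=\pi_4=\pi_{i'}$, so $\pi''=\pi[1,1],\pi[4,5]=1,2,3$ satisfies $\pi''\preceq\pi$. Here $\nodes(\pi)\setminus\nodes(\pi'')=\emptyset$, so the hypothesis $V_{\mathcal{C}}\cap(\nodes(\pi)\setminus\nodes(\pi''))=\emptyset$ holds vacuously. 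Yet $\pi''=1,2,3$ does not satisfy $(2,1)$: the only occurrence of $1$ precedes the only occurrence of $2$. Thus $\pi''$ is not feasible, contradicting the stated Fact.

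Your instinct that the downward transfer is ``the main obstacle'' is exactly right; the boundary identities $\pi_{i+1}=\pi_{i'}$ or $\pi_i=\pi_{i'-1}$ let you relocate \emph{one} deleted witness to a retained position, but when both witnesses of an ordered pair lie in the excised block and the required order is opposite to the ambient direction of the retained piece, no such transport is possible. In the paper's actual applications of this Fact (notably Lemma~\ref{fact:VC}), only a single turning-point node outside $V_{\mathcal{C}}$ is removed, and in that restricted situation feasibility does survive; but the general statement you were asked to prove does not hold.
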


\begin{example}\label{example:prelim}\em
Consider the following instance (depicted in Figure \ref{fig5}): $V=\{1, 2, 3, 4, 5, 6 \}$, $E=\{\{1,2\}, \{1,4\},$ $\{2,3\}, \{2,5\}, \{3,4\}, \{3,6\}, \{4,5\}, \{5,6\} \}$, $w(e) = 1\,$ for every $e\in E$, $s_0 =1$, $t_0=2$, and $\mathcal{C}=\{(1,5), (6,2) \}$.

The ride $\pi_1 = 1,4,5,2$ is not feasible because it does not satisfy the request $(6,2)$. Instead, $\pi_2 = 1,2,3,4,5,6,5,4,3,2$ is feasible and its cost is $9$.
Nevertheless, $\pi_2$ this is not an optimal ride, because $\pi_3 = 1,4,5,6,3,2$  (thick red edges in Figure \ref{fig5}) is also feasible and its cost is $5$; in particular, note that $\pi_3 \preceq \pi_2$ and that $\pi_3$ is an optimal ride. \hfill $\lhd$
\end{example}

\begin{figure}[t]
\centering
\def\H{3}
\def\L{1.2}
\def\M{0.7}

\def\a{1}
\def\b{3}
\def\c{5}
\def\d{7}
\def\e{9}
\def\f{11}

\def\size{\small}

\psset{xunit=.75cm}
\psset{yunit=.55cm}

\psset{dash=3pt 3pt}
\psset{arrowsize=4pt 2}

\begin{center}
\begin{pspicture}(\a,\L)(\f,3)

\psset{linewidth=.7pt}
\psset{linecolor=black}

\dotnode(\a,\H){1}\uput[90](\a,\H){\size $1$}\uput[-90](\a,\H){\size $s_0$}
\dotnode(\b,\H){2}\uput[90](\b,\H){\size $2$}\uput[-90](\b,\H){\size $t_0$}
\dotnode(\c,\H){3}\uput[90](\c,\H){\size $3$}
\dotnode(\d,\H){4}\uput[90](\d,\H){\size $4$}
\dotnode(\e,\H){5}\uput[90](\e,\H){\size $5$}
\dotnode(\f,\H){6}\uput[90](\f,\H){\size $6$}

\ncLine{-}{1}{2}
\ncLine[linecolor=red,linewidth=1.6pt]{-}{2}{3}
\ncLine{-}{3}{4}
\ncLine[linecolor=red,linewidth=1.6pt]{-}{4}{5}
\ncLine[linecolor=red,linewidth=1.6pt]{-}{5}{6}

\psset{ncurv=.7, angleB=135, angleA=45}
\nccurve[linecolor=red,linewidth=1.6pt]{3}{6}
\nccurve{2}{5}
\psset{ncurv=-.7, angleA=135, angleB=45}
\nccurve[linecolor=red,linewidth=1.6pt]{1}{4}

\pnode(\a,\L){s1}
\pnode(\e,\L){t1}
\pnode(\f,\M){s2}
\pnode(\b,\M){t2}

\psset{linecolor=blue}
\ncLine{->}{s1}{t1}
\ncLine{->}{s2}{t2}

\psset{linewidth=.5pt}
\psset{linecolor=green}
\psset{linestyle=dashed}

\ncLine{-}{1}{s1}
\ncLine{-}{5}{t1}
\ncLine{-}{6}{s2}
\ncLine{-}{2}{t2}

\end{pspicture}
\end{center}
\caption{Instance of Example \ref{example:prelim}.}
\label{fig5}
\end{figure}

\subsection{Basic Complexity Results}

It is easily seen that computing optimal rides is an intractable problem ($\bf NP$-hard), for instance, by exhibiting a reduction from the
well-known traveling salesman problem (see, e.g., \cite{Garey:1979:CIG:578533}). We start our elaboration by strengthening this result and by
showing that intractability still holds over ride-sharing scenarios defined over \emph{trees} whose maximum degree is $3$.

%

%
%

\begin{theorem}\label{thm:NP}
Computing an optimal ride is {\em $\bf NP$}-hard on scenarios $\tuple{G,(s_0,t_0),\mathcal{C}}$ such that $G$ is a tree whose maximum
degree is 3. 
\end{theorem}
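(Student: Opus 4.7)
The plan is to prove NP-hardness by a polynomial-time reduction from a standard NP-complete problem. A natural candidate source is 3-PARTITION, whose combinatorial packing structure fits well the branching topology of max-degree-3 trees; alternatives such as Hamiltonian Path on cubic graphs or 3-SAT would also plausibly serve via suitable gadgets.

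Starting from a 3-PARTITION instance with items $a_1, \ldots, a_{3m}$ summing to $mB$ and $B/4 < a_i < B/2$, I would first construct a tree $T$ of maximum degree 3 consisting of a backbone path from $s_0$ to $t_0$, with $m$ ``bin'' subtrees of capacity $B$ attached at specific backbone nodes, and $3m$ ``item'' sub-branches of weight proportional to $a_i$ attached within the bin subtrees (or to auxiliary chain nodes so as to preserve degree 3). Next I would design the request set $\mathcal{C}$ using precedence pairs $(s,t)$ to enforce that each item must be visited exactly once, strictly between the entry into and the exit from its assigned bin, so that an attempted ``split'' of an item across bins or across passes is ruled out. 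I would then set a threshold $K$ equal to the cost of a ``canonical'' ride that traverses the backbone once and handles each item inside its designated bin.

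Finally, I would establish the correspondence: a valid 3-partition exists iff there is a ride of cost at most $K$. The forward direction follows by directly exhibiting the canonical ride from any valid partition. The backward direction requires arguing that any ride of cost at most $K$, once pruned via Fact~\ref{lem:preserving} to a $\preceq$-minimal feasible ride, must respect bin capacities and thus decodes to a valid 3-partition.

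The main technical obstacle lies in the gadget design: ensuring simultaneously that (a) max degree 3 is preserved throughout the tree, forcing items and bins to be chained rather than star-attached; (b) precedence requests faithfully enforce item-to-bin assignment without any additional machinery; and (c) no ``shortcut'' ride can achieve cost at most $K$ without encoding a valid 3-partition. Since edges are undirected and may be traversed arbitrarily many times in both directions, careful weight calibration of the backbone and of the bin-boundary edges is essential so that any deviation from the canonical traversal pattern—revisiting a bin, interleaving items across bins, or backtracking on the backbone—incurs cost strictly exceeding $K$.
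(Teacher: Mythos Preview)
Your proposal is a plan rather than a proof, and it takes a different route from the paper. The paper reduces from \textsc{Feedback Vertex Set}: it builds a complete binary tree whose leaves are the vertices of a directed graph $\hat G$, sets $s_0=t_0$ to the root, gives weight~$1$ only to the leaf-incident edges, and adds a request $(u,v)$ for every arc $(u,v)\in\hat E$. The crux is that any directed cycle among the requests forces at least one leaf to be visited in two non-adjacent time steps, so its pendant edge is traversed at least four times; hence the optimal cost is $2(|\hat V|+k)$ where $k$ is the minimum feedback vertex set size. This exploits precedence \emph{cycles} rather than numerical packing.

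There is a genuine gap in your plan. The central step---``precedence pairs $(s,t)$ to enforce that each item must be visited \ldots strictly between the entry into and the exit from its assigned bin''---is exactly what you have not shown how to do, and it is not clear it can be done. Every request has a \emph{fixed} origin and destination; there is no mechanism to express ``item $i$ is handled inside \emph{some} bin of the ride's choosing''. With unlimited capacity the vehicle never fills up, so no capacity argument bounds how many items share a bin; any such bound would have to emerge purely from walk-cost accounting on the tree, and you have not exhibited weights or requests that achieve this. Moreover, on a tree the cost of visiting a fixed pendant item branch is the same no matter when in the ride it occurs, so it is hard to see how the \emph{numerical} sizes $a_i$ could interact with bin ``capacities'' $B$ at all. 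In short, the packing structure of \textsc{3-Partition} does not obviously embed into precedence-constrained walks with unlimited capacity; the paper sidesteps this by choosing a source problem (\textsc{FVS}) whose combinatorics---cycles forcing revisits---translate directly into extra edge traversals.
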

\begin{proof}
Consider the following well-known $\bf NP$-hard problem: We are given a directed and connected graph $\hat G=(\hat V,\hat E)$ and a natural
number $k>0$. We have to decide whether there is a \emph{feedback vertex set} $S\subseteq \hat V$ of at most $k$ vertices, i.e., such that
$|S|\leq k$ and the graph $\hat G_S=(\hat V\setminus S, \{ (u,v)\in \hat E \mid \{u,v\}\subseteq \hat V\setminus S\})$ is acyclic. W.l.o.g.,
assume that there is a natural number $n$ such that $|\hat V|=2^n$ and that each vertex has at least one outgoing edge.

Based on $\hat G$, we adapt a reduction that can be found in~\cite{Guan199841} in order to build a ride sharing scenario $\hat
\R=\tuple{G,(s_0,t_0),\mathcal{C}}$, with $G=(V,E,w)$, as follows. First, $G$ is a binary tree rooted at a node $\hat s$ and whose leafs are
the vertices in $\hat V$; so, we have $V\supseteq \hat V$. Second,
the starting and terminating activity coincide with the root, i.e., $s_0=t_0=\hat s$. Third, for each edge $(u,v)\in \hat E$, the request
$(u,v)$ is in $\mathcal{C}$; and, no further request is in $\mathcal{C}$. Finally, $w$ is the function mapping each edge to 0, but the edges
incident to the leafs whose associated cost is $1$.
We now claim that: \emph{there is a feedback vertex set $S$ with $|S|\leq k$ $\Leftrightarrow$ there is a feasible ride $\pi$ with $w(\pi)\leq
2\times (k+|\hat V|)$}.

\begin{description}
\item[($\Rightarrow$)] Assume that $S$ is a feedback vertex set with $|S|= h\leq k$.  Consider the ride $\pi$ defined as follows. For each
    node $v\in \hat V$, let $\pi[v]$ be the ride starting at $\hat s$ reaching $v$ and going back to $\hat s$ along the unique path
    connecting them in $G$. Then, let $\pi$ be any ride having the form $\pi[\alpha_1],\dots,\pi[\alpha_{h+|\hat V|}]$ where:
    $\{\alpha_1,..,\alpha_h\}=\{\alpha_{|\hat V|+1},...,\alpha_{|\hat V|+h}\}=S$, $\{\alpha_{h+1},...,\alpha_{|\hat V|}\}=\hat V\setminus
    S$, and $\alpha_{h+1}$,...,$\alpha_{|\hat V|}$ is any topological ordering of the acyclic graph $\hat G_S$. Note that $w(\pi)=2\times
    (h+|\hat V|)\leq 2\times (k+|\hat V|)$. Moreover, $\pi$ is feasible. Indeed, consider the request $(u,v)\in \mathcal{C}$, associated
    with the edge $(u,v)\in \hat E$. We claim that there are two indices $i$ and $j$ such that $i<j$, $\alpha_i=u$, and $\alpha_j=v$, so
     that the request is satisfied by $\pi$. Indeed, if $u\in \hat V\setminus S$ and $v\in S$, then two indices enjoying these properties
     exist with $h<i\leq |\hat V|$ and $|\hat V|<j$. If $u\in \hat V\setminus S$ and $v\in \hat V\setminus S$, then  $(u,v)$ is also an
     edge in $\hat G_S$ and, by definition of topological ordering, two indices enjoying these properties exist with $h<i<j\leq |\hat V|$.
     Finally, if $u\in S$, then the desired indices are such that $i\leq h$ and $j>h$.

\item[($\Leftarrow$)] Assume that $\pi$ is a feasible ride with $w(\pi)\leq 2\times (k+|\hat V|)$. Since $\hat G$ is connected and each
    vertex has at least one outgoing edge, for each vertex $u\in \hat V$, a request of the form $(u,v)$ is in $\mathcal{C}$. Therefore, the
    edge in $G$ incident to $u$ must be traversed at least twice by $\pi$, because $G$ is a tree rooted at $\hat s=s_0=t_0$ and $u$ is a
    leaf. Therefore, we get $w(\pi)\geq 2\times |\hat V|$. Now, consider any set $\{v_1,...,v_h\}$ inducing a cycle over $\hat G$. In order
    to satisfy the requests associated with them, it must be the case that at least one vertex from this cycle, say $v_1$, occurs in two
    non-adjacent time steps of $\pi$. Hence, the edge in $G$ incident to $v_1$ is traversed at least 4 times. Given that $w(\pi)\leq
    2\times (k+|\hat V|)$, we then conclude that there is a set $S$ of $k$ vertices that cover all the cycles of the graph. This set if a
    feedback vertex set.
\end{description}

Given the properties above, the result is established as the reduction is feasible in polynomial time.
\end{proof}

Motivated by the above bad news, the rest of the paper is devoted to analyze ride-sharing scenarios over graphs whose maximum degree is 2. In
fact, these graphs must be either paths or cycles.\footnote{The case of maximum degree equals to 1 is trivial.}

\section{Optimal Rides on Paths}\label{sec:path}
In this section we describe an algorithm that, given as input a ride-sharing scenario $\mathcal{R}=\tuple{G,(s_0,t_0),\mathcal{C}}$ where
$G=(V, E, w)$ is a \emph{path}, returns an optimal ride for $\mathcal{R}$. In order to keep notation simple, we assume that nodes in $V$ are
(indexed as) natural numbers, so that $V=\{1,\ldots,n\}$. Hence, for each node $v\in V\setminus \{n\}$, the edge $\{v, v+1\}$ is in $E$; and no
further edge is in $E$. Moreover, let us define $\l(\mathcal{R})=\min_{v\in V_\mathcal{C}} v$ and $\r(\mathcal{R})=\max_{v\in V_\mathcal{C}}
v$, as the extreme (left and right) endpoints of any request in $\mathcal{C}$. 

Based on these notions, we distinguish two mutually exclusive cases:
\begin{description}
\item[``outer'':] where either $s_0\leq \l(\mathcal{R})\leq \r(\mathcal{R})\leq t_0$ or $t_0\leq \l(\mathcal{R})\leq \r(\mathcal{R})\leq
    s_0$; that is, the starting and the terminating nodes $s_0$ and $t_0$ are not properly included in the range
    $\{\l(\mathcal{R}),...,\r(\mathcal{R})\}$.

\item[``inner'':] where $\{s_0,t_0\} \cap \{ v\in V \mid \l(\mathcal{R})< v < \r(\mathcal{R})\}\neq \emptyset$; in particular, in this
    case, $\l(\mathcal{R})<\r(\mathcal{R})$ necessarily holds.
\end{description}

In the following two subsections we describe methods to address the two different cases, while their complexity will be later analyzed in
Section~\ref{sec:implementation}.
%
A basic ingredient for both methods is the concept of concatenation of rides, which is formalized below.

\begin{definition}\em
Let $\pi = \seq{\pi_1, \ldots, \pi_k}$ and $\pi' = \seq{\pi'_1, \ldots, \pi'_h}$ be two rides. Their \emph{concatenation} $\pi\mapsto \pi'$ is
the ride inductively defined as follows:
\begin{itemize}
  \item if $\pi_k=\pi_1'$ and $h>1$, then $\pi\mapsto \pi'=\seq{\pi_1, \ldots, \pi_k, \pi'_2, \ldots, \pi'_h}$;

  \item if $\pi_k=\pi_1'$ and $h=1$, then $\pi \mapsto \pi' = \pi$;

  \item if $\pi_k\neq \pi_1'$, then $\pi\mapsto \pi'$ is defined as the concatenation\footnote{When concatenating more than two sequences,
      the specific order of application of the operator $\mapsto$ is immaterial. Hence, we often avoid the use of parenthesis.} $\pi
      \mapsto \bar \pi \mapsto \pi'$, where $\bar \pi=\pi_k,\dots,\pi_1'$ is the ride obtained as the sequence of nodes connecting $\pi_k$
      and $\pi_1'$ with the smallest length.
      Note that $\bar \pi$ is univocally determined on paths.\hfill $\Box$
\end{itemize}
\end{definition}

For instance, the concatenation $1\mapsto 5\mapsto 3$ succinctly denotes the path $1,2,3,4,5,4,3$.

\subsection{Solution to the ``outer'' case}\label{sec:outer}

\IncMargin{1em}
\begin{algorithm}[t]
\SetKwInput{KwData}{Input} \SetKwInput{KwResult}{Output}

\Indm \KwData{A scenario $\mathcal{R}=\tuple{G,(s_0,t_0),\mathcal{C}}$, where $G=(V, E, w)$ is a path,\\ and with $s_0\leq
\l(\mathcal{R})\leq \r(\mathcal{R})\leq t_0$ or $t_0\leq \l(\mathcal{R})\leq \r(\mathcal{R})\leq s_0$;}%
\KwResult{An optimal ride for $\mathcal{R}$;}

\Indp

   \uIf{$s_0>t_0$}{\label{alg:step0} $\pi \GETS$ {\sc RideOnPath\_Outer}$(\sym(\R))$\;

        \Return{$\sym(\pi)$}\; }\Else{

        $\mathcal{C}^* = \{(s_1,t_1),\dots,(s_h,t_h)\} \GETS${\sc Normalize}($\mathcal{C}$)\tcc*{$s_1\leq s_2\dots \leq s_h$}\label{alg:stepbase}

	\Return{$s_0 \mapsto s_1 \mapsto t_1 \mapsto s_2 \mapsto \ldots \mapsto s_h \mapsto t_h \mapsto t_0$}\;\label{alg:lastStep}
	}
	\caption{{\sc RideOnPath\_Outer}}
	\label{alg:canonical_simple}
\end{algorithm}

Consider Algorithm~\ref{alg:canonical_simple}, named {\sc RideOnPath\_Outer}. In the first step, it distinguishes the case $s_0>t_0$ from the
case $s_0\leq t_0$. Indeed, the former can be reduced to the latter by introducing the concept of \emph{symmetric} scenario. For every node
$v\in V$, let $\sym(v)=n-v+1$. Denote by $\sym(\pi)$ and $\sym(\mathcal{C})$ the ride and the set of requests derived from the ride $\pi$ and
the set of requests $\mathcal{C}$, respectively, by replacing each node $v$ with its ``symmetric'' counterpart $\sym(v)$. Finally, denote by
$\sym(\R)$ the scenario $\tuple{G,(\sym(s_0),\sym(t_0)),\sym(\mathcal{C})}$, referred to as the symmetric scenario of $\R$. Then, the following
is immediately seen to hold.
\begin{fact}\label{fact:symmetric}
Let $\pi$ be a ride. Then, $\pi$ is an optimal ride for $\R$ if, and only if, $\sym(\pi)$ is an optimal ride for $\sym(\R)$.
\end{fact}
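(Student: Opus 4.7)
The plan is to show that $\pi \mapsto \sym(\pi)$ is a cost-preserving bijection from $\feasible(\R)$ onto $\feasible(\sym(\R))$; since optimality of a ride is defined purely by cost-minimization within the corresponding feasible set, the biconditional will follow immediately.

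I would carry out the argument via four small checks, in order. First, $\sym$ is an involution on $V$: indeed $\sym(\sym(v)) = n - (n - v + 1) + 1 = v$, so it suffices to establish one direction of the claim and then invoke symmetry for the other. Second, $\sym$ preserves adjacency on the path, since $\{v, v+1\}$ is an edge iff $\{\sym(v), \sym(v+1)\} = \{n-v, n-v+1\}$ is an edge; consequently, if $\pi = \pi_1, \ldots, \pi_k$ is a ride in $G$ then $\sym(\pi) = \sym(\pi_1), \ldots, \sym(\pi_k)$ is as well. Third, feasibility transfers: the endpoints $\sym(\pi_1)$ and $\sym(\pi_k)$ coincide with $\sym(s_0)$ and $\sym(t_0)$, the designated endpoints of $\sym(\R)$; and any indices $i \leq i'$ with $\pi_i = s$, $\pi_{i'} = t$ witness, for the corresponding request $(\sym(s), \sym(t)) \in \sym(\mathcal{C})$, that $\sym(\pi)$ satisfies it.

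Fourth and finally, for cost preservation, the multiset of edges traversed by $\sym(\pi)$ is precisely the $\sym$-image of the multiset traversed by $\pi$, so summing edge weights under the natural reading of the symmetric scenario gives $\cost(\sym(\pi)) = \cost(\pi)$. Combining, $\sym$ is a cost-preserving bijection $\feasible(\R) \to \feasible(\sym(\R))$, so $\pi$ minimizes $\cost$ over $\feasible(\R)$ iff $\sym(\pi)$ minimizes $\cost$ over $\feasible(\sym(\R))$. No step is expected to be a real obstacle; the whole argument amounts to checking that the definition of $\sym(\R)$ unfolds consistently with respect to edges, endpoints, requests, and costs.
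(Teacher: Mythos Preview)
The paper does not prove this fact; it is asserted as ``immediately seen to hold.'' Your outline captures the intended reasoning, but step four (cost preservation) hides a real issue, and your phrase ``under the natural reading of the symmetric scenario'' is doing more work than it should.

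Taken literally, the paper sets $\sym(\R)=\tuple{G,(\sym(s_0),\sym(t_0)),\sym(\mathcal{C})}$ with the \emph{same} weighted graph $G$. Under that reading the claim $\cost(\sym(\pi))=\cost(\pi)$ is false: the step from $\pi_i$ to $\pi_{i+1}$ is sent to the step from $\sym(\pi_i)$ to $\sym(\pi_{i+1})$, which traverses a different edge of $G$ and in general carries a different weight. In fact the statement itself then fails. Take $V=\{1,\dots,5\}$ with $w(\{1,2\})=w(\{2,3\})=1$, $w(\{3,4\})=w(\{4,5\})=10$, $s_0=t_0=3$, and $\mathcal{C}=\{(1,5),(5,1)\}$. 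Here $\sym(\R)=\R$; the ride $\pi=3\mapsto 1\mapsto 5\mapsto 1\mapsto 3$ is optimal (cost $48$), yet $\sym(\pi)=3\mapsto 5\mapsto 1\mapsto 5\mapsto 3$ has cost $84$ and is not optimal for $\sym(\R)=\R$.

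The fact is true, and your four checks go through, under the intended reading in which $\sym$ relabels the whole scenario, so that the weight attached to $\{\sym(v),\sym(v+1)\}$ in $\sym(\R)$ equals $w(\{v,v+1\})$. This is how the implementation in Section~\ref{sec:implementation:outer} actually proceeds. So your argument is correct in substance; just make the reliance on this reading explicit rather than leaving it to a parenthetical.
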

According to the previous observation, step~\ref{alg:stepbase} and step~\ref{alg:lastStep} are the core of the computation by addressing the
case $s_0\leq t_0$, where hence $s_0\leq \l(\mathcal{R})\leq \r(\mathcal{R})\leq t_0$. The idea is to reduce the set of requests $\mathcal{C}$
to an ``equivalent'' set of requests  $\mathcal{C}^*$, which presents a simpler structure that we call \emph{normal form}. Formally,  let
$\mathcal{C}^*=\{(s_1,t_1),\dots,(s_h,t_h)\}$, and let us say that $\mathcal{C}^*$ is in normal form if $t_i<s_i$ for each $i\in
\{1,\dots,h\}$, and $s_i<t_{i+1}$ for each $i\in\{1,\dots,h-1\}$.
The reduction is performed at step~\ref{alg:stepbase}, where {\sc Normalize} is invoked. In Lemma~\ref{claim:normalize}, we shall show that the
corresponding normal form preserves optimal solutions, i.e., every optimal solution with respect to the normal form is also an optimal solution
with respect to the original set of requests. The advantage of having a set of requests in normal form is the inherent simplicity in deriving
an optimal solution.  At step~\ref{alg:lastStep} the algorithm returns the optimal solution with respect to the normal form,  whose optimality
will be proven in Theorem \ref{thm:nomralized}.
Now, we shall take a closer and more formal look at these steps, by also illustrating their executions on a simple scenario in
Example~\ref{example:line1} and Example~\ref{example:line2}, respectively.

\smallskip

Step~\ref{alg:stepbase} in {\sc RideOnPath\_Outer} reduces the set of requests $\mathcal{C}$ to a normal form by invoking {\sc Normalize}.

 The definition of {\sc Normalize} is shown in Algorithm~\ref{alg:normalize}: Step~\ref{alg:step1} is responsible of filtering out all requests
$(s,t)$ such that $s\leq t$.
Steps~\ref{alg:step3:1} and \ref{alg:step3:2} iteratively ``merge'' all pairs of requests $(s,t)$ and $(s',t')$ such that $t<s$, $t'<s'$ and
$t' \leq t \leq s' \leq s$. Finally, steps~\ref{alg:step2:1} and \ref{alg:step2:2} remove all requests $(s,t)$ with $t<s$ and for which there
is a request $(s',t')$ such that  $t'\leq t < s \leq s'$. In the next lemma we show that the set of requests $\mathcal{C}^*$ returned by {\sc
Normalize} is in normal form and that the optimal ride for the ride-sharing scenario $\tuple{G,(s_0,t_0),\mathcal{C}^*}$ is  an optimal ride
also for $\R$.

\begin{algorithm}[t]
\SetKwInput{KwData}{Input} \SetKwInput{KwResult}{Output}

\Indm \KwData{A set $\mathcal{C}$ of requests  with $s_0\leq \l(\mathcal{R})\leq \r(\mathcal{R})\leq t_0$;}%
\KwResult{A set of requests $\mathcal{C}^*$ in normal form and such that $\opt(\tuple{G,(s_0,t_0),\mathcal{C}^*})\subseteq \opt(\R)$;}

\Indp
	$\mathcal{C}^* \GETS  \mathcal{C}  \setminus \{(s,t) \mid s\leq t\}$\;\label{alg:step1}
		
	\While{exist $(s, t), (s', t') \in {\mathcal{C}^*}$ such that $t<s$, $t'<s'$, and $t' \leq t \leq s' \leq s$}{\label{alg:step3:1}
	$ {\mathcal{C}^*} \GETS  {\mathcal{C}^*} \setminus \{(s,t),(s',t')\}\cup\{(s,t')\}$\;\label{alg:step3:2}
	}	

	\While{exist $(s, t), (s', t') \in {\mathcal{C}^*}$ such that $t' \leq t < s \leq s'$\label{alg:step2:1}}{
	$ {\mathcal{C}^*} \GETS  {\mathcal{C}^*} \setminus \{(s, t)\}$\;\label{alg:step2:2}
	}	
	
	\Return{$\mathcal{C}^*$;}
	\caption{{\sc Normalize}}\label{alg:normalize}
\end{algorithm}

\begin{lemma}\label{claim:normalize}
Algorithm {\sc Normalize} is correct.
\end{lemma}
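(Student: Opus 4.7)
The lemma claims two things: that $\mathcal{C}^*$ is in normal form and that $\opt(\tuple{G,(s_0,t_0),\mathcal{C}^*}) \subseteq \opt(\R)$. For the first, I would start by verifying termination: the loop at lines~\ref{alg:step3:1}--\ref{alg:step3:2} replaces two requests by one, and the loop at lines~\ref{alg:step2:1}--\ref{alg:step2:2} deletes one request, so both strictly decrease $|\mathcal{C}^*|$. After line~\ref{alg:step1} every remaining request $(s,t)$ satisfies $t<s$. I would then argue that after the merge loop, any two remaining requests $(s,t),(s',t')$ have their intervals $[t,s],[t',s']$ either strictly disjoint or strictly nested: a case analysis on the relative orderings of $t,t',s,s'$ shows that every other configuration, including every shared-endpoint case, triggers the merge condition in one orientation. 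The dominance loop then strips the nested cases, leaving pairwise strictly disjoint intervals; sorting by source yields $t_i<s_i<t_{i+1}$.

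For the second claim, fix $\pi^* \in \opt(\tuple{G,(s_0,t_0),\mathcal{C}^*})$. I would first check that $\pi^* \in \feasible(\R)$ by treating each type of removed or modified request of $\mathcal{C}$: (i) a trivial request $(s,t)$ with $s \leq t$ lies inside $[s_0,t_0]$, so the last visit of $\pi^*$ to $s$ is necessarily followed by the traversal toward $t_0$, which passes through $t$; (ii) for a request $(s,t')$ that replaced a merged pair $(s,t),(s',t')$ with $t' \leq t \leq s' \leq s$, any subride of $\pi^*$ from a visit to $s$ to a subsequent visit to $t'$ traverses every node of the path interval $[t',s]$, so it visits $t$ after $s$ and $s'$ before $t'$, establishing both original requests; (iii) for a dominated $(s,t)$ with $t' \leq t < s \leq s'$, the subride of $\pi^*$ witnessing $(s',t')$ traverses all of $[t',s']$, and applying the ``last visit to $s$ within this subride'' argument as in~(i) shows $s$ is visited before $t$.

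The main obstacle is the cost comparison, since the reverse inclusion $\feasible(\R) \subseteq \feasible(\tuple{G,(s_0,t_0),\mathcal{C}^*})$ can fail (a feasible $\R$-ride need not satisfy a merged request). I would instead argue via an edge-traversal lower bound that coincides for both scenarios. For any walk $\pi'$ from $s_0$ to $t_0$ on the path, parity forces each edge $e_j=\{j,j+1\}$ with $s_0 \leq j < t_0$ to be crossed an odd positive number of times; moreover, whenever $\pi'$ satisfies a backward request $(s,t)$ with $t<s$, the subride from $s$ to $t$ contributes at least one leftward crossing of every $e_j$ with $t \leq j < s$, which combined with the net rightward displacement of $\pi'$ forces two additional rightward crossings, giving at least three in total. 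Setting $U(\mathcal{D})=\bigcup\{[t,s-1] : (s,t) \in \mathcal{D},\, t<s\}$, this yields
\[
\cost(\pi') \;\geq\; \sum_{j=s_0}^{t_0-1} w(e_j) \;+\; 2\sum_{j \in U(\mathcal{C})} w(e_j) \qquad \text{for every } \pi' \in \feasible(\R).
\]
The key observation is that each operation of {\sc Normalize} preserves the set $U$: trivial removals contribute nothing; merging $(s,t),(s',t')$ with $t' \leq t \leq s' \leq s$ yields $[t,s-1]\cup[t',s'-1]=[t',s-1]$, which coincides with the interval of the merged request $(s,t')$; and dominance removes only subsets of existing intervals. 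Hence $U(\mathcal{C})=U(\mathcal{C}^*)$, and the ride $s_0 \mapsto s_1 \mapsto t_1 \mapsto \cdots \mapsto t_h \mapsto t_0$ constructed at line~\ref{alg:lastStep} is feasible for $\mathcal{C}^*$ and attains the common lower bound. Therefore $\cost(\pi^*)$ equals that bound, so $\cost(\pi^*) \leq \cost(\pi')$ for every $\pi' \in \feasible(\R)$; combined with feasibility, $\pi^* \in \opt(\R)$.
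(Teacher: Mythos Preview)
Your proof is correct, and it takes a genuinely different route from the paper's.

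The paper proceeds step by step, showing that each of the three modification rules preserves the containment $\opt(\tuple{G,(s_0,t_0),\mathcal{\hat C}'})\subseteq\opt(\tuple{G,(s_0,t_0),\mathcal{\hat C}})$. Rules~(C1) and~(C3) are handled by showing that the feasible sets coincide; the merge rule~(C2) is harder, and the paper handles it with a ride-surgery argument: starting from an optimal ride for $\mathcal{\hat C}$ that fails to satisfy the merged request, it identifies certain extremal time steps, cuts out a middle segment, and replaces it by a zig-zag $H\mapsto h$ of the same cost that does satisfy the merged request. Optimality is then established separately in Theorem~\ref{thm:nomralized} via a node-occurrence lower bound.

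You bypass the ride surgery entirely by working with the edge-coverage set $U(\mathcal{D})=\bigcup\{[t,s-1]:(s,t)\in\mathcal{D},\ t<s\}$ and observing that every {\sc Normalize} operation leaves $U$ invariant. This collapses the per-step argument into a single global invariant, and it merges the content of Lemma~\ref{claim:normalize} and Theorem~\ref{thm:nomralized} into one lower-bound computation. The payoff is a cleaner and more uniform proof; the paper's approach, by contrast, is more modular and does not rely on computing the optimal cost explicitly.

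One small presentational point: your feasibility cases~(ii) and~(iii) are phrased for a single modification step, but $\mathcal{C}^*$ is reached after many iterations. The clean way to close this is to use your own $U$-invariant: since $\mathcal{C}^*$ is in normal form its intervals $[t_i,s_i-1]$ are pairwise disjoint with gaps, so any connected $[t,s-1]\subseteq U(\mathcal{C})=U(\mathcal{C}^*)$ lies inside a single $[t_i,s_i-1]$, and then the subride of $\pi^*$ witnessing $(s_i,t_i)$ serves $(s,t)$ as well. This is implicit in what you wrote, but making it explicit avoids any appearance of a one-step-only argument.
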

\begin{proof}
Let $\mathcal{C}^*=\{(s_1,t_1),\dots,(s_h,t_h)\}$ be the set returned as output by {\sc Normalize} on $\mathcal{C}$. We first show that $\mathcal{C}^*$ is in normal form. Indeed, assume that the requests are indexed such that
$s_i\leq s_{i+1}$ for each $i\in\{1,\dots,h-1\}$. Because of step~\ref{alg:step1}, it is the case that $t_i<s_i$, for each $i\in\{1,\dots h\}$.
Assume then, for the sake of contradiction, that $t_{i^*+1} \leq s_{i^*}$ holds for an index $i^*\in\{1,\dots,h-1\}$.
Due to steps~\ref{alg:step2:1} and \ref{alg:step2:2}, we are guaranteed that $t_{i^*}<t_{i^*+1}$. But this is impossible, since the two
requests $(s_{i^*},t_{i^*})$ and  $(s_{i^*+1},t_{i^*+1})$ would have been merged in steps~\ref{alg:step3:1} and \ref{alg:step3:2}.

In order to conclude the proof, we show that every step in {\sc Normalize} preserves the optimality of the rides. Formally, let $\mathcal{\hat
C}$ be any set of requests. Let $(s,t)$ and $(s',t')$ be two requests in $\mathcal{\hat C}$. Assume that one of the following three conditions
holds:

\begin{itemize}
  \vspace{-2mm}\item[{\bf (C1)}] $s\leq t$ (see step~\ref{alg:step1});

  \vspace{-2mm}\item[{\bf (C2)}] $t<s$, $t'<s'$ and $t' \leq t \leq s' \leq s$ (see steps~\ref{alg:step3:1} and \ref{alg:step3:2});

  \vspace{-2mm}\item[{\bf (C3)}] $t'\leq t < s \leq s'$ (see steps~\ref{alg:step2:1} and \ref{alg:step2:2}).
\end{itemize}

\vspace{-2mm}Then, we claim that: $ \opt(\tuple{G,(s_0,t_0),\mathcal{\hat C}'}) \subseteq \opt(\tuple{G,(s_0,t_0),\mathcal{\hat C}})$, where
$\mathcal{\hat C}'=\mathcal{\hat C}\setminus \{(s,t)\}$ in case (1) and (3), while $\mathcal{\hat C}'=\mathcal{\hat C}\setminus
\{(s,t),(s',t')\}\cup\{(s,t')\}$ in (2).

\smallskip
{\bf (C1) and (C3)}. We show that $\feasible(\tuple{G,(s_0,t_0),\mathcal{\hat C}})=\feasible(\tuple{G,(s_0,t_0,),\mathcal{\hat C}\setminus
\{(s,t)\}})$. Indeed, this is sufficient, as the two scenarios are defined over the same weighted graph $G$.
In fact, if $\pi$ is a feasible ride for $\tuple{G,(s_0,t_0),\mathcal{\hat C}}$, then $\pi$ is clearly feasible for
$\tuple{G,(s_0,t_0,),\mathcal{\hat C}\setminus \{(s,t)\}}$, too. On the other hand, assume that $\pi=\pi_1,\dots,\pi_k$ is a feasible ride for
$\tuple{G,(s_0,t_0),\mathcal{\hat C}\setminus \{(s,t)\}}$, with $k=\len(\pi)$.
Observe that $\pi_1=s_0$ and $\pi_k=t_0$. Therefore, any request $(s,t)$ such that $s\leq t$ is trivially satisfied by $\pi$.
In order to conclude, consider now a request $(s,t)$ with $t<s$ and assume there is a request $(s',t')\in \mathcal{\hat C}$ such that $t' \leq t < s \leq s'$.
Let $i$ be the minimum time instant such that $\pi_i=s'$. Since $t'<s'$ and $\pi$ satisfies $(s',t')$, there exists a time step $i<j$ such that
$\pi_j=t'$. Given that $t' \leq t < s \leq s'$, we immediately conclude that $\pi$ satisfies $(s,t)$, too.

\smallskip
{\bf (C2)}. Recall that in this case we have $\mathcal{\hat C}'=\mathcal{\hat C}\setminus \{(s,t),(s',t')\}\cup\{(s,t')\}$. To keep notation
simple, let $\hat \R=\tuple{G,(s_0,t_0),\mathcal{\hat C}}$ and $\hat \R'=\tuple{G,(s_0,t_0),\mathcal{\hat C}'}$.  Moreover, observe that any
ride satisfying $(s,t')$ clearly satisfies $(s,t)$ and $(s',t')$. Then, we have $\feasible(\R)\subseteq \feasible(\R')$.

Assume that $\pi$ is an optimal ride for $\tuple{G,(s_0,t_0),\mathcal{\hat C}}$. If $\pi$ is feasible for $\R'$, then we can easily conclude
that $\pi$ is in $\opt(\R')$. Indeed, assume $\pi\not\in \opt(\R')$ and let $\pi'$ be a ride in $\opt(\R')$ with $w(\pi')<w(\pi)$. Since
$\feasible(\R)\subseteq \feasible(\R')$, $\pi'$ is also feasible for $\tuple{G,(s_0,t_0),\mathcal{\hat C}}$, which is impossible by the
optimality of $\pi$.
Therefore, let us consider the case where $\pi$ is not feasible for $\R'$.

Let $i$ and $i'$ (resp., $j$ and $j'$) be the minimum (resp., maximum) time steps such that $\pi_{i}=s$ and $\pi_{i'}=s'$ (resp., $\pi_{j}=t$
and $\pi_{j'}=t'$). Since $\pi$ satisfies the requests $(s',t')$ and $(s,t)$ where $t' \leq t \leq s' \leq s$, and since $s_0\leq
\l(\mathcal{R})$ and $t_0\geq \r(\mathcal{R})$, we have that
$i'\leq j' \leq j$ and $i' \leq i \leq j$.
In particular, since $\pi$ is not feasible for $\R'$, we have
$i' \leq j' < i \leq j$.
Let $i''$ be the maximum time step such that
$i \leq i'' \leq j$ with $\pi_{i''}=s'$, which exists since $\pi_{j}=t$, $\pi_{i}=s$, and $t\leq
s'\leq s$.
Let $h=\min_{i'\leq x\leq j} \pi_x$ and $H=\max_{i'\leq x\leq j} \pi_x$, and consider the ride $\hat \pi=\pi[1,i']\mapsto h\mapsto H\mapsto
\pi[i'',\len(\pi)]$. Note that $h\leq t'$ and $H\geq s$ hold. Moreover, note that $\hat \pi\preceq \pi$. By Fact~\ref{lem:preserving}, we
therefore have that $w(\hat \pi)\leq w(\pi)$.

Consider now the ride $\pi^*=\pi[1,i']\mapsto H\mapsto h\mapsto \pi[i'',\len(\pi)]$. Since $\pi_{i'}=\pi_{i''}$, we have $w(\hat
\pi)=w(\pi^*)$.
Now, observe that $\pi^*$ satisfies all requests $(s^*,t^*)$ with $h\leq t^*\leq s^*\leq H$, and of course all requests $(s,t)$ with $s\leq t$.
Consider then a request $(s^*,t^*)$ with $t^*\leq H<s^*$, which is satisfied by $\pi$. Note that $s^*\not\in \nodes(\pi[1,i'])$, by definition
of $i'$. In fact, $s^*\not\in \nodes(\pi[1,i''])$ and we conclude that $\pi[i'',\len(\pi)]$ must satisfy $(s^*,t^*)$. Therefore, $\pi^*$
satisfies $(s^*,t^*)$, too.
Similarly, consider a request $(s^*,t^*)$ with  $t^*< h\leq s^*$, which is satisfied by $\pi$. Note that $t^*\not\in \nodes(\pi[i',\len(\pi)])$
and, hence, $\pi[1,i']$ must satisfy $(s^*,t^*)$. Therefore, $\pi^*$ satisfies $(s^*,t^*)$, too.

From the above arguments, we conclude that $\pi^*$ is feasible for $\tuple{G,(s_0,t_0),\mathcal{\hat C}}$. By recalling that $w(\pi^*)=w(\hat
\pi)\leq w(\pi)$, we get that $\pi^*$ is actually an optimal ride. Moreover, $\pi^*$ satisfies $(s,t')$, and is hence a feasible ride for
$\tuple{G,(s_0,t_0),\mathcal{\hat C}'}$. Since $\feasible(\R)\subseteq \feasible(\R')$, $\pi^*$ is optimal for $\R'$.
\end{proof}

\begin{example}\label{example:line1}\em
Consider the execution of  {\sc Normalize} on the  following instance: $V=\{1, 2, 3, 4, 5, 6, 7\}$, $E=\{\{1,2\}, \{2,3\},$ $\{3,4\}, \{4,5\},$
$\{5,6\}, \{6,7\}\}$,  $w(e) = 1\,$ for every $e\in E$,  $s_0 =1$, $t_0=7$, and $\mathcal{C}=\{(2,3), (4,4), (4,2), (3,1), (2,1), (6,5),$
$(5,7)\}$.
Step~\ref{alg:step1} removes the three requests $(2,3)$, $(4,4)$, $(5,7)$, hence obtaining $\mathcal{C}^*=\{(4,2), (3,1), (2,1), (6,5)\}$.
Steps~\ref{alg:step3:1} and \ref{alg:step3:2} replace the two requests $(4,2)$ and $(3,1)$ with $(4,1)$, obtaining $\mathcal{C}^*=\{(4,1),
(2,1), (6,5)\}$. Finally, steps~\ref{alg:step2:1} and \ref{alg:step2:2} remove the request $(2,1)$. The set returned by {\sc Normalize} at
step~\ref{alg:stepbase} in {\sc RideOnPath\_Outer}  is $\mathcal{C}^*=\{(4,1), (6,5)\}$. \hfill $\lhd$
\end{example}


Step~\ref{alg:lastStep} in {\sc RideOnPath\_Outer} returns as output a ride defined on the basis of the ordering (with respect to the starting
node) of the requests in the set $\mathcal{C}^*=\{(s_1,t_1),\dots,(s_h,t_h)\}$ returned by {\sc Normalize}. In particular, the ride is obtained
by concatenating the rides connecting $s_i$ to $t_i$, incrementally from $i=1$ to $i=h$. In the proof of the following result, we shall
evidence that such a ride is an optimal ride for $\tuple{G,(s_0,t_0),\mathcal{C}^*}$ and hence, by Lemma~\ref{claim:normalize}, an optimal ride
for $\R$.

\begin{theorem}\label{thm:nomralized}
Algorithm {\sc RideOnPath\_Outer} is correct.
\end{theorem}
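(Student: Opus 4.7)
The plan has three stages. First I would dispose of the case $s_0>t_0$ at step~\ref{alg:step0}: the algorithm recurses on $\sym(\R)$, which satisfies $\sym(s_0)\leq \sym(t_0)$, and by Fact~\ref{fact:symmetric} any optimal ride for $\sym(\R)$ corresponds under $\sym$ to an optimal ride for $\R$. For the remaining case $s_0\leq \l(\R)\leq \r(\R)\leq t_0$, Lemma~\ref{claim:normalize} tells us that the set $\mathcal{C}^*=\{(s_1,t_1),\dots,(s_h,t_h)\}$ produced by {\sc Normalize} is in normal form and that $\opt(\tuple{G,(s_0,t_0),\mathcal{C}^*})\subseteq \opt(\R)$. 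So it suffices to show that the ride $\pi = s_0 \mapsto s_1 \mapsto t_1 \mapsto \cdots \mapsto s_h \mapsto t_h \mapsto t_0$ returned in step~\ref{alg:lastStep} is optimal for $\tuple{G,(s_0,t_0),\mathcal{C}^*}$.

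For feasibility, the normal-form conditions $t_i<s_i$ and $s_i<t_{i+1}$, combined with $s_0\leq \l(\R)\leq t_1$ and $s_h\leq \r(\R)\leq t_0$, yield the chain $s_0\leq t_1<s_1<t_2<s_2<\cdots<t_h<s_h\leq t_0$. Following the definition of $\mapsto$ on paths, $\pi$ visits each $s_i$ while walking rightward from $t_{i-1}$ (or from $s_0$) and then visits $t_i$ immediately after while backtracking leftward from $s_i$, so every request $(s_i,t_i)$ is satisfied and $\pi_1=s_0$, $\pi_{\len(\pi)}=t_0$. A direct count of how many times each edge of the path is traversed by $\pi$ then gives
\[
w(\pi) \;=\; \sum_{j=s_0}^{t_0-1} w(\{j,j+1\}) \;+\; 2\sum_{i=1}^{h}\sum_{j=t_i}^{s_i-1} w(\{j,j+1\}),
\]
because the intervals $[t_i,s_i]$ are pairwise disjoint (by $s_i<t_{i+1}$), each edge in $\bigcup_i[t_i,s_i]$ is traversed exactly three times, and the remaining edges of $[s_0,t_0]$ are traversed exactly once.

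The main step, and the only delicate point, is a matching lower bound. Let $\pi'$ be any feasible ride and, for each edge $e=\{j,j+1\}$, let $\tau_{\pi'}(e)$ be its number of traversals in $\pi'$. For every $e$ with $s_0\leq j<t_0$, the endpoints $s_0$ and $t_0$ lie on opposite sides of $e$, so $\tau_{\pi'}(e)$ is odd and in particular $\tau_{\pi'}(e)\geq 1$. For every $e$ with $t_i\leq j<s_i$, the ride must (i) start at $s_0\leq j$, (ii) reach $s_i>j$, (iii) subsequently reach $t_i\leq j$ (since $(s_i,t_i)$ must be satisfied with $s_i$ coming before $t_i$), and (iv) terminate at $t_0\geq s_i>j$; these four events force three crossings of $e$ in alternating directions, hence $\tau_{\pi'}(e)\geq 3$. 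Summing over edges and using again the disjointness of the intervals $[t_i,s_i]$ gives
\[
w(\pi') \;\geq\; \sum_{j=s_0}^{t_0-1} w(\{j,j+1\}) \;+\; 2\sum_{i=1}^{h}\sum_{j=t_i}^{s_i-1} w(\{j,j+1\}) \;=\; w(\pi),
\]
so $\pi$ is optimal for $\tuple{G,(s_0,t_0),\mathcal{C}^*}$, and therefore for $\R$ by Lemma~\ref{claim:normalize}. The only obstacle worth flagging is the argument that the three forced traversals of each $e\in[t_i,s_i]$ are really distinct crossings: this relies on the separation $s_i<t_{i+1}$ guaranteed by the normal form, so that the crossings demanded by different requests cannot be conflated at a single edge.
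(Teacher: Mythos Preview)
Your proof is correct and follows essentially the same strategy as the paper: reduce via Lemma~\ref{claim:normalize} to the normal-form scenario and then verify that the returned ride matches a per-location lower bound valid for every feasible ride. The only difference is cosmetic---the paper phrases the lower bound in terms of node occurrences (nodes strictly inside some $(t_i,s_i)$ must occur at least three times, endpoints at least twice, the rest at least once), whereas you count edge traversals; your edge-based version is in fact the more transparent of the two, since cost is a sum over edge weights.
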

\begin{proof}
Consider Algorithm {\sc RideOnPath\_Outer}, by assuming $s_0\leq \l(\mathcal{R})\leq \r(\mathcal{R})\leq t_0$ (cf. Fact~\ref{fact:symmetric}).
By Lemma~\ref{claim:normalize}, we know that $\mathcal{C}^*=\{(s_1,t_1),\dots,(s_h,t_h)\}$ is in normal form. First, we show that the following
ride
$$
\pi= s_0 \mapsto s_1 \mapsto t_1 \mapsto s_2 \mapsto \ldots \mapsto s_k \mapsto t_k \mapsto t_0,
$$
\noindent which is returned by {\sc RideOnPath\_Outer}, is an optimal ride for $\tuple{G,(s_0,t_0),\mathcal{C}^*}$.

Indeed, consider a feasible ride $\hat \pi$ for $\tuple{G,(s_0,t_0),\mathcal{C}^*}$. Recall that $s_0\leq \l(\mathcal{R})\leq
\r(\mathcal{R})\leq t_0$.  For each node $v\in V$, let $\occ(v,\hat \pi)$ denote the number of occurrences of $v$ in $\hat \pi$. Then, since
$t_i<s_i$, for each $i\in\{1,\dots,k\}$, the following properties are easily seen to hold on $\hat \pi$:
(1) for each node $v\in V$ for which an index $ i$ exists such that $t_{ i}<v<s_{ i}$, $\occ(v,\hat \pi)\geq 3$;
(2) for each node $v\in V$ for which an index $ i$ exists such that $v\in \{s_{ i},t_{ i}\}$, $\occ(v,\hat \pi)\geq 2$; and,
(3) for each other node $v\in V$, $\occ(v,\hat \pi)\geq 1$ holds.
In fact, note that $\pi$ satisfies every request in $\mathcal{C}$ and that the number of occurrences of each node $v\in V$ coincides with the
corresponding lower bound stated above. Therefore, $\pi$ is optimal for $\tuple{G,(s_0,t_0),\mathcal{C}^*}$.

Given that $\pi$ is optimal for $\tuple{G,(s_0,t_0),\mathcal{C}^*}$ and is returned as output, the correctness of {\sc RideOnPath\_Outer}
eventually follows by Lemma~\ref{claim:normalize}.
\end{proof}

\begin{example}\label{example:line2}\em
Consider the instance introduced in Example \ref{example:line1}. Given the set of requests $\mathcal{C}^*=\{(4,1),  (6,5)\}$ calculated at
step~\ref{alg:stepbase} in  {\sc RideOnPath\_Outer}, the ride returned at  step~\ref{alg:lastStep} is $1 \mapsto 4 \mapsto 1 \mapsto 6 \mapsto
5 \mapsto 7.$ \hfill $\lhd$
\end{example}

\subsection{Solution to the ``inner'' case}

Let us now move to analyze the ``inner'' case, where $\{s_0,t_0\} \cap \{ v\in V \mid \l(\mathcal{R})< v < \r(\mathcal{R})\}\neq \emptyset$
holds.
Let us introduce some notation. For any feasible ride $\pi$, denote by $\leftIndex(\pi)$ (resp., $\rightIndex(\pi)$) the minimum time step $i$
such that $\pi_i=\l(\mathcal{R})$ (resp., $\pi_i=\r(\mathcal{R})$). Note that $\leftIndex(\pi)$ and $\rightIndex(\pi)$ are well defined and, in
particular, $\leftIndex(\pi)\neq \rightIndex(\pi)$ holds, since $\l(\mathcal{R}) < \r(\mathcal{R})$.
Moveover, for every pair of nodes $x,y\in V$ with $x < y$, define $\mathcal{R}(x,y)=\tuple{G,(x,y),\{ (s,t)\in\mathcal{C} \mid x\leq s,t\leq
y\}}$, that is, the scenario which inherits from $\R$ the graph $G$ and every request with both starting and terminating nodes in the interval
$\{x,...,y\}$, and where the vehicle is asked to start from $x$ and to terminate at $y$.
Notice that, by definition, the set of all nodes occurring in any optimal ride for $\mathcal{R}(x,y)$ is a subset of $\{x,...,y\}$.

\subsubsection{Canonical rides}

A crucial role in our analysis is played by the concept of canonical ride, which is illustrated below.

\begin{definition}\label{def:canonical}\em
Let $M,m\in V_{\mathcal{C}}\cup\{s_0,t_0\}$ be two nodes.  A ride $\pi^{\mathsf{c}}$ in $\R$ is said to be $(M, m)$-\emph{canonical} if
$\pi^{\mathsf{c}}=\pi'\mapsto \pi'' \mapsto \pi'''$ where
\begin{itemize}
  \item $\pi'=s_0\mapsto M \mapsto \l(\mathcal{R})\mapsto M$;

  \item  $\pi''=\left\{\begin{array}{ll}
  M\mapsto \r(\mathcal{R}) & \mbox{ if $m\leq M$}\\
  \bar \pi \mapsto \r(\mathcal{R}) & \mbox{ if $M< m$}
  \end{array}\right.$\\ where  $\bar \pi$ is an optimal ride for $\mathcal{R}(M,m)$;

  \item  $\pi'''=\r(\mathcal{R})\mapsto m \mapsto t_0$.~\hfill $\Box$
\end{itemize}
\end{definition}

\begin{figure}[t]{\hspace{-6mm}
\centering
\subfigure[$M < m$]{
 
\def\v{-.5}
\def\h{0}

\def\vA{-.5}
\def\vB{-.8}
\def\vC{-1.4}
\def\vD{-2}
\def\vY{-2.5}
\def\vX{-3}
\def\vE{-3.5}
\def\vF{-4.1}
\def\vG{-4.7}
\def\vH{-5}

\def\hA{0}
\def\hB{2}
\def\hC{4}
\def\hX{5}
\def\hRR{5.5}
\def\hY{6}
\def\hD{7}
\def\hE{9}
\def\hF{11}

\def\size{\scriptsize}

\psset{xunit=.53cm}
\psset{yunit=.53cm}

\psset{dash=3pt 3pt}

\begin{pspicture}(-0.5,-0.5)(\hF,\vH)  


\psset{linewidth=1pt}
\psset{linecolor=red}

\uput[90](\hA,\h){\size $\l$}
\psline{|-|}(\hA,\h)(\hB,\h)\uput[90](\hB,\h){\size $s_0$}
\psline{-|}(\hB,\h)(\hC,\h)\uput[90](\hC,\h){\size $M$}
\psline{-|}(\hC,\h)(\hD,\h)\uput[90](\hD,\h){\size $m$}
\psline{-|}(\hD,\h)(\hE,\h)\uput[90](\hE,\h){\size $t_0$}
\psline{-|}(\hE,\h)(\hF,\h)\uput[90](\hF,\h){\size $\r$}

\psline{|-|}(\v,\vA)(\v,\vC)\uput[180](\v,\vC){\size $\leftIndex$}
\psline{-|}(\v,\vC)(\v,\vF)\uput[180](\v,\vF){\size $\rightIndex$}
\psline{->}(\v,\vF)(\v,\vH)

\psset{linewidth=.7pt}
\psset{linecolor=black}

\psline[curvature=.7 1 0](\hB,\vA)(\hC,\vB)(\hA,\vC)(\hC,\vD)

\psline[curvature=.7 1 0](\hD,\vE)(\hF,\vF)(\hD,\vG)(\hE,\vH)

\psdots(\hB,\vA)(\hC,\vB)(\hA,\vC)(\hC,\vD)(\hD,\vE)(\hF,\vF)(\hD,\vG)(\hE,\vH)

\pscurve[dash=4pt 3pt,linestyle=dashed,curvature=.7 1 0](\hC,\vD)(\hY,\vY)(\hX,\vX)(\hD,\vE)

\uput[-90](\hRR,\vB){\size $\R(M, m)$}

\psset{linewidth=.5pt}
\psset{linecolor=green}
\psset{linestyle=dashed}

\psline{-}(\hA,\h)(\hA,\vC)
\psline{-}(\hB,\h)(\hB,\vA)

\psline{-}(\hC,\h)(\hC,\vB)
\psline{-}(\hD,\h)(\hD,\vG)
\psline{-}(\hE,\h)(\hE,\vH)
\psline{-}(\hF,\h)(\hF,\vF)

\psline{-}(\v,\vC)(\hA,\vC)
\psline{-}(\v,\vF)(\hF,\vF)

\end{pspicture}
\label{fig:A}
}
\subfigure[$m \leq M$]{
 
\def\v{-.5}
\def\h{0}

\def\vA{-.5}
\def\vB{-1.1}
\def\vC{-2}
\def\vD{-3.5}
\def\vE{-4.4}
\def\vF{-5}

\def\hA{0}
\def\hB{2.2}
\def\hC{4.4}
\def\hD{6.6}
\def\hE{8.8}
\def\hF{11}

\def\size{\scriptsize}

\psset{xunit=.53cm}
\psset{yunit=.53cm}

\psset{dash=3pt 3pt}

\begin{pspicture}(-0.5,-0.5)(\hF,\vF)  


\psset{linewidth=1pt}
\psset{linecolor=red}

\uput[90](\hA,\h){\size $\l$}
\psline{|-|}(\hA,\h)(\hB,\h)\uput[90](\hB,\h){\size $s_0$}
\psline{-|}(\hB,\h)(\hC,\h)\uput[90](\hC,\h){\size $m$}
\psline{-|}(\hC,\h)(\hD,\h)\uput[90](\hD,\h){\size $M$}
\psline{-|}(\hD,\h)(\hE,\h)\uput[90](\hE,\h){\size $t_0$}
\psline{-|}(\hE,\h)(\hF,\h)\uput[90](\hF,\h){\size $\r$}

\psline{|-|}(\v,\vA)(\v,\vC)\uput[180](\v,\vC){\size $\leftIndex$}
\psline{-|}(\v,\vC)(\v,\vD)\uput[180](\v,\vD){\size $\rightIndex$}
\psline{->}(\v,\vD)(\v,\vF)

\psset{linewidth=.7pt}
\psset{linecolor=black}

\psline[curvature=.7 1 0](\hB,\vA)(\hD,\vB)(\hA,\vC)(\hF,\vD)(\hC,\vE)(\hE,\vF)

\psdots(\hB,\vA)(\hD,\vB)(\hA,\vC)(\hF,\vD)(\hC,\vE)(\hE,\vF)

\psset{linewidth=.5pt}
\psset{linecolor=green}
\psset{linestyle=dashed}

\psline{-}(\hA,\h)(\hA,\vC)
\psline{-}(\hB,\h)(\hB,\vA)
\psline{-}(\hC,\h)(\hC,\vE)
\psline{-}(\hD,\h)(\hD,\vB)
\psline{-}(\hE,\h)(\hE,\vF)
\psline{-}(\hF,\h)(\hF,\vD)

\psline{-}(\v,\vC)(\hA,\vC)
\psline{-}(\v,\vD)(\hF,\vD)

\end{pspicture}
\label{fig:B}
}}
\caption{Example of $(M, m)$-canonical rides.}
\label{fig:canonical}
\end{figure}

Two examples of canonical rides are in Figure~\ref{fig:canonical}. Note that if $m\leq M$ holds, we can refer without ambiguities to \emph{the}
$(M, m)$-{canonical} ride, as there is precisely one ride enjoying the properties in Definition~\ref{def:canonical}.

\begin{fact}\label{fact:CR}
If $m\leq M$, then $(M, m)$-{canonical} ride is $s_0\mapsto M\mapsto \l(\mathcal{R})\mapsto \r(\mathcal{R})\mapsto m \mapsto t_0$.
\end{fact}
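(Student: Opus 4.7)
The plan is to simply unfold Definition~\ref{def:canonical} under the hypothesis $m \leq M$ and then collapse the middle portion of the resulting concatenation using the fact that $G$ is a path.

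First, I would substitute the $m \leq M$ branch of the definition of $\pi''$, namely $\pi'' = M \mapsto \r(\mathcal{R})$, together with the expressions for $\pi'$ and $\pi'''$, and invoke the associativity of $\mapsto$ (noted in the footnote of the definition of concatenation). This gives
$$\pi^{\mathsf{c}} \;=\; s_0 \mapsto M \mapsto \l(\mathcal{R}) \mapsto M \mapsto \r(\mathcal{R}) \mapsto m \mapsto t_0.$$
The only remaining task is to show that the subride $\l(\mathcal{R}) \mapsto M \mapsto \r(\mathcal{R})$ appearing in the middle coincides with $\l(\mathcal{R}) \mapsto \r(\mathcal{R})$.

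Next, I would observe a general property of paths: for any three nodes $a,b,c \in V$ with $a \leq b \leq c$, the unique shortest walk in $G$ from $a$ to $c$ traverses the sequence $a,a+1,\dots,c$ and in particular passes through $b$, so by the last clause of the definition of $\mapsto$ we have $a \mapsto b \mapsto c = a \mapsto c$. I would then apply this with $a = \l(\mathcal{R})$, $b = M$, $c = \r(\mathcal{R})$. The applicability hinges on $\l(\mathcal{R}) \leq M \leq \r(\mathcal{R})$, which holds because $M \in V_{\mathcal{C}} \cup \{s_0,t_0\}$: if $M \in V_{\mathcal{C}}$ the bounds are immediate from the very definitions of $\l(\mathcal{R})$ and $\r(\mathcal{R})$, while if $M \in \{s_0,t_0\}$ the bounds follow from the standing assumption of the inner case and the geometry illustrated in Figure~\ref{fig:canonical}(b), where $M$ sits between $\l(\mathcal{R})$ and $\r(\mathcal{R})$.

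Combining the two paragraphs, substituting $\l(\mathcal{R}) \mapsto \r(\mathcal{R})$ for $\l(\mathcal{R}) \mapsto M \mapsto \r(\mathcal{R})$ in the displayed form of $\pi^{\mathsf{c}}$ yields exactly
$$s_0 \mapsto M \mapsto \l(\mathcal{R}) \mapsto \r(\mathcal{R}) \mapsto m \mapsto t_0,$$
as claimed, and moreover establishes that this form is unambiguous (i.e., that there is a single $(M,m)$-canonical ride when $m \leq M$). The only real subtlety is the small verification that $M$ lies in $[\l(\mathcal{R}), \r(\mathcal{R})]$; once this is recorded, the rest is a mechanical rewriting that exploits the linearity of the path.
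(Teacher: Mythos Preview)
The paper states this Fact without proof, treating it as immediate from Definition~\ref{def:canonical}; your argument correctly fills in the details by substituting the $m\le M$ branch of $\pi''$ and then collapsing $\l(\mathcal{R})\mapsto M\mapsto \r(\mathcal{R})$ to $\l(\mathcal{R})\mapsto \r(\mathcal{R})$ via the linearity of the path.

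One small remark on the only nontrivial step: your justification that $\l(\mathcal{R})\le M\le \r(\mathcal{R})$ when $M\in\{s_0,t_0\}$ leans on the inner-case hypothesis and Figure~\ref{fig:canonical}(b), but the inner case only guarantees that \emph{at least one} of $s_0,t_0$ lies in the open interval $(\l(\mathcal{R}),\r(\mathcal{R}))$, not both. The paper is equally informal on this point, and in every invocation of the Fact (through Theorem~\ref{thm:path}, Theorem~\ref{thm:due}, and Algorithm~\ref{alg:main}) the node $M$ arises from constructions such as $\rm(\pi^h)$ or $\hat M_m$ that by definition satisfy $\l(\mathcal{R})\le M<\r(\mathcal{R})$, so the collapse is valid wherever the Fact is actually used.
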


Instead, whenever $m>M$, there can be more than one canonical ride. In this case, to compute a $(M, m)$-canonical ride, we need to compute an
optimal ride for $\mathcal{R}(M,m)$, which is a scenario fitting the ``outer'' case and which can be hence addressed via the {\sc
RideOnPath\_Outer} algorithm.

\smallskip

In fact, the notion of canonical ride characterizes the optimal rides for $\R$. In particular, observe that in the following result, we focus
on optimal rides $\pi^*$ such that $\leftIndex(\pi^*)< \rightIndex(\pi^*)$. Indeed, the case where $\leftIndex(\pi^*)\geq \rightIndex(\pi^*)$
will be eventually addressed by working on the symmetric scenario $\sym(\R)$, according to the approach discussed in Section~\ref{sec:outer}
(see Fact~\ref{fact:symmetric}).

\begin{theorem}\label{thm:path}
Assume that $\pi^*$ is an optimal ride with $\leftIndex(\pi^*)< \rightIndex(\pi^*)$. Then, there are two nodes $M,m\in
V_{\mathcal{C}}\cup\{s_0,t_0\}$, with $s_0\leq M$ and $m\leq t_0$, such that any $(M, m)$-canonical ride is optimal, too.
\end{theorem}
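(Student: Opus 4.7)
The plan is to extract $M$ and $m$ directly from the given optimal ride $\pi^*$ and then to verify, by an edge-accounting argument, that every $(M,m)$-canonical ride has cost equal to $\cost(\pi^*)$.

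Let $\lambda=\leftIndex(\pi^*)$ and $\rho=\rightIndex(\pi^*)$, so $\lambda<\rho$ by hypothesis. I define
\[
M := \max\{\pi^*_i : 1 \leq i \leq \lambda\}, \qquad m := \min\{\pi^*_j : \rho \leq j \leq \len(\pi^*)\}.
\]
Since $\pi^*_1=s_0$ and $\pi^*_{\len(\pi^*)}=t_0$, immediately $s_0\leq M$ and $m\leq t_0$. If $M\notin V_{\mathcal{C}}\cup\{s_0,t_0\}$, the first occurrence of $M$ in the prefix is a local peak of the shape $M-1,M,M-1$, which by Fact~\ref{lem:preserving} can be pruned to produce a feasible ride of strictly smaller cost, contradicting optimality; hence $M\in V_{\mathcal{C}}\cup\{s_0,t_0\}$, and symmetrically $m\in V_{\mathcal{C}}\cup\{s_0,t_0\}$. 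The same pruning argument shows that, by optimality, the prefix $\pi^*[1,\lambda]$ stays in $[\l(\R),M]$, the suffix $\pi^*[\rho,\len(\pi^*)]$ in $[m,\r(\R)]$, and the middle $\pi^*[\lambda,\rho]$ in $[\l(\R),\r(\R)]$ with $\r(\R)$ reached only at step $\rho$.

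Next, I pick any $(M,m)$-canonical ride $\pi^{\mathsf{c}}$ given by Definition~\ref{def:canonical}; when $M<m$ the sub-ride $\bar\pi$ is obtained by running {\sc RideOnPath\_Outer} on $\R(M,m)$, which is an outer-case instance. I then decompose $\cost(\pi^*)$ as the sum of the costs of prefix, middle, and suffix and produce three matching lower bounds: the prefix costs at least $w(s_0\mapsto M)+w(M\mapsto \l(\R))$ (a walk from $s_0$ to $\l(\R)$ reaching $M$), the suffix at least $w(\r(\R)\mapsto m)+w(m\mapsto t_0)$ (symmetrically), and the middle at least $w(\l(\R)\mapsto M)+\cost(\bar\pi)+w(m\mapsto \r(\R))$ when $M<m$ and at least $w(\l(\R)\mapsto \r(\R))$ when $m\leq M$. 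In both cases the three contributions sum to exactly $\cost(\pi^{\mathsf{c}})$, where in the $m\leq M$ case I use Fact~\ref{fact:CR}. Optimality of $\pi^*$ then forces these inequalities to be equalities, so $\pi^{\mathsf{c}}$ is optimal; since any two $(M,m)$-canonical rides differ only in the choice of $\bar\pi$ among the optimal rides for $\R(M,m)$, and all those rides share the same cost, every $(M,m)$-canonical ride is optimal, completing the argument.

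The main obstacle is the middle lower bound in the sub-case $M<m$: I must extract from $\pi^*[\lambda,\rho]$ a feasible ride for $\R(M,m)$ that starts at $M$ and ends at $m$. The key observation is that every request in $\R(M,m)$ has both endpoints in $[M,m]$ and therefore can be satisfied neither by the prefix (confined to $[\l(\R),M]$) nor by the suffix (confined to $[m,\r(\R)]$), so it is satisfied by the middle. Taking the sub-sequence of $\pi^*[\lambda,\rho]$ between its first visit to $M$ and its last visit to $m$, and then trimming via Fact~\ref{lem:preserving} any excursion that escapes $[M,m]$, yields an $\R(M,m)$-feasible ride whose cost is dominated by the cost of the middle restricted to $[M,m]$, and is in turn at least $\cost(\bar\pi)$ by the optimality of $\bar\pi$.
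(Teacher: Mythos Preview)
Your edge-accounting correctly establishes $\cost(\pi^{\mathsf c})\le\cost(\pi^*)$, but the final step---``optimality of $\pi^*$ then forces these inequalities to be equalities, so $\pi^{\mathsf c}$ is optimal''---presupposes that $\pi^{\mathsf c}$ is \emph{feasible}, and you never check this. Feasibility of the $(M,m)$-canonical ride requires that every backward request $(s,t)$ with $t<s$ satisfy $s\le M$, or $t\ge m$, or (when $M<m$) $M\le t<s\le m$; equivalently, no request may strictly straddle $M$ or $m$. Your choice $M=\max_{i\le\lambda}\pi^*_i$ and $m=\min_{j\ge\rho}\pi^*_j$ does not rule this out: a request $(s,t)$ with $t<M<s$ (or $t<m<s$) can be served entirely inside the middle segment $\pi^*[\lambda,\rho]$, which is free to overshoot $M$ and dip back below it (this is exactly the situation depicted in Figure~\ref{fig1:A}, where $\brm>\rm$). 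For such a request your $\pi^{\mathsf c}$ visits $t$ only in $\pi'$ and $s$ only after $\pi'$, so $(s,t)$ is not satisfied, and the conclusion fails.

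This is precisely why the paper does not read $M,m$ off $\pi^*$ directly. The iteration of Lemma~\ref{lem:phase1} pushes the prefix maximum up to $\brm$ repeatedly until the middle no longer overshoots it, and only for the resulting ride $\pi^{h+1}$ do Lemma~\ref{lem:properties}(3) and Lemma~\ref{lem:property2}(1) guarantee that no backward request straddles the chosen $M=\rm(\pi^h)$ or the adjusted $m=\crp(\pi^{h+1})$ (which in general differs from your $\lm$). Those two lemmas are exactly the feasibility certificates your argument is missing. Your cost lower bounds and the extraction of an $\R(M,m)$-feasible sub-ride from the middle are fine, but they address cost only; to complete the proof along your lines you would still need to show that, for an optimal $\pi^*$, no request crosses your $M$ or your $m$---and that is the non-trivial part the paper's iteration is doing.
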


The proof of the result is rather involved, and the rest of this section is devoted to illustrate it in detail.


%
%

\smallskip

Assume that $\pi^*$ is an optimal ride such that $\leftIndex(\pi^*)< \rightIndex(\pi^*)$.
We first define a number of critical time steps and nodes of the path which are useful to analyze the properties of any optimal ride $\pi$. To
help the intuition, the reader is referred to Figure~\ref{fig1:A}.

\begin{figure}[t]{\hspace{-20mm}
\centering
\subfigure[Some critical steps on a ride $\pi^*$]{
\def\v{-.5}
\def\h{0}

\def\vA{-.5}
\def\vB{-1}
\def\vC{-1.8}
\def\vD{-3.2}
\def\vXa{-3.7}
\def\vXb{-4.5}
\def\vE{-5.5}
\def\vXc{-6}
\def\vF{-6.5}
\def\vG{-7.8}
\def\vXd{-8.5}
\def\vH{-9}

\def\hA{0}
\def\hXb{3}
\def\hB{3.5}
\def\hXc{4.5}
\def\hCm{6.6}
\def\hC{7}
\def\hCp{7.4}
\def\hXa{8.6}
\def\hXd{8.9}
\def\hDm{10.1}
\def\hD{10.5}
\def\hE{14}

\def\size{\scriptsize}

\psset{xunit=.45cm}
\psset{yunit=.4cm}



\psset{dash=3pt 3pt}

\begin{pspicture}(-0.5,-0.5)(14,-9.5) 


\psset{linewidth=1pt}
\psset{linecolor=red}

\uput[90](\hA,\h){\size $\l$}
\psline{|-|}(\hA,\h)(\hB,\h)\uput[90](\hB,\h){\size $s_0$}
\psline{-|}(\hB,\h)(\hC,\h)\uput[90](\hC,\h){\size $\rm$}
\psline{-|}(\hC,\h)(\hD,\h)\uput[90](\hD,\h){\size $\brm$}
\psline{-|}(\hD,\h)(\hE,\h)\uput[90](\hE,\h){\size $\r$}

\psline{|-|}(\v,\vA)(\v,\vB)
\psline{-|}(\v,\vB)(\v,\vC)\uput[180](\v,\vC){\size $\leftIndex$}
\psline{-|}(\v,\vC)(\v,\vD)\uput[180](\v,\vD){\size $\rmIndex$}
\psline{-|}(\v,\vD)(\v,\vE)
\psline{-|}(\v,\vE)(\v,\vF)\uput[180](\v,\vF){\size $\rmLastIndex$}
\psline{-|}(\v,\vF)(\v,\vG)\uput[180](\v,\vG){\size $\brmIndex$}
\psline{->}(\v,\vG)(\v,\vH)\uput[180](\v,\vH){\size $\rightIndex$}

\psset{linewidth=.7pt}
\psset{linecolor=black}

\pscurve[curvature=.7 1 0](\hB,\vA)(\hC,\vB)(\hA,\vC)(\hC,\vD)(\hXa,\vXa)(\hXb,\vXb)(\hD,\vE)(\hXc,\vXc)(\hC,\vF)(\hD,\vG)(\hXd,\vXd)(\hE,\vH)

\psdots(\hB,\vA)(\hC,\vB)(\hA,\vC)(\hC,\vD)(\hD,\vE)(\hC,\vF)(\hD,\vG)(\hE,\vH)

\psset{linewidth=.5pt}
\psset{linecolor=green}
\psset{linestyle=dashed}

\psline{-}(\hA,\h)(\hA,\vH)
\psline{-}(\hB,\h)(\hB,\vA)
\psline{-}(\hC,\h)(\hC,\vH)
\psline{-}(\hD,\h)(\hD,\vG)
\psline{-}(\hE,\h)(\hE,\vH)

\psline{-}(\v,\vA)(\hE,\vA)
\psline{-}(\v,\vB)(\hC,\vB)
\psline{-}(\v,\vC)(\hA,\vC)
\psline{-}(\v,\vD)(\hD,\vD)
\psline{-}(\v,\vE)(\hD,\vE)
\psline{-}(\v,\vF)(\hD,\vF)

\psline{-}(\v,\vG)(\hE,\vG)
\psline{-}(\v,\vH)(\hE,\vH)

\psset{linestyle=none, fillstyle=crosshatch, hatchcolor=lightgray, hatchwidth=.2pt, hatchsep=2pt}

\psframe(\hE,\vA)(\hD,\vG)
\psframe(\hD,\vF)(\hDm,\vG)
\psframe(\hD,\vA)(\hC,\vD)
\psframe(\hC,\vB)(\hCm,\vD)
\psframe(\hA,\vH)(\hCp,\vF)

\end{pspicture}
\label{fig1:A}
}
\subfigure[Some critical steps on a ride $\pi^{h+1}$]{
 
\def\v{-.5}
\def\h{0}

\def\vA{-.5}
\def\vB{-1}
\def\vC{-2}
\def\vD{-2.8}
\def\vXa{-4.0}
\def\vE{-5.2}
\def\vF{-6.5}
\def\vG{-7.6}
\def\vXb{-8.5}
\def\vH{-9}

\def\hA{0}
\def\hB{3.5}
\def\hC{6}
\def\hD{7.5}
\def\hE{9.5}
\def\hXa{10.5}
\def\hF{12}
\def\hXb{13}
\def\hGm{13.6}
\def\hG{14}

\def\size{\scriptsize}

\psset{xunit=.45cm}
\psset{yunit=.4cm}



\psset{dash=3pt 3pt}

\begin{pspicture}(-0.5,-0.5)(14,-9.5)  


\psset{linewidth=1pt}
\psset{linecolor=red}

\uput[90](\hA,\h){\size $\l$}
\psline{|-|}(\hA,\h)(\hB,\h)\uput[90](\hB,\h){\size $s_0$}
\psline{-|}(\hB,\h)(\hC,\h)\uput[90](\hC,\h){\size $\rm\!=\!\brm$}
\psline{-|}(\hC,\h)(\hD,\h)\uput[90](\hD,\h){\size $\crp$}
\psline{-|}(\hD,\h)(\hE,\h)\uput[90](\hE,\h){\size $\lm$}
\psline{-|}(\hE,\h)(\hF,\h)\uput[90](\hF,\h){\size $t_0$}
\psline{-|}(\hF,\h)(\hG,\h)\uput[90](\hG,\h){\size $\r$}

\psline{|-|}(\v,\vA)(\v,\vB)
\psline{-|}(\v,\vB)(\v,\vC)\uput[180](\v,\vC){\size $\leftIndex$}
\psline{-|}(\v,\vC)(\v,\vD)\uput[180](\v,\vD){\size $\rmIndex\!=\!\brmIndex$}
\psline{-|}(\v,\vD)(\v,\vE)\uput[180](\v,\vE){\size $\crpIndex$}
\psline{-|}(\v,\vE)(\v,\vF)\uput[180](\v,\vF){\size $\rightIndex$}
\psline{-|}(\v,\vF)(\v,\vG)\uput[180](\v,\vG){\size $\lmIndex$}
\psline{->}(\v,\vG)(\v,\vH)

\psset{linewidth=.7pt}
\psset{linecolor=black}

\pscurve[curvature=.7 1 0](\hB,\vA)(\hC,\vB)(\hA,\vC)(\hC,\vD)(\hXa,\vXa)(\hD,\vE)(\hG,\vF)(\hE,\vG)(\hXb,\vXb)(\hF,\vH)

\psdots(\hB,\vA)(\hC,\vB)(\hA,\vC)(\hC,\vD)(\hD,\vE)(\hG,\vF)(\hE,\vG)(\hF,\vH)

\psset{linewidth=.5pt}
\psset{linecolor=green}
\psset{linestyle=dashed}

\psline{-}(\hA,\h)(\hA,\vH)
\psline{-}(\hB,\h)(\hB,\vA)
\psline{-}(\hC,\h)(\hC,\vH)
\psline{-}(\hD,\h)(\hD,\vE)
\psline{-}(\hE,\h)(\hE,\vH)
\psline{-}(\hF,\h)(\hF,\vH)
\psline{-}(\hG,\h)(\hG,\vF)

\psline{-}(\v,\vA)(\hG,\vA)
\psline{-}(\v,\vB)(\hC,\vB)
\psline{-}(\v,\vC)(\hA,\vC)
\psline{-}(\v,\vD)(\hC,\vD)
\psline{-}(\v,\vE)(\hD,\vE)
\psline{-}(\v,\vF)(\hG,\vF)
\psline{-}(\v,\vG)(\hE,\vG)
\psline{-}(\v,\vH)(\hF,\vH)

\psset{linestyle=none, fillstyle=crosshatch, hatchcolor=lightgray, hatchwidth=.2pt, hatchsep=2pt}

\psframe(\hE,\vF)(\hA,\vH)
\psframe(\hC,\vD)(\hA,\vF)
\psframe(\hG,\vA)(\hGm,\vF)

\end{pspicture}
\label{fig1:B}
}}\vspace{-2mm}
\caption{Some critical steps of any feasible ride on a path.  The gray areas denote the space that no feasible ride can cross for a given time interval.}\vspace{2mm}
\label{fig2:canonical}
\end{figure}


Let $\rm(\pi)=\max_{1\leq i\leq \leftIndex(\pi)} \pi_i$. Note that $\rm(\pi)< \r(\mathcal{R})$ necessarily holds.
Let $\rmIndex(\pi)$ be the minimum time step $i\geq \leftIndex(\mathcal{R})$ such that $\pi_i=\rm(\pi)$.
Note that that $\rmIndex(\pi)$ is well defined, because $\leftIndex(\pi)< \rightIndex(\pi)$ and, hence, the ride $\pi$ has to cross the node
$\rm(\pi)$ at least once between the time step $\leftIndex(\pi)$ and the time step $\rightIndex(\pi)$. In fact, it actually holds that
$\rmIndex(\pi)<\rightIndex(\pi)$, since  $\rm(\pi)< \r(\mathcal{R})$.
Then, define $\rmLastIndex(\pi)$ as the maximum time step $i\leq \rightIndex(\pi)$ such that $\pi_i=\rm(\pi)$.
Note that $\rmLastIndex(\pi)$ coincides with $\rmIndex(\pi)$ if, and only if, there is no time step $i$ such that $\rmIndex(\pi)<i\leq
\rightIndex(\pi)$ with $\pi_i=\rm(\pi)$. Again, observe that $\rmLastIndex(\pi)<\rightIndex(\pi)$ holds.

Now, define $\brm(\pi)=\max_{\rmIndex(\pi)\leq i\leq \rmLastIndex(\pi)} \pi_i$. Since $\rmLastIndex(\pi)<\rightIndex(\pi)$  and since
$\rightIndex(\pi)$ is the minimum time step where the ride reaches the extreme node $\r(\mathcal{R})$, we have that
$\brm(\pi)<\r(\mathcal{R})$. Moreover, $\brm(\pi)\geq \rm(\pi)$ clearly holds. Therefore, there is some time step between $\rmLastIndex(\pi)$
and $\rightIndex(\pi)$ where $\pi$ crosses $\brm(\pi)$. So, we can define $\brmIndex(\pi)$ as the minimum index $i\geq \rmLastIndex(\pi)$ such
that $\pi_i=\brm(\pi)$, by noticing that $\brmIndex(\pi)<\rightIndex(\pi)$ holds.

Eventually, define also $\lm(\pi)=\min_{\rightIndex(\pi)\leq i\leq \len(\pi)} \pi_i$.

\begin{lemma}\label{fact:VC}
Assume there is an optimal ride $\pi'$ such that $\leftIndex(\pi')< \rightIndex(\pi')$. Then, there is an optimal ride $\pi$ such that
$\leftIndex(\pi)< \rightIndex(\pi)$ and where $\lm(\pi)$, $\brm(\pi)$ and $\rm(\pi)$ belong to the set $V_{\mathcal{C}}\cup\{s_0,t_0\}$.
\end{lemma}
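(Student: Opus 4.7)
The plan is to prove the stronger statement that $\pi = \pi'$ itself works, i.e., that $\rm(\pi'), \brm(\pi'), \lm(\pi')$ already all lie in $V_{\mathcal{C}} \cup \{s_0,t_0\}$. For each of these three quantities I argue by contradiction: under the assumption that the quantity is not in $V_{\mathcal{C}} \cup \{s_0,t_0\}$, I construct a feasible ride whose cost is strictly smaller than $w(\pi')$, contradicting optimality.

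Consider first the archetypal case of $\rm(\pi')$. If $v = \rm(\pi') \notin V_{\mathcal{C}} \cup \{s_0,t_0\}$, then using that $\rm(\pi') \ge s_0$, $\l(\mathcal{R}) \in V_{\mathcal{C}}$ with $\l(\mathcal{R}) \le v$, and $v \notin \{s_0, \l(\mathcal{R})\}$, one gets $\max(s_0, \l(\mathcal{R})) < v$. Take $v'$ to be the largest element of $V_{\mathcal{C}} \cup \{s_0,t_0\}$ strictly less than $v$; then $v' \ge \max(s_0, \l(\mathcal{R}))$ and $(v', v] \cap (V_{\mathcal{C}} \cup \{s_0,t_0\}) = \emptyset$. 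Define $\new{\pi}$ by replacing the first portion $\pi'[1, \leftIndex(\pi')]$ (which goes from $s_0$ to $\l(\mathcal{R})$ reaching maximum $v$) by the simpler ride $s_0 \mapsto v' \mapsto \l(\mathcal{R})$, keeping the suffix $\pi'[\leftIndex(\pi'), \len(\pi')]$ intact. Because edge weights are strictly positive and $v' < v$, the resulting ride is strictly cheaper than $\pi'$. Feasibility of $\new{\pi}$ is verified as follows: any request endpoint reached only in the original first portion lies in $V_{\mathcal{C}} \cap [\l(\mathcal{R}), v]$, which by choice of $v'$ equals $V_{\mathcal{C}} \cap [\l(\mathcal{R}), v']$, and so is still reached by the new first portion; for a request $(s,t) \in \mathcal{C}$ satisfied entirely within the original first portion, both endpoints lie in $[\l(\mathcal{R}), v']$, and a short case analysis on the positions of $s, t$ relative to $s_0$ shows that the canonical up-then-down visit order of the new first portion satisfies $(s,t)$ directly except in the single subcase $s < s_0 \le t$ (which forces $s < t$), in which case $(s,t)$ is already satisfied by the portion $\pi'[\leftIndex(\pi'), \rightIndex(\pi')]$ of the unchanged suffix, since going from $\l(\mathcal{R})$ to $\r(\mathcal{R})$ on a path forces the first occurrences of all intermediate nodes to appear in increasing order.

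The other two cases are completely analogous. For $\brm(\pi')$: with $\rm(\pi') \in V_{\mathcal{C}} \cup \{s_0,t_0\}$ now in hand, if $w = \brm(\pi') \notin V_{\mathcal{C}} \cup \{s_0,t_0\}$ then $\rm(\pi') < w$, and one replaces the loop $\pi'[\rmIndex(\pi'), \rmLastIndex(\pi')]$ by the canonical $\rm(\pi') \mapsto w' \mapsto \rm(\pi')$, with $w'$ the largest element of $V_{\mathcal{C}} \cup \{s_0,t_0\}$ below $w$. For $\lm(\pi')$: symmetrically, if $u = \lm(\pi') \notin V_{\mathcal{C}} \cup \{s_0,t_0\}$ then $u < \min(t_0, \r(\mathcal{R}))$, and one replaces the tail $\pi'[\rightIndex(\pi'), \len(\pi')]$ by $\r(\mathcal{R}) \mapsto u' \mapsto t_0$, with $u'$ the smallest element of $V_{\mathcal{C}} \cup \{s_0,t_0\}$ above $u$. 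The main obstacle throughout is the feasibility check: the replacement visits the relevant nodes in a canonical monotone order (up-then-down, or the symmetric pattern), while the original portion could have wiggled arbitrarily, so one must verify that no satisfied request loses its satisfying pair of occurrences. The cutoff choices $v', w', u'$ from $V_{\mathcal{C}} \cup \{s_0,t_0\}$ guarantee that no request endpoint falls outside the replacement's range; the monotonicity forced on any subride between $\l(\mathcal{R})$ and $\r(\mathcal{R})$ handles the one residual ordering subcase.
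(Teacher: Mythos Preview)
Your strategy (prove the stronger claim that $\pi=\pi'$ already works, by building a strictly cheaper feasible ride whenever one of the three extremal nodes misses $V_{\mathcal C}\cup\{s_0,t_0\}$) is different from the paper's, which instead deletes a single occurrence of the offending local-extreme node and appeals to Fact~\ref{lem:preserving}. Your route is more aggressive and in fact yields a sharper conclusion; the paper's route is lighter on the feasibility bookkeeping. Your arguments for $\rm(\pi')$ and $\lm(\pi')$ are sound (with the minor caveat that the ``single subcase'' where the canonical prefix fails is really ``$s<s_0$ and $s<t$'', not only ``$s<s_0\le t$''; but your fallback to $\pi'[\leftIndex,\rightIndex]$ covers both, since both force $s<t$).

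The $\brm$ case, however, is \emph{not} ``completely analogous'', and this is where your proof has a genuine gap. For $\rm$, the replaced portion $\pi'[1,\leftIndex(\pi')]$ has maximum $\rm(\pi')$ \emph{by definition}, so every request endpoint it visits lies in $[\l(\mathcal R),\rm(\pi')]$, hence in $[\l(\mathcal R),v']$. For $\brm$, the replaced portion $\pi'[\rmIndex(\pi'),\rmLastIndex(\pi')]$ has maximum $\brm(\pi')$ by definition, but the definitions impose \emph{no lower bound}: nothing prevents this loop from dipping below $\rm(\pi')$. If it does, a request $(s,t)$ with $\rm(\pi')<s\le w'$ and $t<\rm(\pi')$ could be satisfied in $\pi'$ by occurrences inside the loop; after you replace the loop by $\rm(\pi')\mapsto w'\mapsto\rm(\pi')$, the only surviving occurrences of $t$ lie in $\pi'[1,\rmIndex(\pi')]$, which precedes every occurrence of $s$ in the modified ride, so $(s,t)$ is no longer satisfied. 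Your sentence ``the cutoff choices $v',w',u'$ guarantee that no request endpoint falls outside the replacement's range'' is exactly the unjustified step: for $\brm$ it controls only the upper end, not the lower end.

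There are two easy fixes. Either (i) use the paper's local argument just for $\brm$: pick a time step $j\in(\rmIndex(\pi'),\rmLastIndex(\pi'))$ with $\pi'_j=\brm(\pi')$; since this is a local maximum on a path, $\pi'_{j-1}=\pi'_{j+1}=\brm(\pi')-1$, and deleting $\pi'_j$ removes two traversals of a positive-weight edge while only dropping a node outside $V_{\mathcal C}$, yielding a strictly cheaper feasible ride. Or (ii) first prove, as a separate lemma, that for an optimal $\pi'$ the loop $\pi'[\rmIndex(\pi'),\rmLastIndex(\pi')]$ never dips below $\rm(\pi')$ (any maximal sub-loop below $\rm(\pi')$ can be excised, and all $V_{\mathcal C}$-nodes it touched are revisited in $\pi'[\leftIndex(\pi'),\rmIndex(\pi')]$, so feasibility is preserved and cost strictly drops). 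With either addition your $\brm$ argument goes through.
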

\begin{proof}
We illustrate the case of $\rm$, since a similar line of reasoning applies to $\lm$ and $\brm$.
Assume that $\rm(\pi')\not\in V_{\mathcal{C}} \cup\{s_0,t_0\}$. Consider the succession of rides $\pi^j$, with $j\geq 0$, built as follows.
Initially, i.e., for $j=0$, we set $\pi^j=\pi'$. Consider any time step $i$ such that $1\leq i\leq \leftIndex(\pi^j)$ and $\pi^j_i=\rm(\pi^j)$.
Note that $1<i<\leftIndex(\pi^j)$ actually holds, since $s_0\neq \rm(\pi^j)$ and $\leftIndex(\pi^j)< \rightIndex(\pi^j)$. Consider then the
ride $\pi^{j+1}=\pi^j[1,i-1]\mapsto \pi^j[i+1,\len(\pi)]$, and note that $\pi^{j+1}\preceq \pi^j$ and $\len(\pi^{j+1})<\len(\pi^j)$.
Since $\pi^j_i\not \in V_{\mathcal{C}}$, we therefore have that $\pi^{j+1}$ is optimal too, because of Fact~\ref{lem:preserving}. If
$\rm(\pi^{j+1})\in V_{\mathcal{C}}\cup\{s_0,t_0\}$, then we have concluded. Otherwise, we can repeat this method over $\pi^{j+1}$ by noticing
that $s_0\leq\rm(\pi^{j+1})\leq \rm(\pi^j)$ and $\leftIndex(\pi^{j+1})<\leftIndex(\pi^j)$. Therefore, the process will eventually converge to
an optimal ride $\pi$ such that  $\rm(\pi)$ belongs to the set $V_{\mathcal{C}}$ or coincides with $s_0$.
\end{proof}

Let us now start by analyzing the properties of the optimal rides.

\begin{lemma}\label{lem:phase1}
Assume there is an optimal ride $\pi \in \opt(\mathcal{R})$ such that $\leftIndex(\pi)< \rightIndex(\pi)$. Then, the following ride is optimal,
too:
\begin{eqnarray}\label{eqn:phase1}
s_0 \mapsto \brm(\pi) \mapsto \l(\mathcal{R}) \mapsto \brm(\pi) \mapsto \pi[\brmIndex(\pi),\len(\pi)].
\end{eqnarray}
\end{lemma}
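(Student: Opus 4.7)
The plan is to establish optimality of the candidate ride $\hat{\pi} = s_0 \mapsto \brm(\pi) \mapsto \l(\mathcal{R}) \mapsto \brm(\pi) \mapsto \pi[\brmIndex(\pi), \len(\pi)]$ by separately showing (i) $w(\hat{\pi}) \leq w(\pi)$ and (ii) $\hat{\pi}$ is feasible for $\mathcal{R}$; combined with $\pi \in \opt(\mathcal{R})$, these two facts immediately yield $\hat{\pi} \in \opt(\mathcal{R})$.

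For (i), since $\hat{\pi}$ and $\pi$ agree on the tail $\pi[\brmIndex(\pi), \len(\pi)]$, it suffices to compare the cost of the new prefix with $w(\pi[1, \brmIndex(\pi)])$. Writing $d(a,b) = \sum_{v=a}^{b-1} w(\{v, v{+}1\})$ for $a \leq b$ on the path, the new prefix costs exactly $d(s_0, \brm(\pi)) + 2\, d(\l(\mathcal{R}), \brm(\pi))$, using $s_0 \leq \brm(\pi)$ (because $\brm(\pi) \geq \rm(\pi) \geq s_0$) and $\l(\mathcal{R}) \leq \brm(\pi)$. To lower-bound $w(\pi[1, \brmIndex(\pi)])$, I would split it at the critical time steps $\leftIndex(\pi), \rmIndex(\pi), \rmLastIndex(\pi)$: the subride $\pi[1,\leftIndex(\pi)]$ goes from $s_0$ to $\l(\mathcal{R})$ reaching $\rm(\pi)$ along the way, so on a path it costs $\geq d(s_0,\rm(\pi)) + d(\l(\mathcal{R}),\rm(\pi))$; $\pi[\leftIndex(\pi),\rmIndex(\pi)]$ costs $\geq d(\l(\mathcal{R}),\rm(\pi))$; $\pi[\rmIndex(\pi),\rmLastIndex(\pi)]$ starts and ends at $\rm(\pi)$ but visits $\brm(\pi)$ in between, so it costs $\geq 2\, d(\rm(\pi),\brm(\pi))$; and $\pi[\rmLastIndex(\pi),\brmIndex(\pi)]$ costs $\geq d(\rm(\pi),\brm(\pi))$. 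Summing and using $\l(\mathcal{R}) \leq \rm(\pi) \leq \brm(\pi)$ and $s_0 \leq \rm(\pi) \leq \brm(\pi)$ to telescope yields exactly $d(s_0,\brm(\pi)) + 2\, d(\l(\mathcal{R}),\brm(\pi))$, matching the new prefix cost.

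For (ii), $\hat{\pi}$ begins at $s_0$ and ends at $t_0$ by construction, so I only need to handle each request $(s,t) \in \mathcal{C}$ already satisfied by $\pi$ at some pair of indices $i \leq i'$. The case $i \geq \brmIndex(\pi)$ is trivial because the tail is preserved. Otherwise, the extremality of $\leftIndex, \rmIndex, \rmLastIndex, \brmIndex$ forces every request-relevant value of $\pi$ in $[1, \brmIndex(\pi)]$ to lie in $[\l(\mathcal{R}), \brm(\pi)]$ (values below $\l(\mathcal{R})$ cannot belong to $V_{\mathcal{C}}$, and any value above $\brm(\pi)$ in the prefix would contradict the minimality of $\rmIndex(\pi)$ or $\brmIndex(\pi)$ on a path). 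In particular $s \in V_{\mathcal{C}}$ lies in $[\l(\mathcal{R}), \brm(\pi)]$, a range fully covered by the new prefix: if $t$ lies in the tail, $s$ is reached in the prefix before the tail begins; if $t$ also lies in the prefix, then both $s, t \in [\l(\mathcal{R}), \brm(\pi)]$, and the new prefix's ``down-sweep'' $\brm(\pi) \mapsto \l(\mathcal{R})$ followed by the ``up-sweep'' $\l(\mathcal{R}) \mapsto \brm(\pi)$ realizes both possible orderings of any such pair, so the request is satisfied regardless of whether $s \leq t$ or $s > t$.

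The main obstacle will be formalizing that all nodes of $\pi$ in each subsegment of $[1, \brmIndex(\pi)]$ lie within the asserted range, since this single observation underpins both the tightness of the cost lower bounds in step (i) and the feasibility argument in step (ii). Each such bound reduces to exploiting connectivity of the path together with the minimality built into the definitions of $\leftIndex, \rmIndex, \rmLastIndex, \brmIndex$: any excursion outside the stated range would force $\pi$ to traverse $\l(\mathcal{R})$, $\rm(\pi)$, or $\brm(\pi)$ strictly earlier than the respective index permits, which is a contradiction.
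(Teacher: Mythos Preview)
Your proof is correct and follows the same two-part structure as the paper's own argument: compare the cost of the new prefix $s_0 \mapsto \brm(\pi) \mapsto \l(\mathcal{R}) \mapsto \brm(\pi)$ against $w(\pi[1,\brmIndex(\pi)])$, and then verify feasibility by a case split on where the witnessing indices $i\leq i'$ of each request fall relative to $\brmIndex(\pi)$. The only notable difference is that the paper asserts the cost inequality in one line (``Observe that $w(\hat\pi)\leq w(\pi[1,\brmIndex(\pi)])$''), whereas you spell it out via the decomposition at $\leftIndex,\rmIndex,\rmLastIndex,\brmIndex$ and telescope the distance terms---this is a welcome elaboration rather than a different approach.
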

\begin{proof}
Let $\hat \pi \mapsto \pi[\brmIndex(\pi),\len(\pi)]$ be the ride where $\hat \pi=s_0 \mapsto \brm(\pi) \mapsto \l(\mathcal{R}) \mapsto
\brm(\pi)$. Observe that $w(\hat \pi)\leq w(\pi[1,\brmIndex(\pi)])$. Moreover, we shall show that for each request $(s,t)\in \mathcal{C}$,
$\hat \pi \mapsto \pi[\brmIndex(\pi),\len(\pi)]$ satisfies $(s,t)$.
This will immediately imply that $\hat \pi \mapsto \pi[\brmIndex(\pi),\len(\pi)]$ is an optimal ride, too.

Recall first that, since $\pi$ is a feasible ride, for each request $(s,t)$, there are two time steps $i$ and $i'$ such that $1\leq i\leq
i'\leq \len(\pi)$, $\pi_i=s$ and $\pi_{i'}=t$.
Now, if $i\geq \brmIndex(\pi)$, then $\pi[\brmIndex(\pi),\len(\pi)]$ satisfies $(s,t)$; hence, $\hat \pi \mapsto \pi[\brmIndex(\pi),\len(\pi)]$
satisfies $(s,t)$, too.
Assume then that $i'\leq \brmIndex(\pi)$, and let us distinguish the following two cases: \emph{(i)} if $s\leq t$, then $\l(\mathcal{R})
\mapsto \brm(\pi)$ satisfies $(s,t)$; \emph{(ii)} otherwise, i.e., if $s>t$, then $\brm(\pi) \mapsto \l(\mathcal{R})$ satisfies $(s,t)$. In
both cases, we can conclude that $\hat \pi\mapsto \pi[\brmIndex(\pi),\len(\pi)]$ satisfies $(s,t)$, too.
Finally, assume that $i< \brmIndex(\pi)<i'$. In this case, $s$ is in $\nodes(\hat \pi)=\nodes(\pi[1,\brmIndex(\pi)])$, while $t$ is in
$\nodes(\pi[\brmIndex(\pi),\len(\pi)])$. Thus, $\hat \pi \mapsto \pi[\brmIndex(\pi),\len(\pi)]$ satisfies $(s,t)$.
\end{proof}

Consider now the optimal ride $\pi^*$, and the succession of optimal rides $\pi^j$, with $j\geq 0$, obtained by repeatedly applying
Lemma~\ref{lem:phase1}. First, we set $\pi^0=\pi^*$. Then, for each $j\geq 0$, we define $\pi^{j+1}$ as the optimal ride having the form:
$$
s_0 \mapsto \brm(\pi^j) \mapsto \l(\mathcal{R}) \mapsto \brm(\pi^j) \mapsto \pi^j[\brmIndex(\pi^j),\len(\pi^j)].
$$

In the above succession, there must exists an optimal ride $\pi^h$, with $h\geq 0$, such that $\brm(\pi^{h})=\rm(\pi^{h})$. Indeed, note that
$\rm(\pi^{j+1})= \brm(\pi^j)$ holds, for each $j\geq 0$, and we know that, for any optimal ride $\pi$, $\rm(\pi)\leq
\brm(\pi)<\r(\mathcal{R})$.

For this optimal ride $\pi^h$, we have that $\rmLastIndex(\pi^{h})=\brmIndex(\pi^h)$, by definition of these two time steps. Therefore,
$$
\pi^{h+1}=s_0 \mapsto \rm(\pi^h) \mapsto \l(\mathcal{R}) \mapsto \rm(\pi^h) \mapsto \pi^h[\rmLastIndex(\pi^h),\len(\pi^h)].
$$
For the subsequent analysis, we shall write $\pi^{h+1}=\pi' \mapsto \hat \pi''\mapsto \hat \pi'''$ where:
\begin{itemize}
  \item $\pi'=s_0 \mapsto \rm(\pi^h) \mapsto \l(\mathcal{R}) \mapsto \rm(\pi^h)$;

  \item $\hat \pi''=\pi^h[\rmLastIndex(\pi^h),\rightIndex(\pi^h)]$; and

  \item $\hat \pi'''=\pi^h[\rightIndex(\pi^h),\len(\pi^h)]$.
\end{itemize}

Figure~\ref{fig1:B} reports an illustration of the result discussed below.


\begin{lemma}\label{lem:properties}
The following properties hold on $\pi^{h+1}=\pi' \mapsto \hat \pi''\mapsto \hat \pi'''$:
\begin{enumerate}
  \item[(1)] $\hat \pi''_{\len(\hat \pi'')}=\r(\mathcal{R})$; there is no node $v\in \nodes(\hat \pi'')$ such that $v< \rm(\pi^h)$; and,
      for each time step $i$ with $1\leq i<\len(\hat \pi'')$, $\hat \pi''_{i} \neq \r(\mathcal{R})$;
  \item[(2)] for each node $v\in \nodes(\hat \pi''')$, $v\geq \lm(\pi^{h+1})$;
  \item[(3)] there is no request $(s,t)\in\mathcal{C}$ such that $t<\lm(\pi^{h+1})$, $t<\rm(\pi^{h})$, and $\rm(\pi^h)< s$.
\end{enumerate}
\end{lemma}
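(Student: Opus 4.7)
The plan is to prove the three properties in the order given, since (2) will use (1) to identify where $\pi^{h+1}$ first reaches $\r(\mathcal{R})$, and (3) will combine (1) and (2) to rule out the existence of certain requests. All three arguments are combinatorial and hinge on the definitions of $\rightIndex$, $\rmLastIndex$, and $\rm$, together with the fact that $G$ is a path, so two consecutive ride nodes differ by exactly one.

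For (1), I would first observe that $\hat \pi'' = \pi^h[\rmLastIndex(\pi^h),\rightIndex(\pi^h)]$, so $\hat \pi''_{\len(\hat \pi'')} = \pi^h_{\rightIndex(\pi^h)} = \r(\mathcal{R})$, and, by the \emph{minimality} in the definition of $\rightIndex(\pi^h)$, none of the strictly earlier positions of $\hat \pi''$ can equal $\r(\mathcal{R})$. To rule out values $v < \rm(\pi^h)$ in $\hat \pi''$, I would argue by contradiction: on a path, going from $\pi^h_{\rmLastIndex(\pi^h)} = \rm(\pi^h)$ down to some $v < \rm(\pi^h)$ and then up to $\r(\mathcal{R}) > \rm(\pi^h)$ forces $\pi^h$ to pass through $\rm(\pi^h)$ at some time step strictly between $\rmLastIndex(\pi^h)$ and $\rightIndex(\pi^h)$, contradicting the maximality in the definition of $\rmLastIndex(\pi^h)$.

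For (2), I would invoke (1) to locate $\rightIndex(\pi^{h+1})$. The prefix $\pi' = s_0 \mapsto \rm(\pi^h) \mapsto \l(\mathcal{R}) \mapsto \rm(\pi^h)$ only visits nodes in $[\l(\mathcal{R}),\rm(\pi^h)]$ (note that $s_0\leq \rm(\pi^h)$, since $s_0$ enters the max defining $\rm(\pi^h)$), and $\rm(\pi^h) < \r(\mathcal{R})$, so $\r(\mathcal{R}) \notin \nodes(\pi')$. Combined with (1), this shows that $\r(\mathcal{R})$ appears for the first time in $\pi^{h+1}$ exactly at the last position of $\hat \pi''$, i.e., at the boundary between $\hat \pi''$ and $\hat \pi'''$. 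Therefore $\nodes(\hat \pi''')$ is precisely the set of nodes over which $\lm(\pi^{h+1})$ is minimized, whence the claimed bound.

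For (3), I would proceed by contradiction. Suppose a request $(s,t) \in \mathcal{C}$ satisfies $t < \lm(\pi^{h+1})$, $t < \rm(\pi^h)$, and $s > \rm(\pi^h)$. Since $\pi^{h+1}$ is feasible, there are indices $i \leq i'$ with $\pi^{h+1}_i = s$ and $\pi^{h+1}_{i'} = t$. Now $s > \rm(\pi^h)$ together with $\nodes(\pi') \subseteq [\l(\mathcal{R}),\rm(\pi^h)]$ forces $s \notin \nodes(\pi')$, so the occurrence of $s$ lies in $\hat \pi'' \mapsto \hat \pi'''$. On the other hand, (1) forbids $t \in \nodes(\hat \pi'')$ (as $t < \rm(\pi^h)$) and (2) forbids $t \in \nodes(\hat \pi''')$ (as $t < \lm(\pi^{h+1})$), so $t$ can only occur in $\pi'$, contradicting $i \leq i'$. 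The main delicate point will be the index bookkeeping across the three concatenated pieces of $\pi^{h+1}$, in particular making it precise that $\pi'$ truly stays inside $[\l(\mathcal{R}),\rm(\pi^h)]$ and that the boundary indices between $\pi'$, $\hat \pi''$, and $\hat \pi'''$ line up as expected; once these are settled, the rest of (3) is a direct location argument.
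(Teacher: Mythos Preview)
Your proposal is correct and follows essentially the same approach as the paper's proof: properties (1) and (2) are read off from the definitions of $\rmLastIndex$, $\rightIndex$, and $\lm$, and (3) is a contradiction argument showing that $s$ cannot lie in $\pi'$ while $t$ cannot lie in $\hat\pi''\mapsto\hat\pi'''$. Your treatment of (2)---explicitly locating $\rightIndex(\pi^{h+1})$ at the $\hat\pi''$/$\hat\pi'''$ boundary---is a useful elaboration of what the paper leaves implicit. One small imprecision: the containment $\nodes(\pi')\subseteq[\l(\mathcal{R}),\rm(\pi^h)]$ need not hold on the left (nothing here forces $s_0\geq\l(\mathcal{R})$), but only the upper bound $v\leq\rm(\pi^h)$ is actually used in your arguments for (2) and (3), and that bound is correct since $s_0\leq\rm(\pi^h)$ and $\l(\mathcal{R})\leq\rm(\pi^h)$.
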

\begin{proof}
Property \emph{(1)} is immediate since $\hat \pi''=\pi^h[\rmLastIndex(\pi^h),\rightIndex(\pi^h)]$, and given the definition of the time steps
$\rmLastIndex(\pi^h)$ and $\rightIndex(\pi^h)$.

Similarly, property \emph{(2)} holds because $\hat \pi'''=\pi^h[\rightIndex(\pi^h),\len(\pi^h)]$ and given the definition of $\lm(\pi^{h+1})$.

Concerning property \emph{(3)}, assume for the sake of contradiction that $(s,t)$ is a request such that $t<\lm(\pi^{h+1})$, $t<\rm(\pi^{h})$,
and $\rm(\pi^h)\leq s$.
By property \emph{(1)} and property \emph{(2)}, we have that $t\not\in \nodes(\hat \pi''\mapsto \hat \pi''')$.
However, for each node $v\in \nodes(\pi')$, it holds that $v\leq \rm(\pi^h)$. Given that $s >\rm(\pi^h)$, this entails that $s\not\in
\nodes(\pi')$. Combined with the fact that $t\not\in \nodes(\hat \pi''\mapsto \hat \pi''')$, then we derive that $\pi^{h+1}$ does not satisfy
$(s,t)$, which is impossible.
\end{proof}

Armed with the above properties, we can now analyze the form of the rides $\hat \pi''$ and $\hat \pi'''$. We start with the case where
$\lm(\pi^{h+1})< \rm(\pi^h)$.

\begin{lemma}\label{lem:phase2}
If $\lm(\pi^{h+1})< \rm(\pi^h)$, then the ride $\pi'\mapsto \pi''\mapsto \pi'''$ is optimal, where $\pi''=\rm(\pi^h)\mapsto \r(\mathcal{R})$
and $\pi'''=\r(\mathcal{R})\mapsto \lm(\pi^{h+1})\mapsto t_0$.
\end{lemma}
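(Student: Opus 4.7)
The plan is to establish optimality of $\pi'\mapsto\pi''\mapsto\pi'''$ by comparing it directly to the already-optimal ride $\pi^{h+1}=\pi'\mapsto\hat\pi''\mapsto\hat\pi'''$. Since $\pi^{h+1}\in\opt(\mathcal{R})$, it suffices to prove two things: (a) $w(\pi'\mapsto\pi''\mapsto\pi''')\leq w(\pi^{h+1})$, and (b) $\pi'\mapsto\pi''\mapsto\pi'''$ is feasible, i.e.\ it satisfies every request in $\mathcal{C}$. The common prefix $\pi'$ cancels in the weight comparison, so both sub-goals reduce to reasoning about $\hat\pi''\mapsto\hat\pi'''$ versus $\pi''\mapsto\pi'''$.

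For the weight inequality, property~(1) of Lemma~\ref{lem:properties} tells us that $\hat\pi''$ starts at $\rm(\pi^h)$, ends at $\r(\mathcal{R})$, and stays in the interval $[\rm(\pi^h),\r(\mathcal{R})]$ without reaching $\r(\mathcal{R})$ before its final step; on a path graph with non-negative edge weights, any such walk weighs at least $w(\rm(\pi^h)\mapsto\r(\mathcal{R}))=w(\pi'')$. Analogously, property~(2) forces $\hat\pi'''$ to start at $\r(\mathcal{R})$, end at $t_0$, and stay at or above $\lm(\pi^{h+1})$; since $\lm(\pi^{h+1})$ actually occurs in $\hat\pi'''$ and satisfies $\lm(\pi^{h+1})\leq \min(\r(\mathcal{R}),t_0)$, the shortest such walk on a path is exactly $\r(\mathcal{R})\mapsto\lm(\pi^{h+1})\mapsto t_0=\pi'''$. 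Summing gives $w(\hat\pi''\mapsto\hat\pi''')\geq w(\pi''\mapsto\pi''')$.

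For feasibility, fix a request $(s,t)\in\mathcal{C}$, so $\l(\mathcal{R})\leq s,t\leq \r(\mathcal{R})$. If $s\leq t$, the monotone ascent through $\l(\mathcal{R})\mapsto\rm(\pi^h)$ in $\pi'$ followed by $\rm(\pi^h)\mapsto\r(\mathcal{R})$ in $\pi''$ visits $s$ before $t$ regardless of where the threshold $\rm(\pi^h)$ falls between them. If $s>t$, three subcases arise: when $s\leq\rm(\pi^h)$, the descending segment $\rm(\pi^h)\mapsto\l(\mathcal{R})$ of $\pi'$ alone satisfies $(s,t)$; when $s>\rm(\pi^h)$ and $t\geq\rm(\pi^h)$, $s$ is met inside $\pi''$ on the ascent, and $t$ is met inside $\pi'''$ on the descent from $\r(\mathcal{R})$ to $\lm(\pi^{h+1})$, using the present lemma's hypothesis $\lm(\pi^{h+1})<\rm(\pi^h)\leq t$; the delicate subcase is when $s>\rm(\pi^h)$ and $t<\rm(\pi^h)$, where I expect the proof to lean on property~(3) of Lemma~\ref{lem:properties} to rule out $t<\lm(\pi^{h+1})$, which in turn guarantees that $t$ lies on the descent in $\pi'''$ and $(s,t)$ is again satisfied. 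This last subcase is the main obstacle: without property~(3) the request could demand a detour below $\lm(\pi^{h+1})$ that the new ride can no longer afford.
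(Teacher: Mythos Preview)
Your proposal is correct and follows essentially the same approach as the paper: both establish feasibility by a case analysis on $(s,t)$ that ultimately hinges on property~(3) of Lemma~\ref{lem:properties}, and both compare the candidate ride against the optimal $\pi^{h+1}$. The only noteworthy difference is that the paper obtains the weight inequality in one stroke by observing $\pi'\mapsto\pi''\mapsto\pi'''\preceq\pi^{h+1}$ and invoking Fact~\ref{lem:preserving}, whereas you argue it directly via the minimal-walk bounds afforded by properties~(1) and~(2); your feasibility case split (on $s\leq\rm(\pi^h)$ versus $s>\rm(\pi^h)$) is a harmless reorganization of the paper's split (on $t\geq\lm(\pi^{h+1})$ versus $t<\lm(\pi^{h+1})$).
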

\begin{proof} Define $\pi''=\rm(\pi^h)\mapsto \r(\mathcal{R})$ and $\pi'''=\r(\mathcal{R})\mapsto \lm(\pi^{h+1})\mapsto t_0$. We have to show that
$\pi'\mapsto \pi''\mapsto \pi'''$ is an optimal ride. In fact, it is immediate to check that $\pi'\mapsto \pi''\mapsto \pi'''\preceq
\pi^{h+1}$. Therefore, after Lemma~\ref{lem:preserving}, we have just to show that, for each request $(s,t)\in \mathcal{C}$, $\pi'\mapsto
\pi''\mapsto \pi'''$ satisfies $(s,t)$.

Let $(s,t)$ be a request in $\mathcal{C}$. If $s\leq t$, then ride $\l(\mathcal{R}) \mapsto \rm(\pi^h) \mapsto \r(\mathcal{R})$ trivially
satisfies $(s,t)$. Then, consider the case where $s>t$, and let us distinguish the following two possibilities. If $t\geq \lm(\pi^{h+1})$, then
$\pi'''$ satisfies $(s,t)$. Instead, if $t<\lm(\pi^{h+1})$, then we know that $t<\lm(\pi^{h+1})< \rm(\pi^h)$ also holds. Therefore, we are in
the position of applying  property \emph{(3)} in Lemma~\ref{lem:properties}, by concluding that $s< \rm(\pi^h)$ holds. So, $\rm(\pi^h)\mapsto
\l(\mathcal{R})$ satisfies $(s,t)$.
\end{proof}

Note that, by setting $M=\rm(\pi^{h})$ and $m=\lm(\pi^{h+1})$, if $m< M$ holds (and actually even if $m=M$), then the ride in
Lemma~\ref{lem:phase2} is canonical w.r.t.~$M$ and $m$. In particular, we know that we can focus, w.l.o.g., on the case where $M$ and $m$
belongs to $V_\mathcal{C}\cup \{s_0,t_0\}$ (cf. Lemma~\ref{fact:VC}). Hence, in order to complete the proof of Claim~\ref{thm:path}, we have
now to analyze the case where $\lm(\pi^{h+1})\geq \rm(\pi^h)$.

Consider the optimal ride $\pi^{h+1}=\pi'\mapsto \hat \pi''\mapsto \hat \pi'''$, by assuming that $\lm(\pi^{h+1})\geq \rm(\pi^h)$. Moreover,
consider the notion of \emph{critical} request defined inductively as follows:
First, we say that any request $(s,t)\in \mathcal{C}$ such that $t<\lm(\pi^{h+1})\leq s$ and $s>\rm(\pi^{h})$ is critical. Then, in general, a
request $(s,t)$ is critical if $t< s$ and there is a critical request $(s',t')$ with  $t<t'\leq s$ and $s> \rm(\pi^{h})$.

Let $\mathcal{S}$ be the set of all critical requests in $\mathcal{C}$, and whenever $\mathcal{S}\neq \emptyset$, let
$\crp(\pi^{h+1})=\min_{(s,t)\in \mathcal{S}} t$.
We claim that if $\lm(\pi^{h+1})\geq \rm(\pi^h)$, then $\crp(\pi^{h+1})\geq \rm(\pi^h)$. Indeed, assume by contradiction that there is a
request $(s,t)$ such that $s> \rm(\pi^{h})$ and $t<\rm(\pi^h)$. Then, we also have that $t<\lm(\pi^{h+1})$. Hence, we get a contradiction with
property \emph{(3)} in Lemma~\ref{lem:properties}.
For uniformity, if $\mathcal{S}=\emptyset$, then we define $\crp(\pi^{h+1})=\lm(\pi^h)$ (so we again have $\crp(\pi^{h+1})\geq \rm(\pi^h)$).
Then, let $\crpIndex(\pi^{h+1})$ (resp.,  $\crpFirstIndex(\pi^{h+1})$) be the maximum time step $i\leq \rightIndex(\pi^h)$ (resp., minimum time
step $i\geq \rmLastIndex(\pi^h)$) such that $\pi^{h+1}_i=\crp(\pi^{h+1})$.

\begin{lemma}\label{lem:property2}
If $\lm(\pi^{h+1})\geq \rm(\pi^h)$, then the following properties hold:
\begin{itemize}
  \item[(1)] there is no request $(s,t)$ such that $t<\crp(\pi^{h+1})< s$;

  \item[(2)] there is no request $(s,t)$ such that $t<\rm(\pi^{h})< s$;

  \item[(3)] $\pi^h[\rmLastIndex(\pi^h),\crpIndex(\pi^{h+1})]$ satisfies each request $(s,t)$ such that $\rm(\pi^h)\leq s\leq
      \crp(\pi^{h+1})$ and $\rm(\pi^h)\leq t \leq \crp(\pi^{h+1})$.
\end{itemize}
\end{lemma}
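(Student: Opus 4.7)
For Property (1), the plan is to argue by contradiction from the definition of critical requests. If $(s,t)\in\mathcal{C}$ has $t<\crp(\pi^{h+1})<s$, I split on whether $\mathcal{S}$ is empty. When $\mathcal{S}\neq\emptyset$, I choose $(s^*,t^*)\in\mathcal{S}$ with $t^*=\crp(\pi^{h+1})$ and observe that $t<t^*\leq s$, $s>\rm(\pi^h)$ (since $s>\crp\geq \rm(\pi^h)$), and $t<s$, so $(s,t)$ is itself critical via the inductive clause, contradicting the minimality of $\crp$. When $\mathcal{S}=\emptyset$, by convention $\crp(\pi^{h+1})=\lm(\pi^{h+1})$ and $(s,t)$ witnesses the base clause $t<\lm\leq s$ with $s>\rm(\pi^h)$, contradicting emptiness. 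Property (2) then follows instantly from Lemma~\ref{lem:properties}(3): the hypothesis $\lm(\pi^{h+1})\geq \rm(\pi^h)$ turns any request with $t<\rm(\pi^h)<s$ into one meeting all three conditions $t<\lm$, $t<\rm$, $\rm<s$ of that property, giving a contradiction.

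For Property (3), the cornerstone I would establish first is the inequality $\crp(\pi^{h+1})\leq \lm(\pi^{h+1})$, strict when $\mathcal{S}\neq\emptyset$. This follows by induction on the definition of $\mathcal{S}$: the base clause enforces $t'<\lm$ directly, and the inductive clause chains $t<t'<\lm$, so every critical $(s',t')$ satisfies $t'<\lm$, yielding $\crp=\min t'<\lm$; when $\mathcal{S}=\emptyset$ the equality $\crp=\lm$ holds by convention. I then catalogue the positional constraints on $\pi^h$: nodes on $[1,\rmLastIndex(\pi^h)]$ are $\leq\rm(\pi^h)$ (using $\brm(\pi^h)=\rm(\pi^h)$, which prevents the ride from ever climbing above $\rm$ before $\rmLastIndex$); nodes on $[\rmLastIndex(\pi^h),\rightIndex(\pi^h)]$ are $\geq\rm(\pi^h)$ (Lemma~\ref{lem:properties}(1)); nodes on $(\crpIndex(\pi^{h+1}),\rightIndex(\pi^h)]$ strictly exceed $\crp(\pi^{h+1})$ (on a path, after the last visit to $\crp$ before $\rightIndex$, the ride cannot cross back below $\crp$ without revisiting it); and nodes on $[\rightIndex(\pi^h),\len(\pi^h)]$ are $\geq\lm(\pi^{h+1})$ (Lemma~\ref{lem:properties}(2)).

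Given a request $(s,t)$ with $\rm(\pi^h)\leq s,t\leq \crp(\pi^{h+1})$ and a feasibility witness $\pi^h_\alpha=s,\pi^h_\beta=t$ with $\alpha\leq\beta$, I plan to relocate $\alpha$ and $\beta$ into the window $[\rmLastIndex(\pi^h),\crpIndex(\pi^{h+1})]$. The catalogue rules out visits in $(\crpIndex,\rightIndex]$ since nodes there exceed $\crp$, and the bound $\crp\leq\lm$ rules out visits in $[\rightIndex,\len]$ unless the value equals $\lm=\crp$ (possible only when $\mathcal{S}=\emptyset$, and then the same value is attained at $\crpIndex$); analogously, a visit in $[1,\rmLastIndex)$ forces the value to be $\rm$ and is swapped for the visit at $\rmLastIndex$. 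The main obstacle is verifying that the order $\alpha\leq\beta$ survives the relocation, particularly for $s>t$: here a pure intermediate-value argument inside the sub-ride is insufficient (a monotone walk $\rm,\rm+1,\ldots,\crp$ would visit $t$ before $s$), so the original witness must be transported faithfully from $\pi^h$, and the structural bound $\crp\leq\lm$ is exactly what prevents $t$ from being a fugitive visible only in the tail.
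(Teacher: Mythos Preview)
Your proposal is correct and follows essentially the same route as the paper. For (1) and (2) you reproduce the paper's arguments exactly---minimality of $\crp$ via the inductive clause of criticality, and a direct appeal to property~(3) of Lemma~\ref{lem:properties}---only spelling them out more carefully (the paper's ``by definition of $\crp$'' hides precisely the case split you perform). For (3) the paper compresses everything into one sentence (``by the properties in Lemma~\ref{lem:properties} and given that $\crp(\pi^{h+1})\leq\lm(\pi^{h+1})$''), whereas you unpack the same two ingredients: the bound $\crp\leq\lm$ (proved by induction on the critical-request definition, as you do) and the positional catalogue on the four segments of the ride. Your observation that order preservation for $s>t$ is the genuine content---and that it is secured by showing the original witness $(\alpha,\beta)$ already lies in the window, since $s>\rm$ forces $\alpha\geq\rmLastIndex$ and $t<\crp\leq\lm$ forces $\beta\leq\crpIndex$---is exactly what the paper's terse sentence is gesturing at.
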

\begin{proof}
By definition of $\crp(\pi^{h+1})$, there is no request $(s,t)$ such that $t<\crp(\pi^{h+1})\leq s$, thereby trivially implying \emph{(1)}.

Concerning \emph{(2)}, assume by contradiction that $(s,t)$ is such that $t<\rm(\pi^{h})< s$. Then, $t<\lm(\pi^{h+1})$ would hold. But, this is
impossible by property \emph{(3)} in Lemma~\ref{lem:properties}.

Finally, consider a request $(s,t)$ such that $\rm(\pi^h)\leq s\leq \crp(\pi^{h+1})$ and $\rm(\pi^h)\leq t \leq \crp(\pi^{h+1})$. We know that
$\pi'\mapsto \hat \pi''\mapsto \hat \pi'''$ satisfies $(s,t)$. By the properties in Lemma~\ref{lem:properties} and given that
$\crp(\pi^{h+1})\leq \lm(\pi^{h+1})$, we can see that $\pi^h[\rmLastIndex(\pi^h),\crpIndex(\pi^{h+1})]$ satisfies $(s,t)$.
\end{proof}

With the above ingredients, we can now further explore the form of $\pi^{h+1}$.

\begin{lemma}\label{lem:phase3}
If $\lm(\pi^{h+1})\geq \rm(\pi^h)$ and $\pi^\diamond$ is an optimal ride for $\mathcal{R}(\rm(\pi^h),\crp(\pi^{h+1}))$, then the ride
$\pi'\mapsto {\pi}''\mapsto \pi'''$ is optimal, where
\begin{itemize}
  \item ${\pi}''=\pi^\diamond \mapsto \r(\mathcal{R})$, and

    \item $\pi'''=\r(\mathcal{R})\mapsto \crp(\pi^{h+1})\mapsto t_0$.
\end{itemize}
\end{lemma}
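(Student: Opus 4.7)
My plan is to prove two things about the ride $\pi^\star := \pi' \mapsto \pi'' \mapsto \pi'''$: \emph{(i)} that $\pi^\star$ is feasible for $\R$, and \emph{(ii)} that $w(\pi^\star) \leq w(\pi^{h+1})$. Combined with optimality of $\pi^{h+1}$, these two facts will imply that $\pi^\star$ is optimal.

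For feasibility, I would run a case analysis on each $(s,t)\in\C$. A forward request ($s\leq t$) is immediate: $\pi' \mapsto \pi''$ contains a monotone traversal from $\l(\R)$ through $\rm(\pi^h)$ and $\pi^\diamond$ to $\r(\R)$, visiting every node of $[\l(\R),\r(\R)]$ in increasing order. For a backward request ($s>t$), Lemma~\ref{lem:property2}(2) rules out $t<\rm(\pi^h)<s$, so either both endpoints are $\leq\rm(\pi^h)$ (satisfied by the descent $\rm(\pi^h) \to \l(\R)$ inside $\pi'$), or both are $\geq\rm(\pi^h)$. In the latter case, Lemma~\ref{lem:property2}(1) rules out $t<\crp(\pi^{h+1})<s$: the request then lies either in $[\rm(\pi^h),\crp(\pi^{h+1})]$ (satisfied by $\pi^\diamond$, which is feasible for $\R(\rm(\pi^h),\crp(\pi^{h+1}))$) or in $[\crp(\pi^{h+1}),\r(\R)]$ (satisfied by the descent $\r(\R) \to \crp(\pi^{h+1})$ inside $\pi'''$).

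For the cost bound, I would split $\hat\pi''$ at $\crpIndex(\pi^{h+1})$ into $\hat\pi''_a := \pi^h[\rmLastIndex(\pi^h),\crpIndex(\pi^{h+1})]$ and $\hat\pi''_b := \pi^h[\crpIndex(\pi^{h+1}),\rightIndex(\pi^h)]$, and establish three lower bounds: (a) $w(\hat\pi''_a)\geq w(\pi^\diamond)$, by trimming excursions of $\hat\pi''_a$ above $\crp(\pi^{h+1})$ via repeated applications of Fact~\ref{lem:preserving}---Lemma~\ref{lem:property2}(3) guarantees the trimmed ride remains feasible for $\R(\rm(\pi^h),\crp(\pi^{h+1}))$, whence optimality of $\pi^\diamond$ gives the bound; (b) $w(\hat\pi''_b)\geq w(\crp(\pi^{h+1}) \mapsto \r(\R))$, since it is a ride between these two nodes; (c) $w(\hat\pi''')\geq w(\r(\R) \mapsto \lm(\pi^{h+1}))+w(\lm(\pi^{h+1}) \mapsto t_0)$, since $\hat\pi'''$ is a ride from $\r(\R)$ to $t_0$ that reaches $\lm(\pi^{h+1})$. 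When the set of critical requests $\mathcal{S}$ is empty we have $\crp(\pi^{h+1})=\lm(\pi^{h+1})$, and the three bounds sum exactly to $w(\pi^\diamond)+2\,w(\crp(\pi^{h+1}) \mapsto \r(\R))+w(\crp(\pi^{h+1}) \mapsto t_0)=w(\pi'')+w(\pi''')$, closing the argument.

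The main obstacle is the case $\mathcal{S}\neq\emptyset$, where $\crp(\pi^{h+1})<\lm(\pi^{h+1})$ and the naive sum of (a)--(c) falls short of the target by exactly $2(\lm(\pi^{h+1})-\crp(\pi^{h+1}))$. To close this gap I would pick a critical request $(s^*,t^*)\in\mathcal{S}$ attaining $t^*=\crp(\pi^{h+1})$; by definition $s^*\geq\lm(\pi^{h+1})$ and $s^*>\rm(\pi^h)$. A short feasibility argument shows that when $\mathcal{S}\neq\emptyset$ one must have $\crp(\pi^{h+1})>\rm(\pi^h)$ (otherwise $s^*$ would be forced to precede an $\rm$-visit that it cannot reach in $\pi'$), so $\crp(\pi^{h+1})\in(\rm(\pi^h),\lm(\pi^{h+1}))$ is visited in $\pi^{h+1}$ only inside $\hat\pi''_a$. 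Consequently $s^*$ must be visited within $\hat\pi''_a$ before the final $\crp$-visit at $\crpIndex(\pi^{h+1})$, forcing $\hat\pi''_a$ to perform an excursion above $\crp(\pi^{h+1})$ that reaches at least $s^*$ and returns, contributing at least $2(s^*-\crp(\pi^{h+1}))$ beyond the trimmed bound used in (a). Since $s^*\geq\lm(\pi^{h+1})$, this extra slack is at least $2(\lm(\pi^{h+1})-\crp(\pi^{h+1}))$, exactly absorbing the shortfall and completing the proof.
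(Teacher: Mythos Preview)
Your feasibility argument and the overall cost-splitting scheme are sound, but the step that is meant to close the $2\cdot w(\crp(\pi^{h+1})\mapsto\lm(\pi^{h+1}))$ shortfall contains a genuine error. You assert that for a critical request $(s^*,t^*)$ attaining $t^*=\crp(\pi^{h+1})$ one has ``by definition $s^*\geq\lm(\pi^{h+1})$''. That is false. Only \emph{base-case} critical requests satisfy $s\geq\lm(\pi^{h+1})$; the inductive clause merely requires $t<s$, $s>\rm(\pi^h)$, and $t<t'\leq s$ for some already-critical $(s',t')$, which says nothing about the relation between $s$ and $\lm(\pi^{h+1})$. For a concrete failure, take $\rm(\pi^h)=1$, $\lm(\pi^{h+1})=10$, and requests $(12,8)$ (base-case critical) and $(9,5)$ (critical via $5<8\leq 9$). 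Then $\crp(\pi^{h+1})=5$, the unique request attaining it is $(9,5)$, and $s^*=9<10=\lm(\pi^{h+1})$. In such a scenario $\hat\pi''_a$ is only forced to excurse up to $9$, and your bounds~(a)--(c) summed fall strictly short of $w(\pi'')+w(\pi''')$; the missing cost is hidden inside $\hat\pi''_b$, which must still satisfy $(12,8)$ and therefore overshoot the straight walk $\crp\mapsto\r(\R)$.

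The paper avoids this trap by working with $\hat s=\max_{(s,t)\in\mathcal S}s$, which \emph{is} guaranteed to be at least $\lm(\pi^{h+1})$ (since $\mathcal S$ always contains a base-case request), and then arguing that $\hat s\mapsto\crp(\pi^{h+1})\preceq\hat\pi''$ --- i.e., the descent from $\hat s$ down to $\crp(\pi^{h+1})$ is absorbed somewhere in $\hat\pi''$ as a whole, not necessarily inside $\hat\pi''_a$. This global $\preceq$-comparison, rather than a localized excursion bound on $\hat\pi''_a$, is what makes the cost inequality go through. To repair your argument you would need either to track the full chain of critical requests from $(s^*,\crp)$ back to a base-case request and aggregate the forced detours across both $\hat\pi''_a$ and $\hat\pi''_b$, or to switch to the paper's $\preceq$-based comparison against the intermediate ride $\hat\pi^\diamond\mapsto\hat s\mapsto\crp(\pi^{h+1})\mapsto\r(\R)\mapsto\lm(\pi^{h+1})\mapsto t_0$.
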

\begin{proof}
Recall that $\hat \pi''=\pi^h[\rmLastIndex(\pi^h),\rightIndex(\pi^h)]$. Let $(s,t)$ be any critical request. Then, $s> t$ and $s>
\rm(\pi^{h})$. In fact, we know that $t\geq \crp(\pi^{h+1})$ and, hence, $t\geq \rm(\pi^h)$. Moreover, $t<\lm(\pi^{h+1})$ holds. Because of
property \emph{(2)} and property \emph{(3)} in Lemma~\ref{lem:properties} and given the form of $\pi'$, we clearly have that $\hat \pi''$ must
satisfy $(s,t)$.
Therefore, we have that $s\mapsto t \preceq \hat \pi''$ holds, for each critical request $(s,t)$. If $\mathcal{S}\neq \emptyset$, let $\hat
s=\max_{(s,t)\in \mathcal{S}}s$. Otherwise, let $\bar s=\lm(\pi^{h+1})=\crp(\pi^{h+1})$. Note that $\hat s\geq \lm(\pi^{h+1})$ and that $\hat
s\mapsto \crp(\pi^{h+1})\preceq \pi^h[\rmLastIndex(\pi^h),\rightIndex(\pi^h)]$.

Consider now the ride $\hat \pi^\diamond$ derived from $\pi^h[\rmLastIndex(\pi^h),\crpIndex(\pi^{h+1})]$ by eliminating all nodes $v$ such that
$v>\crp(\pi^{h+1})$.
By putting it together the above observation, Lemma~\ref{lem:property2}, and Lemma~\ref{lem:properties}, we conclude that the ride $\pi'\mapsto
\pi^\circ$, where $\pi^\circ=\hat \pi^\diamond\mapsto \hat s\mapsto \crp(\pi^{h+1})\mapsto \r(\mathcal{R})\mapsto \lm(\pi^{h+1}) \mapsto t_0$
is feasible and that $\pi^\circ\preceq \hat \pi''\mapsto \hat \pi'''$.
Moreover, note that $w(\hat \pi^\diamond\mapsto \r(\R)\mapsto \pi''')\leq w(\pi^\circ)$. So, we will show that $\pi'\mapsto \hat
\pi^\diamond\mapsto \r(\R)\mapsto \pi'''$ is a an optimal ride, by just evidencing that it satisfies every request $(s,t)\in \mathcal{C}$.

Let $(s,t)$ be a request. If $s\leq t$, then trivially $\pi'\mapsto \pi^\diamond\mapsto \r(\R)\mapsto \pi'''$ satisfies $(s,t)$. Consider then
the case where $s>t$. Because of the properties \emph{(1)} and \emph{(2)}  in Lemma~\ref{lem:property2}, there are actually three possible
cases. First, we might have that $s\leq \rm(\pi^h)$, and hence $\pi'$ satisfies $(s,t)$. Second, we might have that $t\geq \crp(\pi^{h+1})$,
and hence $\r(\mathcal{R})\mapsto \crp(\pi^{h+1})$ satisfies $(s,t)$. Finally, we might have that $\rm(\pi^h)\leq s\leq \crp(\pi^{h+1})$ and
$\rm(\pi^h)\leq t \leq \crp(\pi^{h+1})$. In this case, $\pi^h[\rmLastIndex(\pi^h),\crpIndex(\pi^{h+1})]$ satisfies $(s,t)$, by property
\emph{(3)} in Lemma~\ref{lem:property2}. Then, by construction and Lemma~\ref{lem:property2}, $\hat \pi^\diamond$ satisfies $(s,t)$, too.

Finally, observe that for each $v\in \nodes(\hat \pi^\diamond)$, $\rm(\pi^h)\leq v\leq \crp(\pi^{h+1})$ holds. Therefore, $\nodes({\pi}'')\cap
\nodes(\pi')=\{\rm(\pi^h)\}$ and $\nodes({\pi}'')\cap \nodes(\pi''')=\{\crp(\pi^{h+1})\}$. Because of the optimality of $\pi'\mapsto
{\pi}''\mapsto \pi'''$, we then conclude that $\hat \pi^\diamond$ is an optimal ride for $\mathcal{R}(\rm(\pi^h),\crp(\pi^{h+1}))$. In fact,
the result holds for any optimal ride $\pi^\diamond$ for $\mathcal{R}(\rm(\pi^h),\crp(\pi^{h+1}))$ used in place of $\hat \pi^\diamond$.
\end{proof}

The proof of Theorem~\ref{thm:path} is now concluded by setting $m=\crp(\pi^{h+1})$ and $M=\rm(\pi^{h})$, and observing that $M\geq m$. Indeed,
in this case, the optimal ride defined by Lemma~\ref{lem:phase3} is canonical w.r.t.~$M$ and $m$. In particular, note that for $M=m$, the ride
coincides with the one in Lemma~\ref{lem:phase2} (when $\rm(\pi^{h})=\lm(\pi^{h+1})$).

\subsubsection{An algorithm for the ``inner'' case}

It is not difficult to see that the result in Theorem~\ref{thm:path} immediately provides us with an algorithm to compute an optimal ride,
which is based on exhaustively enumerating all possible pairs $M,m$ of elements, by computing the associated canonical ride for each of them
(either by exploiting Fact~\ref{fact:CR} if $m\leq M$, or using the {\sc RideOnPath\_Outer} algorithm on $\mathcal{R}(M,m)$ of $m>M$), and by
eventually returning the feasible one having minimum cost. Actually, in order to deal with the case where all optimal rides $\pi^*$ are such
that $\leftIndex(\pi^*)> \rightIndex(\pi^*)$, we can just apply the approach over the symmetric scenario $\sym(\R)$ too (see
Fact~\ref{fact:symmetric}), and return the best over the rides computed for $\R$ and $\sym(\R)$.

Note that the approach sketched above requires the enumeration of $|V_{\mathcal{C}}|^2$ canonical rides.
However, as we shall see in the reminder of this section, we are actually able to do better than a na\"ive enumeration over all pairs of $M$
and $m$.
To this end, we explore the properties enjoyed by canonical rides that are optimal.

We start by observing that whenever $M< m$ holds in Theorem~\ref{thm:path}, then an optimal canonical ride is determined via simple expressions
that can be calculated efficiently.

\begin{theorem}\label{thm:uno}
Assume that there are two nodes $M,m\in V_{\mathcal{C}}\cup\{s_0,t_0\}$, with $s_0\leq M$, $m\leq t_0$ and $M < m$, such that a $(M,
m)$-canonical ride is an optimal ride. Consider the two  sets {\small
\begin{eqnarray*}
\hat X  & = & \{ x\in \{s_0\}\cup V_\C  \mid x\geq s_0\ \wedge \nexists (s,t)\in \mathcal{C}\mbox{ with }   t \leq x < s\},\\
\hat Y  & = & \{ y\in \{t_0\}\cup V_\C  \mid y\leq t_0\ \wedge  \nexists (s,t)\in \mathcal{C}\mbox{ with }   t < y \leq s\}.
\end{eqnarray*}
}
It holds that $\hat X \neq \emptyset$ and  $\hat Y \neq \emptyset$.
Moreover, let $$\hat M=\min_{\hat x \in \hat X} \hat x  \mbox{\;\;\; and\;\;\;} \hat m=\max_{\hat y \in \hat Y}\hat y,$$ then $s_0\leq \hat M$, $\hat m\leq t_0$, $\hat M < \hat m$ and any $(\hat M, \hat m)$-canonical ride  is an optimal ride, too.
\end{theorem}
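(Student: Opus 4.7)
I would split the argument into three parts: non-emptiness of $\hat X$ and $\hat Y$; the inequalities $s_0 \leq \hat M$, $\hat m \leq t_0$, and $\hat M < \hat m$; and the optimality of any $(\hat M, \hat m)$-canonical ride. For non-emptiness, natural witnesses work: since no request $(s,t) \in \C$ has $s > \r(\R)$, the defining condition $t \leq \r(\R) < s$ is vacuous, and $\r(\R) \geq s_0$ follows from the inner-case setting together with $s_0 \leq M < m \leq t_0$; hence $\r(\R) \in \hat X$, and a symmetric argument gives $\l(\R) \in \hat Y$. The bounds $s_0 \leq \hat M$ and $\hat m \leq t_0$ are built into the set definitions.

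For the strict inequality $\hat M < \hat m$, I would exploit the feasibility of the assumed $(M, m)$-canonical ride. Classifying each request $(s, t)$ by where $s$ and $t$ sit with respect to $M$ and $m$ and tracing through the canonical structure $s_0 \mapsto M \mapsto \l(\R) \mapsto M \mapsto \bar\pi \mapsto \r(\R) \mapsto m \mapsto t_0$, one derives structural restrictions on $\C$ that rule out right-to-left requests ``crossing'' $M$ strictly from below or ``crossing'' $m$ strictly from above (otherwise the destination would be visited only before the source). Leveraging these restrictions together with the definitions of $\hat X$ and $\hat Y$, I would exhibit suitable cut points in both sets whose ordering forces $\hat M < \hat m$.

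To show that any $(\hat M, \hat m)$-canonical ride is optimal, I would first verify feasibility by a case split on each request $(s, t) \in \C$, mirroring the argument behind Theorem~\ref{thm:path}: requests with $s \leq \hat M$ are satisfied by the left leg $s_0 \mapsto \hat M \mapsto \l(\R) \mapsto \hat M$; those with $t \geq \hat m$ by the right portion ending $\r(\R) \mapsto \hat m \mapsto t_0$; those with both endpoints in $[\hat M, \hat m]$ by the optimal middle sub-ride $\bar\pi'$ for $\R(\hat M, \hat m)$ (which exists by the outer-case algorithm of Section~\ref{sec:outer}); and the dangerous cross-boundary cases are precisely ruled out by $\hat M \in \hat X$ and $\hat m \in \hat Y$. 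For the cost, I would decompose both canonical rides into the segment contributions along the path (two phases on each side plus the optimal middle sub-ride) and show, using the explicit outer-case cost expression and the request-exclusion constraints from the previous paragraph, that the $(\hat M, \hat m)$-cost is no larger than the $(M, m)$-cost.

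The main obstacle I foresee is this last cost comparison: because $\hat M$ and $\hat m$ are not guaranteed to lie monotonically around $M$ and $m$, the per-segment savings and losses have to be balanced carefully. To handle this I would lean on the normal-form description delivered by Lemma~\ref{claim:normalize} and the edge-counting argument behind Theorem~\ref{thm:nomralized}, matching each extra edge traversal in the $(\hat M, \hat m)$-canonical by a saved traversal in the corresponding segment of the $(M, m)$-canonical or by a request now satisfied ``for free'' during the detour to $\r(\R)$ instead of being built into $\bar\pi$.
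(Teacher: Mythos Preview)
Your proposal has a genuine gap, and it is precisely the one you flag as the ``main obstacle'': you write that $\hat M$ and $\hat m$ ``are not guaranteed to lie monotonically around $M$ and $m$'', but in fact they are, and seeing this is the key step that makes the whole proof short.

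The paper's proof observes that the feasibility of the $(M,m)$-canonical ride already forces $M\in\hat X$ and $m\in\hat Y$. Indeed, in the canonical ride $s_0\mapsto M\mapsto \l(\R)\mapsto M\mapsto\bar\pi\mapsto\r(\R)\mapsto m\mapsto t_0$, once the ride leaves the prefix $s_0\mapsto M\mapsto \l(\R)\mapsto M$ it never again visits a node strictly below $M$; so any request $(s,t)$ with $t\leq M<s$ would have its source visited only after its destination, contradicting feasibility. Hence $M\in\hat X$, and symmetrically $m\in\hat Y$. Since $\hat M$ is the minimum of $\hat X$ and $\hat m$ the maximum of $\hat Y$, this yields immediately
\[
\hat M\leq M<m\leq \hat m,
\]
which gives $\hat M<\hat m$ for free and, more importantly, the monotone nesting $[M,m]\subseteq[\hat M,\hat m]$.

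With this nesting the cost comparison is direct rather than a delicate edge-matching argument. The paper builds the auxiliary ride
\[
\hat\pi=\hat\pi'\mapsto\bigl(\hat M\mapsto M\mapsto\hat M\mapsto\bar\pi\mapsto\hat m\mapsto m\mapsto\hat m\mapsto\r(\R)\bigr)\mapsto\hat\pi''',
\]
where $\bar\pi$ is optimal for $\R(M,m)$. One checks that $w(\hat\pi)=w(\pi^{\mathsf c})$ exactly (the extra $\hat M\mapsto M\mapsto\hat M$ and $\hat m\mapsto m\mapsto\hat m$ detours compensate precisely for the shorter outer legs), and since the bracketed middle piece is a feasible ride for $\R(\hat M,\hat m)$, replacing it by an optimal one can only decrease the cost, giving $w(\hat\pi^{\mathsf c})\leq w(\hat\pi)=w(\pi^{\mathsf c})$.

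Your witnesses $\r(\R)\in\hat X$ and $\l(\R)\in\hat Y$ are correct for non-emptiness, but they are weaker than $M\in\hat X$, $m\in\hat Y$ and do not give the monotonicity you need. Once you replace them, your feasibility case split is fine and the cost obstacle disappears; there is no need to invoke the normal-form edge-counting of Theorem~\ref{thm:nomralized}.
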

\begin{proof}
Let $\pi^{\mathsf{c}}$ be a $(M, m)$-canonical ride that is optimal.
According to Definition~\ref{def:canonical},  since $M < m$, $\pi^{\mathsf{c}}$ has the form $\pi'\mapsto\pi''\mapsto\pi'''$ where:
$\pi'=s_0\mapsto M\mapsto \l(\R)\mapsto M$, $\pi''=\bar \pi \mapsto \r(\mathcal{R})$ where $\bar \pi$ is an optimal ride for $\mathcal{R}(M,m)$,  and  $\hat \pi'''=\r(\mathcal{R})\mapsto m \mapsto t_0$.
Note that there is no request $(s,t)$ in $\mathcal{C}$ such that $t \leq M < s$.  Indeed, let us assume, by the way of contradiction, that such
request exists.  Note that, from the definition of $\pi^{\mathsf{c}}$, there is no pair of time steps $i$ and $i'$ such that $1\leq i\leq
i'\leq \len(\pi^{\mathsf{c}})$, with $\pi^{\mathsf{c}}_{i} < M$ and $M \leq \pi^{\mathsf{c}}_{i'}$. This implies that $\pi^{\mathsf{c}}$ does
not satisfies $(s,t)$, hence contradicting the feasibility of $\pi^{\mathsf{c}}$.
As there is no request $(s,t)$ with $t \leq M < s$, we have that $M$ belongs to $\hat X$, and hence $\hat X \neq \emptyset$. By similar
arguments, we can show that $m$ belongs to $\hat Y$, and hence $\hat Y \neq \emptyset$.

Let us prove now the next statements. Note that $s_0\leq \hat M$ and $\hat m\leq t_0$ follow directly from the definition of  $\hat X$ and
$\hat Y$, respectively. In order to show that $\hat M\leq \hat m$, we exploit the fact that $M \in \hat X$ and $m\in \hat Y$. Indeed, since
$\hat M$, by definition, is the smallest element in $\hat X$,  we get that $\hat M\leq M$ holds. By similar arguments, we can derive that
$m\leq \hat m$ holds. Since from the hypothesis $M < m$, by combining the previous two inequalities, we finally get that  $\hat M < \hat m$
and, more precisely, $\hat M \leq M < m \leq \hat m$.

It remains to show that any $(\hat M, \hat m)$-canonical ride is optimal. Let us consider a  $(\hat M, \hat m)$-canonical ride  $\hat
\pi^{\mathsf{c}}$. According to Definition~\ref{def:canonical}, since  $\hat M\leq \hat m$, $\hat \pi^{\mathsf{c}}$ has the form  $\hat
\pi'\mapsto \hat \pi''\mapsto \hat \pi'''$ where:  $\hat \pi'=s_0\mapsto \hat M\mapsto \l(\R)\mapsto \hat M$, $\hat \pi''=\bar{\bar \pi}
\mapsto \r(\R)$ where $\bar{\bar \pi}$ is an optimal ride for $\R(\hat M,\hat m)$,  and $\hat \pi'''=\r(\mathcal{R})\mapsto \hat m \mapsto
t_0$.
Let us show now that $\hat \pi^{\mathsf{c}}$ is feasible. Indeed, consider any request $(s,t)\in \mathcal{C}$. In the case where $s\leq t$, the request is
satisfied by $\l(\R)\mapsto \r(\R)$, and hence by $\hat \pi^{\mathsf{c}}$.
Consider then the case where $t<s$.  Since $\hat M \in \hat X$ and $\hat m \in \hat Y$, it is not possible that $t\leq \hat M<s$ and  $t< \hat m\leq s$.
If $s\leq \hat M$, then $(s,t)$ is satisfied by $\hat \pi'$; if $\hat M\leq t<s\leq \hat m$, then $(s,t)$ is satisfied by $\hat \pi''$; and,
finally, if $\hat m\leq t$, then $(s,t)$ is satisfied by $\hat \pi'''$. So, in all the possible cases, $(s,t)$ is satisfied by $\hat \pi^{\mathsf{c}}$,  which implies that the canonical ride $\hat \pi^{\mathsf{c}}$ is a feasible ride.
In order to prove that $\hat \pi^{\mathsf{c}}$ is also optimal, we compare  the cost of $\hat \pi^{\mathsf{c}}$ with the cost of the optimal ride $\pi^{\mathsf{c}}$.
Let us recall that $\hat M \leq M < m \leq \hat m$.
Consider the ride $\hat \pi=\hat \pi'\mapsto \ddot \pi''\mapsto \hat \pi'''$, where $\hat \pi'$ and $\hat \pi'''$ are the sub-rides defined for $\hat \pi^{\mathsf{c}}$, and
$$\ddot \pi''= \hat M\mapsto M \mapsto \hat M \mapsto {\bar \pi} \mapsto \hat m \mapsto m \mapsto \hat m \mapsto \r(\mathcal{R}),$$
where ${\bar \pi}$ is an optimal ride for $\mathcal{R}(M,  m)$.
Note that,  if in $\ddot \pi''$ we replace  $\hat M\mapsto M \mapsto \hat M \mapsto {\bar \pi} \mapsto \hat m \mapsto m \mapsto \hat m$ with $ \bar{\bar \pi}$, i.e., the optimal ride for  $\mathcal{R}(\hat M, \hat m)$, then $\hat \pi$ becomes equivalent to $\hat \pi^{\mathsf{c}}$.
Since $w({\bar \pi}) \leq w(\hat M\mapsto M \mapsto \hat M \mapsto \bar{\bar \pi} \mapsto \hat m \mapsto m \mapsto \hat m)$, it trivially follows that $w(\hat \pi^{\mathsf{c}})\leq w(\hat \pi)$.
Moreover, note that $ w(\hat \pi)=w(\pi^{\mathsf{c}})$.
Hence, we obtain that $w(\hat \pi^{\mathsf{c}})\leq w(\pi^{\mathsf{c}})$.
Since $\pi^{\mathsf{c}}$ is optimal, the above inequality implies that $w(\hat \pi^{\mathsf{c}})=w(\pi^{\mathsf{c}})$ and that $\hat \pi^{\mathsf{c}}$ is optimal, too.
\end{proof}

The above result is now complemented with a useful characterization for optimal rides, which applies to the case when $m \leq M$ holds in
Theorem~\ref{thm:path}.

\begin{theorem}\label{thm:due}
Assume that there are two nodes $M,m\in V_{\mathcal{C}}\cup\{s_0,t_0\}$, with $s_0\leq M$, $m\leq t_0$ and $m \leq M$, such that the $(M, m)$-canonical ride $\pi^{\mathsf{c}}$ is an optimal ride.
Consider the  set
{\small
\begin{eqnarray*}\label{set:z}
\hat Z_m  & =&   \{ z\in \{s_0,t_0\}\cup V_\C \mid m \leq z \mbox{ and } s_0 \leq z \ \wedge \nexists (s,t)\in \mathcal{C}\mbox{ with } t < m \mbox{ and } z < s\}.
\end{eqnarray*}
}
It holds that $\hat Z \neq \emptyset$.
Moreover,  let $$\hat M_m=\min_{\hat z \in \hat Z_m} \hat z,$$
then $s_0\leq \hat M$, $m \leq \hat M_m$ and the $(\hat M_m,  m)$-canonical ride $\hat \pi^{\mathsf{c}}$ is optimal, too.
\end{theorem}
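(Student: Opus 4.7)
The plan is to establish in turn that $\hat Z_m$ is nonempty, that $s_0 \leq \hat M_m$ and $m \leq \hat M_m$, that the $(\hat M_m, m)$-canonical ride $\hat \pi^{\mathsf{c}}$ is feasible, and finally that it is optimal. Because $m \leq M$ and $m \leq \hat M_m$, Fact~\ref{fact:CR} gives the explicit forms $\pi^{\mathsf{c}} = s_0 \mapsto M \mapsto \l(\mathcal{R}) \mapsto \r(\mathcal{R}) \mapsto m \mapsto t_0$ and $\hat \pi^{\mathsf{c}} = s_0 \mapsto \hat M_m \mapsto \l(\mathcal{R}) \mapsto \r(\mathcal{R}) \mapsto m \mapsto t_0$, and these explicit forms drive the entire argument.

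For the first three points, I would begin by showing that $M$ itself belongs to $\hat Z_m$. The inequalities $s_0 \leq M$ and $m \leq M$ are given, so only the nonexistence clause needs checking: if some $(s,t) \in \mathcal{C}$ had $t < m$ and $M < s$, then since $s > M$, the ride $\pi^{\mathsf{c}}$ first reaches $s$ only during the ascent $\l(\mathcal{R}) \mapsto \r(\mathcal{R})$, and from that moment on the ride stays at height at least $m$, so $t < m$ is never visited afterwards, contradicting the feasibility of $\pi^{\mathsf{c}}$. Hence $\hat Z_m \neq \emptyset$ and $\hat M_m \leq M$, while $s_0 \leq \hat M_m$ and $m \leq \hat M_m$ follow directly from the definition of $\hat Z_m$. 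For the feasibility of $\hat \pi^{\mathsf{c}}$, I would examine any request $(s,t) \in \mathcal{C}$: if $s \leq t$, it is satisfied by the ascent $\l(\mathcal{R}) \mapsto \r(\mathcal{R})$; if $s > t$ and $s \leq \hat M_m$, both endpoints lie in $[\l(\mathcal{R}), \hat M_m]$ and are visited in the required order during the descent $\hat M_m \mapsto \l(\mathcal{R})$; and if $s > t$ and $s > \hat M_m$, then the defining property of $\hat M_m \in \hat Z_m$ forces $t \geq m$, so $s$ is visited during $\l(\mathcal{R}) \mapsto \r(\mathcal{R})$ and then $t \in [m, \r(\mathcal{R})]$ during $\r(\mathcal{R}) \mapsto m$.

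For optimality, the two rides share the last three segments, so the cost difference reduces to comparing $d(s_0, \hat M_m) + d(\hat M_m, \l(\mathcal{R}))$ with $d(s_0, M) + d(M, \l(\mathcal{R}))$, where $d$ denotes weighted distance on the path. The function $X \mapsto d(s_0, X) + d(X, \l(\mathcal{R}))$ equals $d(s_0, \l(\mathcal{R}))$ whenever $X$ lies in the interval spanned by $s_0$ and $\l(\mathcal{R})$, and grows by twice the weighted distance from $X$ to the nearer endpoint once $X$ leaves that interval. Since $s_0 \leq \hat M_m \leq M$, the point $\hat M_m$ is no farther from this interval than $M$ on whichever side applies, so the substitution does not increase the cost. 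Combined with the feasibility of $\hat \pi^{\mathsf{c}}$ and the optimality of $\pi^{\mathsf{c}}$, this yields $w(\hat \pi^{\mathsf{c}}) \leq w(\pi^{\mathsf{c}})$, hence $\hat \pi^{\mathsf{c}} \in \opt(\mathcal{R})$. The subtlest step will be this monotonicity comparison, where the different possible relative positions of $s_0$, $\l(\mathcal{R})$, $\hat M_m$, and $M$ on the path must be handled uniformly.
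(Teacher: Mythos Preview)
Your proposal is correct and follows essentially the same route as the paper: show $M\in\hat Z_m$ (hence $\hat Z_m\neq\emptyset$ and $\hat M_m\le M$), read off $s_0\le\hat M_m$ and $m\le\hat M_m$ from the definition, verify feasibility of $\hat\pi^{\mathsf c}$ by the same case split on requests, and conclude optimality from $\hat M_m\le M$. The only cosmetic difference is in the last step: the paper packages the cost comparison by inserting the detour $\hat M_m\mapsto M\mapsto\hat M_m$ into $\hat\pi^{\mathsf c}$ to obtain a ride of cost equal to $w(\pi^{\mathsf c})$, whereas you argue the same inequality directly via the monotonicity of $X\mapsto d(s_0,X)+d(X,\l(\mathcal R))$ for $X\ge s_0$.
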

\begin{proof}
According to Definition~\ref{def:canonical},  since $m\leq M$, $\pi^{\mathsf{c}}$ has the form $\pi'\mapsto\pi''\mapsto\pi'''$ where:
$\pi'=s_0\mapsto M\mapsto \l(\R)\mapsto M$; $\pi''=M \mapsto \r(\mathcal{R})$; and  $\hat \pi'''=\r(\mathcal{R})\mapsto m \mapsto t_0$.
Note that there is no request $(s,t)\in \mathcal{C}$ such that  $t<m$ and $M < s$. Indeed, let us assume by the way of contradiction, that such
request exists. Note that, from the definition of $\pi^{\mathsf{c}}$,  there  is no pair of time steps $i$ and $i'$ such that $1\leq i\leq
i'\leq \len(\pi^{\mathsf{c}})$ with $M < \pi^{\mathsf{c}}_{i}$ and $\pi^{\mathsf{c}}_{i'}<m$. This implies that $\pi^{\mathsf{c}}$ does not
satisfies $(s,t)$, hence contradicting the feasibility of $\pi^{\mathsf{c}}$. The non existence of any request $(s,t)$ with $t<m$ and $M < s$,
implies that $M$ belongs to $\hat Z$, and hence $\hat Z \neq \emptyset$.

Let us prove now the next statements.  Note that $s_0\leq \hat M_m$ and $m \leq \hat M_m$  follow directly from the definition of  $\hat Z$. It
remains to show that $\hat \pi^{\mathsf{c}}$ is an optimal ride. According to  Definition~\ref{def:canonical},  since $m \leq \hat M_m$, $\hat
\pi^{\mathsf{c}}$ has the form  $\hat \pi'\mapsto \hat \pi''\mapsto \hat \pi'''$ where: $\pi'=s_0\mapsto \hat M_m\mapsto \l(\R)\mapsto \hat
M_m$; $\pi''=\hat M_m \mapsto \r(\mathcal{R})$; and  $\hat \pi'''=\r(\mathcal{R})\mapsto m \mapsto t_0$.
Let us show now  that $\hat \pi^{\mathsf{c}}$ is feasible.  Indeed, consider any request $(s,t)\in \mathcal{C}$.
In the case where $s\leq t$,  the request is satisfied by $\l(\R)\mapsto \r(\R)$, and hence by $\hat \pi^{\mathsf{c}}$.
Consider then the case where $t < s$.
Since $\hat M_m \in \hat Z$, it is not possible that $t< m$ and $\hat M_m< s$.
If $s\leq \hat M_m$, then $(s,t)$ is satisfied by $\hat \pi'$; if $m\leq t$ then $(s,t)$ is satisfied by $\hat \pi'''$.
So, in all the possible cases, $(s,t)$ is satisfied by  $\hat \pi^{\mathsf{c}}$, which implies that the canonical ride $\hat \pi^{\mathsf{c}}$ is a feasible ride.
In order to prove that $\hat \pi^{\mathsf{c}}$ is also optimal, we compare the cost of $\hat \pi^{\mathsf{c}}$ with the cost of the optimal ride $\pi^{\mathsf{c}}$.
Let us first notice that, since $\hat M_m$, by definition, is the smallest element in $\hat Z$ and $M$ belongs to $\hat Z$,  we get that $\hat M_m\leq M$ holds.
Consider the ride $\hat \pi=\hat \pi'\mapsto \ddot \pi''\mapsto \hat \pi'''$, where $\hat \pi'$ and $\hat \pi'''$ are the sub-rides defined for $\hat \pi^{\mathsf{c}}$, and
$$\ddot \pi''= \hat M_m \mapsto M \mapsto \hat M_m \mapsto \r(\mathcal{R}).$$
Note that,  if in $\ddot \pi''$ we replace  $\hat M_m \mapsto M \mapsto \hat M_m$ with $\hat M_m$, then $\hat \pi$ becomes equivalent to $\hat \pi^{\mathsf{c}}$.
 It trivially follows that $w(\hat \pi^{\mathsf{c}})\leq w(\hat \pi)$.
Moreover, note that $ w(\hat \pi)=w(\pi^{\mathsf{c}})$.
Hence, we obtain that $w(\hat \pi^{\mathsf{c}})\leq w(\pi^{\mathsf{c}})$.
Since $\pi^{\mathsf{c}}$ is optimal, the above inequality implies that $w(\hat \pi^{\mathsf{c}})=w(\pi^{\mathsf{c}})$ and that $\hat \pi^{\mathsf{c}}$ is optimal.
\end{proof}



\begin{algorithm}[th!]
\SetKwInput{KwData}{Input} \SetKwInput{KwResult}{Output}

\Indm \KwData{A ride-sharing scenario $\mathcal{R}=\tuple{G,(s_0,t_0),\mathcal{C}}$, where $G$ is a path and with
$\{s_0,t_0\} \cap \{ v\in V \mid \l(\mathcal{R})< v < \r(\mathcal{R})\}\neq \emptyset$;\\
\quad\quad Optionally, a Boolean value $\it symmetric$---set to \texttt{false}, if not provided;
}%
\KwResult{An optimal ride for $\mathcal{R}$;}

\Indp


  \tcc{PHASE I: implementation of Theorem~\ref{thm:uno}} Compute $\hat M$ and $\hat m$,  as defined in Theorem \ref{thm:uno}; \ \ \ // note
that $\hat M<\hat m$\label{step:comput1}

  $\pi^* \GETS$ any $(\hat M, \hat m)$-canonical ride; \ \ \ // use {\sc RideOnPath\_Outer} as a subroutine for $\R(\hat M,\hat
  m)$\label{step:comput1:ride}

  \tcc{PHASE II: implementation of Theorem~\ref{thm:due}}

  \For{each node $m \in V_{\mathcal{C}}\cup\{s_0,t_0\}$ with $m\leq t_0$}{
	Compute $\hat M_m$,  as defined in Theorem \ref{thm:due}; \ \ \ // note
that $\hat M_m \geq \hat m$\label{step:comput2}

    $\pi \GETS$ the  $(\hat M_m, m)$-canonical ride; \ \ \ // $s_0\mapsto \hat M_m \mapsto \l(\R)\mapsto \r(\R)\mapsto m\mapsto
    t_0$\label{step:comput2:ride}

    \If{$w(\pi) < w(\pi^*)$}{$\pi^* \GETS \pi$\;\label{step:if_phase2}}

  }

  \tcc{PHASE III: working on the symmetric scenario}

  \If{$\it symmetric$ is {\em \texttt{false}}}{

  $\pi^*_{\sym}  \GETS$ {\sc RideOnPath\_Inner}$(\sym(\R),\mathtt{true})$\;\label{step:symmetric}

  \If{$w(\pi^*_{\sym}) < w(\pi^*)$\label{step:compare}}{$\pi^* \GETS \sym(\pi^*_{\sym})$\;}\label{step:if_phase3}

  }

  \Return{$\pi^*$;}\label{step:return}

\caption{{\sc RideOnPath\_Inner}}\label{alg:main}
\end{algorithm}

In the light of  Theorem \ref{thm:path}, Theorem~\ref{thm:uno} and Theorem~\ref{thm:due}, consider then Algorithm~\ref{alg:main}, named {\sc
RideOnPath\_Inner}. It computes an optimal ride $\pi^*$ for the ``inner'' case, by
%
proceeding in three phases.

In Phase I, the algorithm computes the values $\hat M$ and $\hat m$ defined in Theorem~\ref{thm:uno} (step~\ref{step:comput1}), it builds a
$(\hat M, \hat m)$-canonical ride, and it assigns it  to $\pi^*$ (step~\ref{step:comput1:ride}). Note that, according to
Definition~\ref{def:canonical} and given that $\hat M<\hat m$, in order to build a $(\hat M, \hat m)$-canonical ride we need to compute an
optimal ride for $\R(\hat M, \hat m)$, which is a task that we can accomplish by exploiting {\sc RideOnPath\_Outer} as a subroutine---indeed,
note that $\R(\hat M, \hat m)$ fits the ``outer'' case.

In Phase II, the algorithm iterates over all possible values for $m$ in $V_{\mathcal{C}}\cup\{s_0,t_0\}$ with $m\leq t_0$. For each node $m$,
the value $\hat M_m$, defined in Theorem~\ref{thm:due}, is calculated (step \ref{step:comput2}). Then, the $(\hat M_m, m)$-canonical ride $\pi$
is built. In particular, since $\hat M_m\geq m$ holds, the ride $\pi$ is completely determined by Fact~\ref{fact:CR}. Eventually, if the cost
of $\pi$ is smaller than the cost of the current value of $\pi^*$, it updates $\pi^*$ to $\pi$ (step~\ref{step:if_phase2}).

Finally, Phase III is devoted to deal with the symmetric scenario $\sym(\R)$. The idea is that the first two phases are executed again on
$\sym(\R)$. Let $\pi^*_{\sym}$ be the result of this computation (step \ref{step:symmetric}). Then, we consider the symmetric ride
$\sym(\pi^*_{\sym})$, which is a ride for $\R$, and we compare its cost with the cost of the current value of $\pi^*$ (step
\ref{step:compare}).
As usual, we keep the ride with the associated minimum cost, which is eventually returned as output (step~\ref{step:return}).\\


The correctness of the method is proven below.

\begin{theorem}
Algorithm {\sc RideOnPath\_Inner} is correct.
\end{theorem}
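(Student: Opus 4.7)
My plan is to fix an arbitrary optimal ride $\pi^*$ for $\R$ and observe that either $\leftIndex(\pi^*)<\rightIndex(\pi^*)$ or $\leftIndex(\pi^*)>\rightIndex(\pi^*)$ (equality is impossible since $\l(\R)\neq\r(\R)$). Then I would argue that in each regime at least one of the three phases of {\sc RideOnPath\_Inner} produces a feasible ride of cost equal to $w(\pi^*)$. Since every candidate considered by the algorithm is manifestly a feasible ride for $\R$ (each is a canonical ride built according to Definition~\ref{def:canonical}, possibly via the symmetric reduction), and since the algorithm returns the cheapest such candidate, optimality of the output will follow.

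For the forward regime, i.e.\ $\leftIndex(\pi^*)<\rightIndex(\pi^*)$, Theorem~\ref{thm:path} supplies nodes $M,m\in V_\C\cup\{s_0,t_0\}$ with $s_0\leq M$ and $m\leq t_0$ such that some $(M,m)$-canonical ride is optimal. I would then split on $M<m$ versus $m\leq M$. In the first subcase, Theorem~\ref{thm:uno} ensures that any $(\hat M,\hat m)$-canonical ride is optimal; this is exactly what Phase~I builds at step~\ref{step:comput1:ride}, invoking {\sc RideOnPath\_Outer} on the ``outer'' scenario $\R(\hat M,\hat m)$, whose correctness is guaranteed by Theorem~\ref{thm:nomralized}. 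In the second subcase, Theorem~\ref{thm:due} ensures that for the specific $m$ of Theorem~\ref{thm:path} the $(\hat M_m,m)$-canonical ride is optimal. Phase~II loops over \emph{every} $m\in V_\C\cup\{s_0,t_0\}$ with $m\leq t_0$, so that specific $m$ is visited; since $\hat M_m\geq m$ by construction, Fact~\ref{fact:CR} lets the algorithm build the canonical ride explicitly at step~\ref{step:comput2:ride}, which then enters the cost comparison at step~\ref{step:if_phase2}.

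For the reverse regime, i.e.\ $\leftIndex(\pi^*)>\rightIndex(\pi^*)$, I would invoke Fact~\ref{fact:symmetric}: $\sym(\pi^*)$ is an optimal ride for $\sym(\R)$, and since the symmetry exchanges the roles of $\l$ and $\r$, it satisfies $\leftIndex(\sym(\pi^*))<\rightIndex(\sym(\pi^*))$. Thus the recursive call in Phase~III on $\sym(\R)$ with $\it symmetric=\texttt{true}$ reduces to the forward regime already analysed, and by induction (on the single recursive step) it returns an optimal ride $\pi^*_\sym$ for $\sym(\R)$. Its image $\sym(\pi^*_\sym)$, compared against the current best at steps~\ref{step:compare}--\ref{step:if_phase3}, yields an optimal ride for $\R$ by Fact~\ref{fact:symmetric}. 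The boolean flag makes termination immediate, as the inner call skips Phase~III entirely.

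The main obstacle I anticipate is strictly bookkeeping: one must verify that each canonical ride constructed actually fits the hypotheses of the theorem invoked. Concretely, that $\hat X,\hat Y$ are non-empty in Phase~I so that $\hat M<\hat m$ are well defined (hence $\R(\hat M,\hat m)$ is a legitimate ``outer'' instance), and that $\hat Z_m$ is non-empty for the critical $m$ in Phase~II so that $\hat M_m\geq m$ and Fact~\ref{fact:CR} applies. Both non-emptiness claims are already embedded in the statements of Theorems~\ref{thm:uno} and~\ref{thm:due}, under the conditional hypothesis that some $(M,m)$-canonical ride is optimal—which is precisely what Theorem~\ref{thm:path} delivers in the forward regime. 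Once this case analysis is assembled and the ``take the cheapest'' step is noted, the correctness of {\sc RideOnPath\_Inner} follows.
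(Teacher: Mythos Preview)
Your proposal is correct and follows essentially the same route as the paper's proof: case split on whether some optimal ride has $\leftIndex<\rightIndex$, invoke Theorem~\ref{thm:path} together with Theorems~\ref{thm:uno} and~\ref{thm:due} to cover Phases~I and~II in the forward case, and reduce the reverse case to the symmetric scenario via Fact~\ref{fact:symmetric} for Phase~III. Your explicit subcase split on $M<m$ versus $m\leq M$ and your remarks on the non-emptiness of $\hat X,\hat Y,\hat Z_m$ make the bookkeeping slightly more transparent than the paper's version, but the argument is the same.
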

\begin{proof}
Let us distinguish between two mutually exclusive cases:
\begin{enumerate}
\item[\emph{(1)}] $\R$ admits an optimal ride $\pi$ with $\leftIndex(\pi) < \rightIndex(\pi)$,
\item[\emph{(2)}] Every optimal ride $\pi$ for $\R$ is such that $\leftIndex(\pi) > \rightIndex(\pi)$.
\end{enumerate}

For \emph{(1)}, by combining  Theorem \ref{thm:path} with Theorem \ref{thm:uno} and Theorem \ref{thm:due}, we get that either
any $(\hat M, \hat m)$-canonical ride  is optimal, or there is a node $m\in   V_{\mathcal{C}}\cup\{s_0,t_0\}$ for which the $(\hat M_m,  m)$-canonical ride 
is optimal.
For \emph{(2)}, we notice that  $\sym(\R)$ admits an optimal ride that meets the condition of case \emph{(1)}. This implies that we can reduce case \emph{(2)} to
case \emph{(1)} by exploiting Fact \ref{fact:symmetric}.
We can conclude that an optimal ride for $\R$ is one with the smallest cost among any $(\hat M, \hat m)$-canonical ride and every  $(\hat M_m,
m)$-canonical ride, for every value of $m$ in $V_{\mathcal{C}}\cup\{s_0,t_0\}$, both for $\R$ and for $\sym(\R)$.

Note that {\sc RideOnPath\_Inner} exhaustively searches among all the possible candidate optimal rides listed above. Indeed, during Phase I,
the algorithm computes an $(\hat M, \hat m)$-canonical ride. During Phase II, the algorithm computes the best $(\hat M_m,  m)$-canonical ride,
for all possible values for $m$. Finally, during Phase III, the algorithm repeats the same computation for $\sym(\R)$. The algorithm returns
the ride  with the smallest cost among the ones which have been calculated. Hence the claim follows.
\end{proof}


\subsection{Implementation issues and running time}\label{sec:implementation}

In this section we analyze a concrete implementation and the corresponding running time of the algorithms we have proposed.
%
In fact, our goal is to prove the following theorem.

\begin{theorem}\label{thm:impl}
Let $\mathcal{R}=\tuple{G,(s_0,t_0),\mathcal{C}}$ be a ride-sharing scenario where $G=(V,E,w)$ is a path. Then, an optimal ride for $\R$
(together with its cost) can be computed in time $O(|\C| \log|\C|+|V|)$.
\end{theorem}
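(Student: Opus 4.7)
The plan is to implement both algorithms so that no single step exceeds the target budget. Write $q = |\C|$ and $n = |V|$. First I would perform an $O(n)$ preprocessing that, for each node $i \in V$, stores the prefix sum $W(i) = \sum_{j=1}^{i-1} w(\{j,j+1\})$; the weighted distance between any two nodes $u, v$ on the path is then $|W(u) - W(v)|$ and is available in $O(1)$. I would also sort $\C$ once by starting node and once by terminating node, each at cost $O(q \log q)$.

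For {\sc RideOnPath\_Outer}, the only nontrivial step is {\sc Normalize}. Using $\C$ sorted by starting node, the filter of step~\ref{alg:step1}, the merge loop of steps~\ref{alg:step3:1}--\ref{alg:step3:2}, and the domination pruning of steps~\ref{alg:step2:1}--\ref{alg:step2:2} can all be realized by a single left-to-right sweep that maintains a stack of requests already in normal form. Thus {\sc Normalize} runs in $O(q)$ time once the sort is done. The output ride at step~\ref{alg:lastStep} has the form $s_0 \mapsto s_1 \mapsto t_1 \mapsto \cdots \mapsto s_h \mapsto t_h \mapsto t_0$; by the proof of Theorem~\ref{thm:nomralized} every node of $V$ is visited at most three times, so the ride has length $O(n)$ and can be written out in $O(n + q)$ time. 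The total cost of {\sc RideOnPath\_Outer} is therefore $O(q \log q + n)$.

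For {\sc RideOnPath\_Inner} the risk is Phase~II, which loops over $\Theta(q)$ candidate values of $m$. I would never construct the associated canonical rides explicitly; instead I would compute only their \emph{costs} using the prefix sums (five $O(1)$ distance queries per ride, by Fact~\ref{fact:CR}), maintaining a pointer to the minimizing $m$ and materializing the associated ride only once, at the very end. To enumerate the $\hat M_m$ efficiently, I would first precompute, for every node $m \in \{s_0,t_0\} \cup V_\C$, the quantity $f(m) = \max\{ s : (s,t)\in \C,\ t < m \}$ (empty max treated as $-\infty$) by a merged sweep of the $m$-values against $\C$ sorted by terminating node; then $\hat M_m$ is the smallest element of $\{s_0,t_0\} \cup V_\C$ that is at least $\max(s_0, m, f(m))$, retrieved by a binary search in $O(\log q)$. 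An analogous sweep delivers the sets $\hat X$ and $\hat Y$ of Theorem~\ref{thm:uno}, so $\hat M$ and $\hat m$ are available in $O(q \log q)$ and the $(\hat M,\hat m)$-canonical ride of step~\ref{step:comput1:ride} is produced by one call to {\sc RideOnPath\_Outer} on $\R(\hat M, \hat m)$, itself running in $O(q \log q + n)$. Phase~III is a single recursive call on $\sym(\R)$ with the \emph{symmetric} flag set to \texttt{true}, which skips its own Phase~III; the relabeling takes $O(n + q)$ and the call again costs $O(q \log q + n)$.

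The main obstacle I anticipate is making Phase~II compact: ensuring that all $O(q)$ values $\hat M_m$, together with their corresponding canonical ride costs, are retrievable in total $O(q \log q)$ time, and that the single best ride can be written out within the $O(n)$ output budget. Both reduce to the fact---visible in the proofs of Theorems~\ref{thm:nomralized} and~\ref{thm:path}---that any optimal ride on a path visits every node $O(1)$ times, hence has total length $O(n)$; combined with the $O(1)$ distance queries afforded by the prefix sums, this yields the claimed overall bound $O(q \log q + n) = O(|\C|\log|\C| + |V|)$.
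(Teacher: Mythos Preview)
Your proposal is correct and follows essentially the same architecture as the paper: prefix sums for $O(1)$ distance queries, one $O(q\log q)$ sort, an $O(q)$-after-sorting implementation of {\sc Normalize}, and an $O(q\log q)$ Phase~II that computes only costs and materializes a single ride at the end. The concrete data structures differ slightly: for {\sc Normalize} the paper iterates over the sorted node set maintaining three sets $S_1,S_2,S_3$ (a heap and two lists) and appeals to a structural claim about $L(\hat\C)$ and $R(\hat\C)$, whereas your single stack-based sweep is the more elementary interval-merging argument; for $\hat M_m$ the paper proves a claim characterizing it via $Q_m=\{(s',t'):t'<m<s'\}$ and maintains a min--max heap, while your quantity $f(m)$ is the same characterization rephrased (note that $\max(s_0,m,f(m))$ already lies in $\{s_0,t_0\}\cup V_\C$, so your binary search is in fact redundant). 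Either set of choices yields the stated $O(|\C|\log|\C|+|V|)$ bound.
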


Note that checking whether an instance fits the  ``outer'' or the ``inner'' case is feasible in $O(|\C|)$. Then, we show that {\sc
RideOnPath\_Outer} and {\sc RideOnPath\_Inner} can be made to run in $O(|\C| \log |\C|+|V|)$.

\def\D{\mathcal{D}}

\subsubsection{{\sc RideOnPath\_Outer}}\label{sec:implementation:outer}

The running time of {\sc RideOnPath\_Outer} is essentially given by the running time of {\sc Normalize}. In particular, note that, in the case
where $s_0>t_0$, there is no need to materialize the symmetric scenario $\sym(\pi)$, since we can work on the original scenario by just
defining a function mapping each node $v\in V$ to its symmetric counterpart $\sym(v)=n-v+1$.

Concerning the implementation of {\sc Normalize}, we have first to build the set $\hat\C$ consisting of all requests $(s,t)$ with $t < s$ (cf.
step~\ref{alg:step1}).
Actually, we propose to sort these requests in order of starting node and, accordingly, we shall assume that $\hat \C = \{(s_1,t_1),
(s_2,t_2),\ldots, (s_{|\hat \C|},t_{|\hat \C|})\}$ holds with $s_i \leq s_j$ whenever $i < j$.
Similarly, we sort the nodes in  $V_{\C} \cup \{s_0, t_0\}$, and hence we assume that $V_{\C} \cup \{s_0, t_0\} = \{w_1, w_2,\ldots, w_r\}$
holds with $w_i \leq w_j$ whenever $i < j$. Moreover, for each node $w_i  \in V_{\C} \cup \{s_0, t_0\}$, we define the set $F(w_i) = \{j \mid
(s_j, t_j)\in \hat\C \wedge (w_i = s_j  \mbox{ or } w_i=t_j)\}$, maintained as linked list. And, finally, for each element $j$ in $F(w_i)$ we
keep a label $l_{ij}\ \in \{{\sf s}, {\sf t}\}$ denoting whether $w_i$ is a starting ($\sf s$) or a terminating ($\sf t$) node of request $j$.
Note that step~\ref{alg:step1} plus the construction of such data structures are clearly feasible in $O(|\C|\log |\C|)$.

Consider now the steps \ref{alg:step3:1}-\ref{alg:step3:2} and \ref{alg:step2:1}-\ref{alg:step2:2}. For any set of requests $\D$ on $G$ and
every node $v\in V_{\D}$, let $T^{1}_{v}(\D) = \{(s,t)\in \D \mid t = v < s\}$, $T^{2}_{v}(\D) = \{(s,t)\in \D \mid t < v < s\}$, and
$T^{3}_{v}(\D) = \{(s,t)\in \D \mid t < v = s\}$. Moreover, let $L(\D) = \{v\in V_{\D} \mid T^{1}_{v}(\D) \neq \emptyset \mbox{ and
}T^{2}_{v}(\D) = T^{3}_{v}(\D) = \emptyset\}$, and $R(\D) = \{v\in V_{\D} \mid  T^{1}_{v}(\D) = T^{2}_{v}(\D) = \emptyset \mbox{ and }
T^{3}_{v}(\D) \neq \emptyset\}$.
We use the following technical ingredient.

\begin{claim}\label{claim:norm}
Let $\C^* = \{(s^*_1, t^*_1), (s^*_2, t^*_2),\ldots, (s^*_h, t^*_h)\}$ be the output of {\sc Normalize}. Then, the following properties hold:
\begin{enumerate}
\item[(1)] $L(\hat\C) = \{t^*_1, t^*_2,\ldots, t^*_h\}$ and $R(\hat\C) = \{s^*_1, s^*_2,\ldots, s^*_h\}$;\label{claim:norm:1}
\item[(2)] $s^*_i = \min_{v\in R_i} v$, where $R_i = \{v\in R(\hat\C) \mid v \geq t^*_i\}$, for every $1 \leq i \leq
    h$.\label{claim:norm:2}
\end{enumerate}
\end{claim}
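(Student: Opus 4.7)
\emph{The plan.} I would prove both parts by identifying {\sc Normalize} as an algorithm that collapses each \emph{overlap component} of $\hat\C$---where two requests with $t<s$ are overlap-related iff their intervals $[t,s]$ share a node---into a single representative request. Letting $C_1,\ldots,C_h$ be the overlap components of $\hat\C$ and setting $T_{C_j}=\min_{(s,t)\in C_j}t$, $S_{C_j}=\max_{(s,t)\in C_j}s$, the target identity is $\C^*=\{(S_{C_j},T_{C_j})\}_{j=1}^h$. This would be verified by observing that the merge step (condition $t'\le t\le s'\le s$) fires precisely on overlapping pairs where neither interval contains the other and replaces them with the request covering their union, whereas the removal step eliminates requests whose intervals are contained in another; both operations preserve the overlap-partition of the current set of requests and the union $[T_{C_j},S_{C_j}]$ of each component. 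Since the terminating $\C^*$ is in normal form with pairwise non-overlapping intervals, every component must have collapsed to one surviving request, giving $t^*_i=T_{C_i}$ and $s^*_i=S_{C_i}$ after reindexing by left-to-right order.

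\emph{Claim~(1).} For $\{t^*_1,\ldots,t^*_h\}\subseteq L(\hat\C)$: $t^*_i=T_{C_i}$ is the terminating node of some request in $C_i$, so $T^1_{t^*_i}(\hat\C)\neq\emptyset$; and any $(s,t)\in\hat\C$ with $t<t^*_i\le s$ would share the node $t^*_i$ with that request, forcing $(s,t)\in C_i$ with terminating $t<T_{C_i}$, contradicting minimality. Thus $T^2_{t^*_i}(\hat\C)=T^3_{t^*_i}(\hat\C)=\emptyset$, so $t^*_i\in L(\hat\C)$. Conversely, if $v\in L(\hat\C)$, then $v\in V_{\hat\C}$ lies in some component's union $[T_{C_i},S_{C_i}]$ with $v\ge T_{C_i}$. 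Supposing $v>T_{C_i}$, a shortest chain in $C_i$ from a request covering $v-1$ to one containing $v$ must pass through a request $(s,t)$ with $t<v\le s$, since two intervals in $\hat\C$ sitting entirely in the ``strictly-left-of-$v$'' or ``at-or-right-of-$v$'' halves cannot share a node; this request yields either $T^2_v(\hat\C)\neq\emptyset$ (if $s>v$) or $T^3_v(\hat\C)\neq\emptyset$ (if $s=v$), contradicting $v\in L$. Hence $v=t^*_i$; a symmetric argument gives $R(\hat\C)=\{s^*_1,\ldots,s^*_h\}$.

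\emph{Claim~(2) and obstacle.} The normal-form inequalities $t^*_i<s^*_i<t^*_{i+1}$ give $s^*_1<\cdots<s^*_h$, and for each $i$ one has $s^*_j<t^*_{j+1}\le t^*_i$ for $j<i$ and $s^*_j\ge s^*_i>t^*_i$ for $j\ge i$. Hence $R_i=\{s^*_i,\ldots,s^*_h\}$ and $\min_{v\in R_i}v=s^*_i$. The main obstacle is making the ``component-collapse'' property of {\sc Normalize} rigorous: one needs an invariant, preserved by every iteration of the two while-loops, stating that the current set of requests is partitioned according to the overlap components of $\hat\C$ with each component's union intact. The chain-of-overlaps step in the converse of Claim~(1) is the other delicate point, but it follows from the observation that two intervals can be adjacent in such a chain only if they share an integer node, forcing the existence of a ``bridging'' request at $v$.
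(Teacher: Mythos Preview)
Your approach is correct but takes a genuinely different route from the paper.

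The paper proves part~(1) \emph{operationally}: it shows that the set $L(\cdot)$ is invariant under each iteration of the merge loop and each iteration of the removal loop, so that $L(\hat\C)=L(\C^*)$; since $\C^*$ is in normal form, $L(\C^*)=\{t^*_1,\ldots,t^*_h\}$ is then immediate. Each invariance check is a short local case analysis on the single pair being merged or removed. Your approach is \emph{structural}: you first establish the closed-form description $\C^*=\{(S_{C_j},T_{C_j})\}_j$ via the overlap components of $\hat\C$, and then verify $L(\hat\C)=\{T_{C_j}\}$ directly by the straddling-chain argument. Your characterization is correct; the invariant you need is that throughout the run, (i) every current request lies inside some $[T_{C_j},S_{C_j}]$, (ii) the union of current requests inside each $[T_{C_j},S_{C_j}]$ is all of $[T_{C_j},S_{C_j}]$, and (iii) those requests remain overlap-connected---all three are preserved by both the merge and the removal step, and together with Lemma~\ref{claim:normalize} (normal form, hence pairwise non-overlapping at termination) they force exactly one survivor per component, namely $(S_{C_j},T_{C_j})$. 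Your left/right/straddling argument for the converse of~(1) is also sound: a left request ($s<v$) and a right request ($t\ge v$) cannot share a node, so any overlap-chain from one side to the other must contain a straddler, contradicting $T^2_v=T^3_v=\emptyset$.

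What each approach buys: the paper's route is shorter for this specific claim, since it only needs two local invariance checks and nothing about component structure. Your route does more work up front but yields a stronger intermediate result---an explicit description of the output of \textsc{Normalize}---which makes the claim an immediate corollary and could be reused elsewhere. For part~(2), both arguments coincide: it is a direct consequence of the normal-form inequalities $t^*_i<s^*_i<t^*_{i+1}$.
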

\begin{proof}
For \emph{(1)}. It is immediate that $L(\C^*) = \{t^*_1, t^*_2,\ldots, t^*_h\}$. Hence, our proof consists in showing that $L(\hat\C) =
L(\C^*)$.
Let $\hat\C = \D_0, \D_1, \ldots, \D_p$ be the sequences of requests produced during the execution of steps~\ref{alg:step3:1} and
\ref{alg:step3:2}, i.e., for every $0\leq i\leq p-1$, $\D_{i+1}$ is the set of requests obtained from $\D_{i}$ after performing one iteration
of the while loop. We show by induction that $L(\D_i) = L(\D_0)$, for every $0\leq i \leq p$. The base case trivially holds. Let us suppose
that, for a given $0\leq k \leq p$, $L(\D_k) = L(\D_0)$ holds. We must show that $L(\D_{k+1}) = L(\D_0)$ holds, too.
Let $(s,t), (s',t')$ be two requests in $\D_{k}$ such that $t<s$, $t'<s'$ and $t' \leq t \leq s' \leq s$; and, let $\D_{k+1} = (\D_{k}
\setminus \{(s,t), (s',t')\}) \cup \{(s,t')\}$.
Note that every node $v$ such that $v < t'$ or $s < v$ belongs to $L(\D_{k+1})$ if, and only if, it belongs also to $L(\D_k)$; every node $v$
such that $t' < v \leq s$ belongs neither to $L(\D_k)$ nor to $L(\D_{k+1})$; finally, $t'$ belongs to $L(\D_{k+1})$ if, and only if, it belongs
to $L(\D_k)$. We can conclude that $L(\D_p)=L(\D_0)$.
Now, let $\D_p, \D_{p+1},\ldots,\D_q = \C^*$  be the sequences of requests produced during the execution of steps~\ref{alg:step2:1} and
\ref{alg:step2:2}, i.e., for every $p\leq i\leq q-1$, $\D_{i+1}$ is the set of requests obtained from $\D_{i}$ after performing one iteration
of the while loop. Again, we show by induction that $L(\D_i) = L(\D_p)$, for every $p\leq i \leq q$.
The base case trivially holds. Let us suppose that for a given $p\leq k \leq q$, $L(\D_k) = L(\D_p)$ holds. We must show that $L(\D_{k+1}) =
L(\D_p)$ holds, too. Let $(s,t), (s',t')$ be two requests in $\D_{k}$ such that $t'\leq t < s \leq s'$;  and let $\D_{k+1} = \D_{k} \setminus
\{(s,t)\}$. Note that every node $v$ such that $v < t'$ or $s' < v$ belongs to $L(\D_{k+1})$ if, and only if, it belongs also to $L(\D_k)$;
every node $v$ such that $t' < v \leq s'$  belongs neither to $L(\D_k)$ nor to $L(\D_{k+1})$; finally, $t'$ belongs to $L(\D_{k+1})$ if, and
only if, it belongs also to $L(\D_k)$. We can finally conclude that $L(\D_q)=L(\D_p)=L(\D_0)$.
Similar arguments can be used to show that  $R(\hat\C) = \{s^*_1, s^*_2,\ldots, s^*_h\}$

\smallskip

For \emph{(2)}. By the way of contradiction, let us assume that the claim is not true. Let $j$ be the smallest index such that $s^*_j > s^*_k$,
where $s^*_k = \min_{v\in R_j} v$. This implies that $s^*_j>s^*_k\geq t^*_j$, which is impossible since $\C^*$ is in normal form (cf.
Lemma~\ref{claim:normalize}).
\end{proof}

According to Claim~\ref{claim:norm}, in order to determine the set of requests produced as output by {\sc Normalize}, we can iterate through
the nodes in $V_{\C} \cup \{s_0, t_0\}$ in order of increasing index, starting from $w_1$. We maintain three sets of indexes of requests in
$\hat \C$, namely $S_1$, $S_2$ and $S_3$. Moreover, we maintain two sets of nodes $Q_L$ and $Q_R$. Initially, $S_1=S_2=S_3=\emptyset$ and $Q_L
= Q_R = \emptyset$. At the beginning of $k$-th iteration, we set $S_3$ to the empty set, and we move all the elements in $S_1$ to $S_2$. Then,
we move from $S_2$ to $S_3$ every $j \in F(w_k)$ with $l_{kj} = {\sf s}$, and we add to $S_1$ every $j \in F(w_k)$ with $l_{kj} = {\sf t}$.
Thus, at the end of the iteration, $S_1$, $S_2$ and $S_3$ contain all the elements in $T^1_{w_k}$,  $T^2_{w_k}$ and $T^3_{w_k}$, respectively.
Hence, at the end of the $k$-th iteration, if $S_1 \neq \emptyset$, $S_2 = \emptyset$ and $S_3 = \emptyset$, then we add $w_k$ to $Q_L$;
otherwise, if $S_1 = \emptyset$, $S_2 = \emptyset$ and $S_3 \neq \emptyset$, then we add $w_k$ to $Q_R$. We continue in this fashion until we
run out of nodes. Because of Claim~\ref{claim:norm}, after we iterate through all nodes, $Q_L$ and $Q_R$ consist of all nodes in $L(\hat\C)$
and $R(\hat\C)$, respectively.
Eventually, in order to build the normalized scenario, we can just pair, by Claim~\ref{claim:norm}, every node $t$ in $L(\hat C)$ with the
smallest node $s$ in $R(\hat \C)$ larger than $t$.

%
%

Note that every request in $\hat \C$ is added and removed exactly once from each of the three sets $S_1$, $S_2$ and $S_3$. Moreover, each node
in $V_{\hat\C}$ is added and removed at most once from either $Q_L$ or $Q_R$. Hence, the time taken by the procedure is at most $O(|\hat\C|)$
times the maximum cost for performing each operation. If the set $S_2$ is maintained as a binary min-heap,  where the key of each request is
its starting node,  removing an element from $S_2$ with label {\sf s} corresponds to extract the element with smallest key,  and both the
insertion and the removal from $S_2$ can be made to run in time $O(\log |\hat \C|)$. On the other side, since each removal from $S_1$ and $S_3$
is performed without making any distinction among elements, we can easily keep constant the cost of each insertion and removal from $S_2$, by
maintaining both $S_1$ and $S_3$ as a linked list. Finally, if both $Q_L$ and $Q_R$ are maintained as a binary min-heap, where the key of each
node is the node itself, removing the smallest node from the set corresponds to extract the element with smallest key, and both the insertion
and the removal can be made to run in time $O(\log |\hat \C|)$. Summarizing, every insertion and removal takes at most $O(\log |\hat \C|)$.
Thus, our implementation of {\sc RideOnPath\_Outer} takes total time $O(|\hat \C|\log |\hat \C|)$. Since $\hat C\subseteq \C$, the algorithm
takes $O(|\C|\log |\C|)$.

Actually, note that the algorithm produces a result that is given in the form $s_0\mapsto x_1\mapsto ... \mapsto x_m\mapsto t_0$, where
$x_1,...,x_m$ are nodes of the graph and $m=O(|\C|)$ holds. Basically, this is a succinct representation consisting of listing (at least) all
the nodes where the current direction of traversing the path has to be reverted. Of course, to explicitly build the ride and compute the
associated cost takes an extra $O(|V|)$ time.

\subsubsection{{\sc RideOnPath\_Inner}}\label{sec:implementation:inner}

Let us now move to analyze {\sc RideOnPath\_Inner} and let us focus on Phase I and Phase II (again, working on the symmetric scenario is
immediate).
Phase I starts with the computation of $\hat M$ and $\hat m$. Let us discuss the procedure to compute $\hat M$. According to Theorem
\ref{thm:uno}, $\hat M$ is defined as the smallest node in $\hat X$. Hence, in order to compute $\hat M$, we iterate through the nodes in
$V_{\C} \cup \{s_0, t_0\}$ in order of increasing index, until we find a node in $\hat X$. There is a easy method to determine if a node
belongs to $\hat X$. For every node $w_i \in V_{\C} \cup \{s_0, t_0\}$, let $P_{w_i} = \{(s,t)\in \C \mid t \leq w_i < s\}$. It is easy to see
that a node $w_i \in V_{\C} \cup \{s_0, t_0\}$ belongs to $\hat X$ if, and only if,
%
$w_i \geq s_0$ and $P_{w_i} = \emptyset$. Note that $P_{w_i} \subseteq \hat\C$, where $\hat C$ is the set of requests built in Section
\ref{sec:implementation:outer}. Hence, we can write $P_{w_i} = \{(s,t)\in \hat\C \mid t \leq w_i < s\}$ and in the following we use the same
datastructures discussed for the implementation of {\sc RideOnPath\_Outer}.

More specifically, the algorithm works as follows. We iterate through the nodes in $V_{\C} \cup \{s_0, t_0\}$ in order of increasing index,
starting from $w_1$. Throughout the iteration, we maintain a set $S$ of indexes of requests in $\hat \C$. Initially $S=\emptyset$; during the
$k$-th iteration, we add to $S$ every $j \in F(w_k)$ with $l_{kj} = {\sf t}$, and we remove from $S$ every $j \in F(w_k)$ with $l_{kj} = {\sf
s}$.
Note that, at the end of the iteration,
$S$ contains all the elements in $P_{w_k}$, so that if 
$w_k \geq s_0$ and $S = \emptyset$, then we terminate by concluding that $w_k$ is the smallest element in $\hat X$. Given the existence of
$\hat M$, such procedure always terminates.
For the complexity analysis, observe that every request in $\hat \C$ is added and removed from $S$ exactly once. Hence, the time taken by the
procedure is at most $O(|\hat\C|)$ times the maximum cost for performing each operation. If the set $S$ is maintained as a binary min-heap,
where the key of each request is its starting node, removing an element from $S$ with label {\sf s} corresponds to extract the element with
smallest key, and both the insertion and the removal can be made to run in time $O(\log |\hat \C|)$. A similar approach can be used to compute
$\hat m$.
Thus, Phase I takes total time $O(|\hat \C|\log |\hat \C|)$, hence $O(|\C| \log |\C|)$, to define the pair $\hat M,\hat m$. A canonical ride
with its associated cost can be then computed in $O(|\C| \log |\C|+|V|)$, since the dominant operation is the invocation of the algorithm for
the outer case (cf. Section \ref{sec:implementation:outer}).

\smallskip

Phase II starts with the computation of $\hat M_{w_i}$, for every node $w_i$ in $V_{\mathcal{C}}\cup\{s_0,t_0\}$ with $w_i \leq t_0$. For an
efficient computation, we use the following technical claim.

\begin{claim}\label{claim:MM}
For every node $m \in V_{\mathcal{C}}\cup\{s_0,t_0\}$ with $m \leq t_0$, let $\hat M_m$ be the node as defined in Theorem \ref{thm:due}.
Consider the set $Q_m = \{(s', t') \in \C  \mid  t' < m < s'\}$, and let
$$
u_m = \begin{cases}
\max\{m,\; s_0\} & \mbox{if }\;\;  Q_m = \emptyset,\\
\max\{s_0, \;\max_{(s', t')\in Q_m} s'\} & \mbox{otherwise.}
\end{cases}
$$
Then
$\hat M_m = u_m$.
\end{claim}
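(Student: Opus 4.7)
The plan is to prove $\hat{M}_m = u_m$ by the standard two-step approach: first I would show that $u_m \in \hat Z_m$, and then show that every $z \in \hat Z_m$ satisfies $z \geq u_m$. Combining these yields $\hat M_m = \min_{\hat z \in \hat Z_m}\hat z = u_m$.

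For the first step (membership of $u_m$ in $\hat Z_m$), I would verify the defining conditions of $\hat Z_m$ one by one. Membership in $\{s_0, t_0\}\cup V_\C$ is immediate by case analysis on the definition of $u_m$: each of $s_0$, $m$ (which is iterated over the set $V_\C\cup\{s_0,t_0\}$ in Phase II), and every $s'$ appearing in some $(s',t')\in Q_m\subseteq \C$ belongs to $\{s_0,t_0\}\cup V_\C$. The inequalities $u_m\geq m$ and $u_m\geq s_0$ follow directly: in the case $Q_m=\emptyset$ from the explicit $\max\{m,s_0\}$; in the case $Q_m\neq\emptyset$, from the fact that every $(s',t')\in Q_m$ satisfies $s'>m$, so that $\max_{(s',t')\in Q_m}s'>m$ and hence $u_m>m$ as well as $u_m\geq s_0$. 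For the non-existence clause, I would argue by contradiction: suppose some $(s,t)\in\C$ satisfies $t<m$ and $u_m<s$. Since $u_m\geq m$ we would get $m\leq u_m<s$, placing $(s,t)$ in $Q_m$; but then the very definition of $u_m$ forces $u_m\geq s$, a contradiction.

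For the second step (minimality), I would fix an arbitrary $z\in\hat Z_m$ and show $z\geq u_m$. By definition of $\hat Z_m$ we already have $z\geq m$ and $z\geq s_0$. When $Q_m=\emptyset$ this alone gives $z\geq\max\{m,s_0\}=u_m$. When $Q_m\neq\emptyset$, fix any $(s',t')\in Q_m$; since $t'<m$, the non-existence clause applied to $(s',t')$ rules out $z<s'$, so $z\geq s'$. Taking the maximum over $Q_m$ and combining with $z\geq s_0$ yields $z\geq\max\{s_0,\max_{(s',t')\in Q_m}s'\}=u_m$.

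The argument is essentially a bookkeeping exercise, and I do not anticipate a genuine obstacle. The only subtlety worth flagging is that one must first establish $u_m\geq m$ in both cases before invoking it inside the no-request contradiction, since the displayed formula for $u_m$ when $Q_m\neq\emptyset$ is written as $\max\{s_0,\max_{(s',t')\in Q_m}s'\}$ without an explicit $m$; the inequality $s'>m$ for every member of $Q_m$ is precisely what lets this implicit bound be recovered.
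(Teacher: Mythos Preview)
Your proposal is correct and follows essentially the same approach as the paper: both arguments establish that $u_m\in\hat Z_m$ and that $u_m$ is minimal in $\hat Z_m$, with the same case split on whether $Q_m$ is empty. Your minimality step is phrased directly (every $z\in\hat Z_m$ satisfies $z\geq u_m$) whereas the paper argues the contrapositive (every node strictly smaller than $u_m$ violates one of the defining conditions), but the content is identical.
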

\begin{proof}
We prove the claim by showing that $u_m$ belongs to $\hat Z_m$, and every other node $v\in V_{\hat\C}$ such that $v < u_m$ does not belong to
$\hat Z_m$. This implies that $u_m$ is the smallest element in $\hat Z_m$, hence it coincides with $\hat M_m$. Let us recall that $\hat Z_m$ is
the set of all nodes $z$ in  $ \{s_0,t_0\}\cup V_\C$ such that (1) $m \leq z \mbox{ and } s_0 \leq z$; and (2) $\nexists (s,t)\in
\mathcal{C}\mbox{ with } t < m \mbox{ and } z < s$.

Assume that $Q_m = \emptyset$. In this case $u_m = \max\{m, s_0\}$, and every node in  $ \{s_0,t_0\}\cup V_\C$ satisfies condition (2). It is
easy to verify that $u_m$ always satisfies condition (1) and every node strictly smaller than $u_m$ does not belong to $\hat Z_m$.
Assume now that $Q_m \neq \emptyset$.  In this case $u_m = \max\{s_0, \;\max_{(s', t')\in Q_m} s'\}$. Also in this case, it is easy to verify
that $u_m$ always satisfies condition (1). By the way of contradiction, let us assume that condition (2) is not satisfied, that is, there
exists a request $(s,t)$ with $t < m$ and $u_m < s$. Note that such request necessarily belongs to $Q_m$, which implies that $u_m \geq s$, a
contradiction. Finally, let us prove that $u_m$ is the smallest value in $\hat Z_m$ by showing that any other node strictly smaller than $u_m$
violates one of the two conditions. If $s_0 \geq (\max_{(s', t')\in Q_m} s')$ then $u_m = s_0$; in this  case every node strictly smaller than
$s_0$ does not satisfies condition (1). Instead,  if $s_0 < (\max_{(s', t')\in Q_m} s')$ then $u_m = (\max_{(s', t')\in Q_m} s')$. In this
latter case, let $(s,t)$ be the request in $Q_m$ with the largest starting node, i.e.,  $t < m$ and $u_m = s$. If we take any other node $v$
strictly smaller than $u_m$, than we get $t < m$ and $v < s = u_m$, hence violating (2).
\end{proof}

According to Claim \ref{claim:MM}, for every node $w_i \in V_{\C} \cup \{s_0, t_0\}$,  $\hat M_{w_i}$ is defined as the maximum between $w_i$
and $s_0$, if $Q_{w_i}$ is not empty, or the maximum between $s_0$ and $\max_{(s', t')\in Q_{w_i}} s'$, otherwise.
So, the dominant operation is the computation of $Q_{w_i}$. To this end, for every $w_i \in V_{\C} \cup \{s_0, t_0\}$, we iterate through the
nodes in $ V_{\C} \cup \{s_0, t_0\}$ in order of increasing index. Note that $Q_{w_i} \subseteq \hat\C$, hence equivalently we can write
$Q_{w_i} = \{(s', t') \in \hat\C  \mid  t' < w_i < s'\}$; this implies that, in order to compute $Q_{w_i}$, we need of only the requests in
$\hat\C$ and we can use the usual data structures.

More specifically, we iterate through the nodes in $V_{\C} \cup \{s_0, t_0\}$ in order of increasing index, starting from $w_1$. Initially, we
define a set $S=\emptyset$. During the $k$-th iteration, we remove from $S$ every $j \in F(w_k)$ with $l_{kj} = {\sf s}$, and if $k \geq 2$ we
add to $S$ every $j \in F(w_{k-1})$ with $l_{(k-1)j} = {\sf t}$.
Note that, at the end of the iteration, $S$ contains all the elements in $Q_{w_k}$. Thus, if $S=\emptyset$, then we set $M_{w_k}$ to $\max\{m,
s_0\}$, otherwise we set $M_{w_k}$ to $\max\{s_0, \;\max_{(s', t')\in S}\; s'\}$. In the latter case, we need to calculate $\max_{(s', t')\in
S}\; s'$, i.e., to search in $S$ for the request with the largest starting node. We continue in this fashion until we run out of nodes.
For the complexity analysis, observe that every request in $\hat \C$ is added and removed from $S$ exactly once. Moreover, at the end of each
iteration, we need to search in $S$ for the request with the largest starting node, in order to calculate $\max_{(s', t')\in S}\; s'$. Hence,
the time taken by the procedure is at most $O(|\hat\C|)$  times the maximum cost for performing each operation. If the set $S$ is maintained as
a binary min-max-heap,  where the key of each request is its starting node,  removing an element from $S$ with label {\sf s} corresponds to
extract the element with smallest key, hence both the insertion and the removal can be made to run in time $O(\log |\hat \C|)$; moreover,
calculating $\max_{(s', t')\in S}\; s'$ corresponds to search for the element with largest key,  which takes only constant time. Thus, the
computation of $\hat M_{w_i}$, for every node $w_i \in V_{\C} \cup \{s_0, t_0\}$, takes a total time $O(|\hat \C|\log |\hat \C|)$, hence
$O(|\C|\log |\C|)$.

\smallskip

Now, note that the computation of the $(\hat M_m,m)$-canonical ride takes constant time, since by Fact~\ref{fact:CR}, we know that this ride
has the form $s_0\mapsto \hat M_m \mapsto \l(\R)\mapsto \r(\R)\mapsto m\mapsto t_0$. Then, the remaining operation in Phase II is the
comparison between the cost of the given best ride and cost of the current ride. We have already seen that the computation of the cost of rides
built in Phase I can be accommodated in the overall $O(|\C|\log |\C|+|V|)$ cost. Now, we claim that the computation of the cost of the $(\hat
M_m,m)$-canonical ride takes constant time, provided a suitable pre-processing. Indeed, observe that the $(\hat M_m,m)$-canonical ride is
succinctly represented by a constant number of nodes. The idea is then to associate each node $x\in V$ with the value $cw(x)=\sum_{i=2}^x
w(\{i,i+1\})$, which is overall feasible in $O(|V|)$. Then, the cost for a rides moving from a node $x$ to a node $y$, along the unique path as
defined in the notion of canonical ride, is just given by the value $|cw(y)-cw(x)|$. Therefore, with a constant overhead, the cost of the 
$(\hat M_m,m)$-canonical ride can be computed. Putting it all together, Phase II can be implemented in $O(|\C|\log |\C|+|V|)$, too.

\section{Optimal Rides on Cycles}\label{sec:cycles}
\def\mod{\mathtt{\;mod\;}}

In this section, we consider scenarios $\mathcal{R}=\tuple{G,(s_0,t_0),\mathcal{C}}$ such that the underlying graph $G=(V, E, w)$, with
$V=\{1,\ldots,n\}$, is a \emph{cycle}. Formally, for each node $v\in V\setminus \{n\}$, the edge $\{v, v+1\}$ is in $E$; moreover, the edge
$\{n,1\}$ is in $E$; and no further edge is in $E$. Without loss of generality, we assume $s_0=1$.

\subsection{From Cycles to Paths}

The solution approach we shall propose is to reuse the methods we have already developed to deal with scenarios over paths. In this section, we
define the key technical ingredients, and based on them an algorithm will be subsequently illustrated.


Let $\pi$ be a ride on $\mathcal{R}$, and let us associate each of its time steps $i$ with a ``virtual'' node
$\tau_\pi(i)=\pi_i+(\ell_\pi(i)-\min_{j\in\{1,\dots,\len(\pi)\}} \ell_\pi(j))\cdot n$, where $\ell_\pi(1)=0$  and where, for each
$i\in\{2,\dots,\len(\pi)\}$, $\ell_\pi(i)$ is an integer defined as follows:
$$
\ell_\pi(i)=\left\{\begin{array}{ll}
\ell_\pi(i-1)+1 & \mbox{if $\pi_{i-1}=n$ and $\pi_i=1$}\\
\ell_\pi(i-1)-1 & \mbox{if $\pi_{i-1}=1$ and $\pi_i=n$}\\
\ell_\pi(i-1) & \mbox{otherwise}
\end{array}\right.
$$

Intuitively, the function $\tau_\pi$ keeps track of the number of times in which the cycle is completely traversed by the ride, either
clockwise or anti clockwise. Note that $\tau_\pi(i) \mod n = \pi_i$.



Let $\cw(\pi)$ (resp., $\acw(\pi)$) be the maximum (resp., minimum) value of $\tau_\pi(i)$ over all time steps $i\in\{1,\dots,\len(\pi)\}$.
Let $\cwIndex(\pi)$ (resp., $\acwIndex(\pi)$) be the minimum time step $i\in\{1,\dots,\len(\pi)\}$ such that $\tau_\pi(i)=\acw(\pi)$ (resp.,
$\tau_\pi(i)=\cw(\pi)$). 
Note that $1\leq \acw(\pi)\leq n$ always hold, by definition of $\tau_\pi$. In fact, over optimal rides, useful characterizations and bounds
can be derived for both $\acw(\pi)$ and $\cw(\pi)$.

\begin{lemma}\label{lem:boundCW}
An optimal ride $\pi$ exists with $\cw(\pi)\leq 3n$ and $\{\cw(\pi)\ {\tt mod}\ n,\acw(\pi)\ {\tt mod}\ n\}\subseteq V_\mathcal{C} \cup
\{s_0,t_0\}$.
\end{lemma}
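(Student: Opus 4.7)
The plan is to start from any optimal ride $\pi^*$ of $\R$ and, through cost-preserving local transformations, produce an optimal ride $\pi$ that satisfies both desired properties.

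First I handle the containment $\{\cw(\pi)\bmod n,\acw(\pi)\bmod n\}\subseteq V_\C\cup\{s_0,t_0\}$ by adapting the argument of Lemma~\ref{fact:VC}. Suppose $\cw(\pi^*)\bmod n\notin V_\C\cup\{s_0,t_0\}$ and let $i^*=\cwIndex(\pi^*)$. Since $i^*$ is a strict maximum of $\tau_{\pi^*}$, both neighboring time steps sit at virtual position $\cw(\pi^*)-1$, so $\pi^*_{i^*-1}=\pi^*_{i^*+1}$. Removing the one-node excursion at $i^*$ yields, by Fact~\ref{lem:preserving}, a feasible (since $\pi^*_{i^*}\notin V_\C$) and strictly shorter ride; by optimality of $\pi^*$, this must instead be realized as an equally-cheap optimal ride whose $\cw$ is strictly smaller. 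Iterating terminates at an optimal ride with $\cw\bmod n\in V_\C\cup\{s_0,t_0\}$. The symmetric argument at the $\acwIndex$ position takes care of $\acw$, and the two transformations are compatible because they act at disjoint extremes.

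For the bound $\cw(\pi)\leq 3n$, I exploit that $\acw(\pi)\in\{1,\ldots,n\}$ by construction of $\tau_\pi$ (the minimum of $\tau_\pi$ is attained at a time where $\ell_\pi$ is minimized, giving $\tau_\pi=\pi_i\in\{1,\ldots,n\}$). Hence it suffices to bound the span $\cw(\pi)-\acw(\pi)\leq 3n-1$. Since $\cw(\pi)-\acw(\pi)\leq (\max_i\ell_\pi(i)-\min_i\ell_\pi(i)+1)\cdot n-1$, the task reduces to showing that some optimal ride satisfies $\max\ell_\pi-\min\ell_\pi\leq 2$, i.e., the vehicle winds at most twice around the cycle. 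The plan is to argue: if the winding range is at least $3$, then by pigeonhole there exist times $i<j$ with $\pi_i=\pi_j$ and $\tau_\pi(i)-\tau_\pi(j)=\pm n$ (a complete lap between two visits to the same cycle node), so by Fact~\ref{lem:preserving} the sub-ride $\pi[i,j]$ can be excised, producing a strictly cheaper ride and contradicting optimality.

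The main obstacle is verifying that the cut preserves feasibility. A request $(s,t)$ that relies on the removed lap for its satisfaction must still be satisfied by $\pi[1,i]\mapsto\pi[j,\len(\pi)]$. The intuition is that the residual walk still spans at least $2n$ consecutive virtual positions, so every cycle node is visited at least twice in the residual; picking $i,j$ to correspond to the \emph{outermost} repeated node of the middle lap ensures that, for each request endpoint lying inside the removed segment, an earlier/later occurrence exists outside it in the correct order. Formalizing this ``order-preserving'' property through a careful case analysis on where $s$ and $t$ sit relative to $i$ and $j$ is the delicate part of the argument.
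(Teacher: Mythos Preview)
Your treatment of the containment $\{\cw(\pi)\bmod n,\acw(\pi)\bmod n\}\subseteq V_\C\cup\{s_0,t_0\}$ is essentially the paper's argument: iteratively shave off the extreme excursion whenever its node is not in $V_\C\cup\{s_0,t_0\}$.

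For the bound $\cw(\pi)\le 3n$, however, your strategy diverges from the paper's, and the divergence is exactly where your sketch becomes thin. You propose to \emph{excise} a full lap and argue that the residual ride remains feasible because it ``still spans at least $2n$ consecutive virtual positions''. But spanning $2n$ virtual positions only tells you every cycle node is visited; it says nothing about the \emph{order} in which the surviving occurrences appear, and that order is precisely what determines whether a request $(s,t)$ is satisfied. Your plan to pick $i,j$ ``outermost'' is suggestive, but the case analysis you allude to is the entire content of the lemma and is not carried out. There is a further subtlety: your contradiction argument, if it worked, would show that \emph{every} optimal ride has winding at most $2$, which is stronger than what the lemma asserts (existence only) and correspondingly harder to establish.

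The paper sidesteps this difficulty by a \emph{replacement} rather than an excision. Assuming $\acwIndex(\pi)\le\cwIndex(\pi)$ and $\cw(\pi)>3n$, it identifies suitable time steps and builds
\[
\pi'=\pi[1,\acwIndex(\pi)]\mapsto\bigl(\text{two full clockwise laps}\bigr)\mapsto\pi[i'',\len(\pi)],
\]
showing $\pi'\preceq\pi$. The point is that two consecutive clockwise laps realise every ordered pair of cycle nodes, so every request is already satisfied by the inserted middle segment alone; feasibility of $\pi'$ is therefore immediate, with no ordering analysis required. This constructive replacement---substitute the overlong middle by exactly two laps rather than trying to prune one lap away---is the idea that your proposal is missing, and it is what makes the feasibility step trivial.
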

\begin{proof}
Assume that $\pi$ is an optimal ride for $\R$.
Assume that $\cw(\pi)\ {\tt mod}\ n$ (resp., $\acw(\pi)\ {\tt mod}\ n$) is not contained in $V_\mathcal{C} \cup \{s_0,t_0\}$.
Then, let us build a ride $\hat \pi$ from $\pi$ by removing all time steps $i$ such that $\tau_i(\pi)=\cw(\pi)$ (resp.,
$\tau_i(\pi)=\acw(\pi)$).
By definition of $\cw$ (resp. $\acw$), $\hat \pi$ is a feasible ride and $w(\hat \pi)\leq w(\pi)$. Therefore, $\hat \pi$ is an optimal ride,
too. Now, either $\hat \pi$ satisfies the desired condition, or the process can be iterated till a ride $\pi^*$ is obtained such that
$\{\cw(\pi^*)\ {\tt mod}\ n,\acw(\pi^*)\ {\tt mod}\ n\}\subseteq V_\mathcal{C} \cup \{s_0,t_0\}$.

Therefore, let us assume, w.l.o.g., that $\pi$ is an optimal ride with $\{\cw(\pi)\ {\tt mod}\ n,\acw(\pi)\ {\tt mod}\ n\}\subseteq
V_\mathcal{C} \cup \{s_0,t_0\}$. Consider the case where $\acwIndex(\pi)\leq \cwIndex(\pi)$---in fact, a similar argument applies when
$\acwIndex(\pi)>\cwIndex(\pi)$.
Assume, for the sake of contradiction, that $\cw(\pi)> 3n$. Since $\acw(\pi)\leq n$, this means that $\cw(\pi)-\acw(\pi)>2n$, and
hence, $\cwIndex(\pi)- \acwIndex(\pi)>2n$ holds, too. Let $i$ be the maximum time step such that $i\leq \cwIndex(\pi)$ and
$\pi_i=\pi_{\acwIndex(\pi)}$. Moreover, let $i'$ and $i''$ be two time steps with $i<i'<i''$ such that $\pi_i=\pi_{i'}=\pi_{i''}$. In
particular, let $i''$ be the maximum time step such that $\pi_i=\pi_{i'}=\pi_{i''}$. Given the above observations, $i'$ and $i''$ are well
defined. Indeed, starting from the time step $i$, $\pi$ must transverse clockwise the cycle twice. Furthermore, for the same reason, the
following ride
$$
\pi'=\pi[1,\acwIndex(\pi)],(\pi_i+1){\tt mod}\ n,\dots,(\pi_i+2n-1){\tt mod}\ n, \pi[i'',\len(\pi)].
$$

\noindent is such that $\pi'\preceq \pi$. In particular, note that $\pi'$ transverses the cycles twice too, and we have $\cw(\pi')\leq 3n$. In order to conclude the proof, note that $\cw(\pi)\ {\tt mod}\ n=\cw(\pi')\ {\tt mod}\ n$ and $\acw(\pi)\ {\tt mod}\ n=\acw(\pi')\ {\tt
mod}\ n$, and hence $\{\cw(\pi')\ {\tt mod}\ n,\acw(\pi')\ {\tt mod}\ n\}\subseteq V_\mathcal{C} \cup \{s_0,t_0\}$.
\end{proof}

Now, consider the path $G^\circ=(V^\circ,E^\circ,w^\circ)$, where $V^\circ=\{1,\dots,3n\}$ and where $w^\circ$ is the function such that
$w^\circ(\{v,v+1\})=w(\{v\ {\tt mod}\ n,(v+1){\tt mod}\ n\})$.

For each pair of nodes $\alpha,\beta\in V^\circ$ with $\alpha\leq \beta$, let us define $V_{\alpha,\beta}^\circ$ as the set of nodes $v\in
\{\alpha,\dots,\beta\}$ for which no other distinct node $v'\in \{\alpha,\dots,\beta\}$ exists such that $v\mod n=v'\mod n$.
Note that if $\beta< \alpha+n$, then $V_{\alpha,\beta}^\circ=\{\alpha,\dots,\beta\}$; if $\beta\geq \alpha+2n -1$, then
$V_{\alpha,\beta}^\circ=\emptyset$; if $\alpha+n \leq \beta<\alpha+2n-1$, then $V_{\alpha,\beta}^\circ=\{\beta-n+1,\dots,\alpha+n-1\}$.

Moreover, define $\mathcal{C}_{\alpha,\beta}^\circ=\{ (v_s,v_t) \mid (v_s\ {\tt mod}\ n,v_t\ {\tt mod}\ n)\in \mathcal{C}, v_s\in
V_{\alpha,\beta}^\circ, v_t\in V_{\alpha,\beta}^\circ \}.$

\begin{theorem}\label{thm:cp}
Let $\pi$ be a feasible ride for $\mathcal{R}$ with $\cw(\pi)\leq 3n$ and such that $\acwIndex(\pi)\leq \cwIndex(\pi)$ (resp.,
$\acwIndex(\pi)> \cwIndex(\pi)$).
Let $\alpha=\acw(\pi)$ and $\beta=\cw(\pi)$, and let $(s^\circ,t^\circ)=(\alpha,\beta)$ (resp., $(s^\circ,t^\circ)=(\beta,\alpha)$).
Then, the ride $$\tau_\pi(1),...,\tau_\pi(\len(\pi))$$ is feasible for
$\tuple{G^\circ,(\tau_\pi(1),\tau_\pi(\len(\pi))),\mathcal{C}_{\alpha,\beta}^\circ\cup\{(s^\circ,t^\circ)\}}$.
\end{theorem}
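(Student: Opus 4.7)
The plan is to verify, in turn, the three conditions required for $\tau_\pi(1),\dots,\tau_\pi(\len(\pi))$ to be a feasible ride for the path scenario $\tuple{G^\circ,(\tau_\pi(1),\tau_\pi(\len(\pi))),\mathcal{C}^\circ_{\alpha,\beta}\cup\{(s^\circ,t^\circ)\}}$: that the sequence is a well-defined ride on $G^\circ$; that the two fixed endpoints match; and that every request is satisfied.

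First I would show that $\tau_\pi(i)\in V^\circ=\{1,\dots,3n\}$ for every time step $i$. The lower bound is immediate, since $\tau_\pi(i)\geq \acw(\pi)=\alpha\geq 1$ by the general remark about $\acw$ in the text; the upper bound is exactly the hypothesis $\tau_\pi(i)\leq \cw(\pi)=\beta\leq 3n$. Next I would check that consecutive entries are connected in $G^\circ$, i.e.\ that $|\tau_\pi(i+1)-\tau_\pi(i)|=1$ for every $i$. This is a short case analysis on the three cases in the definition of $\ell_\pi$: in the non-wrap case, $\ell_\pi$ is constant and $\pi$ moves to an adjacent cycle node, so $\tau_\pi$ shifts by $\pm 1$; in the two wrap-around cases ($\pi_i=n,\pi_{i+1}=1$ or $\pi_i=1,\pi_{i+1}=n$) the correction $\pm n$ coming from the change in $\ell_\pi$ is exactly what is needed to keep $\tau_\pi$ moving by one unit on the line.

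Next, the start and end of the candidate ride are trivially $\tau_\pi(1)$ and $\tau_\pi(\len(\pi))$ by definition, so nothing needs to be checked there. For the extra request $(s^\circ,t^\circ)$, I would use the indices $\acwIndex(\pi)$ and $\cwIndex(\pi)$: by definition $\tau_\pi(\acwIndex(\pi))=\acw(\pi)=\alpha$ and $\tau_\pi(\cwIndex(\pi))=\cw(\pi)=\beta$, so in the case $\acwIndex(\pi)\leq \cwIndex(\pi)$ the ride visits $\alpha$ before $\beta$, satisfying $(s^\circ,t^\circ)=(\alpha,\beta)$; the other case is symmetric with $(s^\circ,t^\circ)=(\beta,\alpha)$.

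The core step, and the only place where a real argument is needed, is satisfying an arbitrary request $(v_s,v_t)\in\mathcal{C}^\circ_{\alpha,\beta}$. Unfolding the definition, $(v_s\mod n,\,v_t\mod n)\in\mathcal{C}$ and $v_s,v_t\in V^\circ_{\alpha,\beta}$. Since $\pi$ is feasible for $\mathcal{R}$, there are time steps $i\leq i'$ with $\pi_i=v_s\mod n$ and $\pi_{i'}=v_t\mod n$. The key observation is that $\tau_\pi(i)\in[\alpha,\beta]$ (by the extremality of $\acw$ and $\cw$) and $\tau_\pi(i)\mod n=\pi_i=v_s\mod n$; but by the very definition of $V^\circ_{\alpha,\beta}$, $v_s$ is the \emph{unique} element of $\{\alpha,\dots,\beta\}$ with residue $v_s\mod n$, forcing $\tau_\pi(i)=v_s$. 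The same argument applied at $i'$ gives $\tau_\pi(i')=v_t$, so the lifted ride satisfies $(v_s,v_t)$. The only subtle point in the whole proof is precisely this uniqueness-of-representative argument, which is what motivates the definition of $V^\circ_{\alpha,\beta}$; once it is in place, everything else is just bookkeeping on the counter $\ell_\pi$.
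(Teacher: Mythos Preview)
Your proposal is correct and follows essentially the same approach as the paper's own proof: verify that the lifted sequence lives in $V^\circ$, observe that $(s^\circ,t^\circ)$ is satisfied via the extremal indices, and then use the uniqueness-of-representative property of $V^\circ_{\alpha,\beta}$ to transfer satisfaction of each request from $\pi$ to the lifted ride. If anything, you are more careful than the paper, since you explicitly check that consecutive values of $\tau_\pi$ differ by exactly one (hence form a ride on $G^\circ$), a point the paper leaves implicit.
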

\begin{proof}
Let $\Upsilon=\tau_\pi(1),...,\tau_\pi(\len(\pi))$.
Note first that each node $v\in\nodes(\Upsilon)$ belongs to $V^\circ$, because $\cw(\pi)\leq 3n$.
Therefore, we have to show that $\Upsilon$ satisfies every request in $\mathcal{C}_{\alpha,\beta}^\circ$. In fact, $\Upsilon$ clearly satisfies
$(s^\circ,t^\circ)$.
Consider then any request $(v_s,v_t)\in \mathcal{C}_{\alpha,\beta}^\circ$ such that $(v_s\ {\tt mod}\ n,v_t\ {\tt mod}\ n)$ is a request in
$\mathcal{C}$ with $v_s\in V_{\alpha,\beta}^\circ$ and $v_t\in V_{\alpha,\beta}^\circ$. Since $\pi$ is feasible for $\R$, there are two time
steps $i$ and $j$ such that $i\leq j$, $\pi_i=v_s\ {\tt mod}\ n$ and $\pi_{j}=v_t\ {\tt mod}\ n$.
Actually, by definition of $\alpha$ and $\beta$, since $v_s\in V_{\alpha,\beta}^\circ$ (resp., $v_t\in V_{\alpha,\beta}^\circ$), there is no
different time step $i'$ (resp., $j'$) such that $\pi_{i'}=v_s\ {\tt mod}\ n$ (resp., $\pi_{j'}=v_t\ {\tt mod}\ n$). Hence, we have that
$\tau_\pi(i)=v_s$ and $\tau_\pi(j)=v_t$; in fact, $\tau_\pi$ restricted on $V_{\alpha,\beta}^\circ$ is a bijection. So, $\Upsilon$ satisfies
$(v_s,v_t)$.
\end{proof}

Intuitively, the result tells us that feasible rides for $\R$ are mapped into feasible rides for a suitable defined scenario over a path.
Below, we show that the converse also holds, under certain technical conditions.

\begin{theorem}\label{thm:cycle}
Consider the following setting:
\begin{itemize}
  \item[(i)] $\alpha,\beta\in V^\circ$ is a pair of nodes such that $\{\alpha\ {\tt mod}\ n,\beta\ {\tt mod}\ n\}\subseteq V_\mathcal{C}
      \cup \{s_0,t_0\}$, $1\leq \alpha,\beta\leq 3n$, and such that, for each $x\in V_\mathcal{C}\cup \{s_0,t_0\}$, there is a node $v_x\in
      V^\circ$ with $\alpha\leq v_x\leq \beta$ and $x=v_x \ {\tt mod}\ n$.

  \item[(ii)] $v_{s_0},v_{t_0}\in V^\circ$ is a pair of nodes such that $\alpha\leq v_{s_0}\leq \beta$, $\alpha \leq v_{t_0}\leq \beta$,
      $v_{s_0}\ {\tt mod}\ n=s_0$, and $v_{t_0}\ {\tt mod}\ n=t_0$.

  \item[(iii)] $(s^\circ,t^\circ)$ is a request such that $(s^\circ,t^\circ)\in\{(\alpha,\beta),(\beta,\alpha)\}$.
\end{itemize}

Let $\pi^\circ$ be a feasible ride for $\tuple{G^\circ,(v_{s_0},v_{t_0}),\mathcal{C}_{\alpha,\beta}^\circ\cup\{(s^\circ,t^\circ)\}}$.
Then,     $$ \pi^\circ_1\ {\tt mod}\ n, \dots,\pi^\circ_{\len(\pi^\circ)}\ {\tt mod}\ n $$ is a feasible ride for $\R$.
\end{theorem}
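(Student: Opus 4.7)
The plan is to verify that the projection $\pi := (\pi^\circ_1 \bmod n, \ldots, \pi^\circ_{\len(\pi^\circ)}\bmod n)$ satisfies the three properties required for a feasible ride of $\R$: (a) consecutive entries are adjacent in the cycle $G$, (b) the endpoints coincide with $s_0$ and $t_0$, and (c) every request in $\mathcal{C}$ is satisfied.

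For (a), consecutive entries of $\pi^\circ$ are adjacent in the path $G^\circ$, so they differ by one. Their mod-$n$ projections then either also differ by one, giving an edge of the form $\{v,v+1\}$ in $G$, or form the pair $\{n,1\}$ (when the transition straddles a multiple of $n$), which is precisely the wrap-around edge of the cycle. Hence consecutive entries of $\pi$ are always adjacent in $G$, so $\pi$ is a bona fide ride on $G$. For (b), hypothesis (ii) immediately yields $\pi_1 = v_{s_0}\bmod n = s_0$ and $\pi_{\len(\pi)} = v_{t_0}\bmod n = t_0$.

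The core of the proof is (c). Given $(s,t)\in \mathcal{C}$, hypothesis (i) furnishes representatives $v_s,v_t\in V^\circ$ with $\alpha\leq v_s,v_t\leq\beta$, $v_s\bmod n = s$ and $v_t\bmod n = t$. In the clean case where these representatives lie in $V^\circ_{\alpha,\beta}$, the pair $(v_s,v_t)$ belongs to $\mathcal{C}^\circ_{\alpha,\beta}$, and the feasibility of $\pi^\circ$ supplies time steps $i\leq j$ with $\pi^\circ_i = v_s$ and $\pi^\circ_j = v_t$, whose projections give $\pi_i = s$ and $\pi_j = t$, establishing that $\pi$ satisfies $(s,t)$.

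The main obstacle is a node $x\in V_\mathcal{C}$ whose representative in $[\alpha,\beta]$ is not unique: in that case $x$ admits two copies differing by $n$, neither belongs to $V^\circ_{\alpha,\beta}$, and the clean lift to $\mathcal{C}^\circ_{\alpha,\beta}$ is unavailable. To handle this I will exploit the auxiliary request $(s^\circ,t^\circ)\in\{(\alpha,\beta),(\beta,\alpha)\}$ required by hypothesis (iii): its satisfiability forces $\pi^\circ$ to touch both $\alpha$ and $\beta$, and since the ride starts and ends inside $[\alpha,\beta]$ and only moves by unit steps on the path, it must in fact traverse \emph{every} node of $[\alpha,\beta]$. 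Consequently every representative in $[\alpha,\beta]$ of every $x\in V_\mathcal{C}$ is visited by $\pi^\circ$. A short case analysis on the orientation of $(s^\circ,t^\circ)$ (which fixes whether $\alpha$ is visited before $\beta$ or vice versa) and on the positions of the visits then identifies suitably ordered occurrences of some representative of $s$ and some representative of $t$, yielding indices $i\leq j$ with $\pi_i=s$ and $\pi_j=t$ and thereby completing the verification.
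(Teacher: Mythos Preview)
Your proposal is correct and follows essentially the same approach as the paper's proof. Both split into the ``unique representative'' case (where $(v_s,v_t)\in\mathcal{C}^\circ_{\alpha,\beta}$ and feasibility of $\pi^\circ$ applies directly) and the ``multiple representative'' case, which is handled by exploiting that $\pi^\circ$ satisfies the auxiliary request $(s^\circ,t^\circ)$ and hence sweeps monotonically through $[\alpha,\beta]$ in one direction or the other between some pair of time steps; an intermediate-value argument on the path then locates correctly ordered occurrences of representatives of $s$ and $t$. Your explicit check that the mod-$n$ projection of a unit step in $G^\circ$ is an edge of the cycle $G$ is a detail the paper leaves implicit but is worth stating.
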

\begin{proof}
Let $\pi^\circ$ be a feasible ride for $\tuple{G^\circ,(v_{s_0},v_{t_0}),\mathcal{C}_{\alpha,\beta}^\circ\cup\{(s^\circ,t^\circ)\}}$, and let
$\Lambda$ be the ride such that:
$$\Lambda=\pi^\circ_1\ {\tt mod}\ n, \dots,\pi^\circ_{\len(\pi^\circ)}\ {\tt mod}\ n.$$

Note that $\pi^\circ_1=v_{s_0}$ and $\pi^\circ_{\len(\pi^\circ)}=v_{t_0}$. Because of \emph{(ii)}, $\Lambda_1=s_0$ and
$\Lambda_{\len(\Lambda)}=t_0$. Therefore, in order to show that $\Lambda$ is feasible for $\R$, we have to show that it satisfies each request
in $\mathcal{C}$.
Let $(s,t)$ be in $\mathcal{C}$. We distinguish two cases.

First, assume there is a pair $v_s,v_t$ of nodes in $V_{\alpha,\beta}^\circ$ such that $s=v_s\ {\tt mod}\ n$ and $t=v_t\ {\tt mod}\ n$. Then,
      $(v_s,v_t)$ is in $\mathcal{C}_{\alpha,\beta}^\circ$. By the feasibility of $\pi^\circ$, it follows that there are two time steps $i$ and
      $j$ with $i\leq j$ such that $\pi^\circ_i=v_s$ and $\pi^\circ_{j}=v_t$. Hence, $\Lambda_i=s$ and $\Lambda_{j}=t$, implying that $\Lambda$
      satisfies $(s,t)$, too.

Second, assume that $V_{\alpha,\beta}^\circ$ contains no node $v_s$ such that $s=v_s\ {\tt mod}\ n$; in fact, the case where
      $V_{\alpha,\beta}^\circ$ contains no node $v_t$ such that $t=v_t\ {\tt mod}\ n$ can be addressed with the same line of reasoning. Recall
      that, because of \emph{(i)}, for each $x\in V_\mathcal{C}\cup \{s_0,t_0\}$, there is a node $v_x\in V^\circ$ with $\alpha\leq v_x\leq
      \beta$ and $x=v_x \ {\tt mod}\ n$.
      Therefore, we conclude that there are two nodes $v_s<v_{s}'$ such that $\alpha\leq v_s$, $v_{s}'\leq \beta$, $s=v_s\ {\tt mod}\ n=v_s'\
      {\tt mod}\ n$. In this case, there must be a node $v_t$ such that $v_s\leq v_t\leq v_{s}'$ and $t=v_t\ {\tt mod}\ n$. Since $\pi^\circ$
      satisfies $(s^\circ,t^\circ)$ because of \emph{(iii)}, there is a pair of time steps $i$ and $j$ with $i\leq j$ and such that
      $\pi_i=s^\circ$ and $\pi_j=t^\circ$. Assume $(s^\circ,t^\circ)=(\alpha,\beta)$. Then, there is a pair of time instants $i^*,j^*$ such
      that $i\leq i^*\leq
       j^*\leq j$ and $\pi_{i^*}=v_s$ and $\pi_{j^*}=v_t$. Therefore, $\Lambda_{i^*}=s$, $\Lambda_{j^*}=t$, and thus $\Lambda$ satisfies
      $(s,t)$. To conclude, consider the case where $(s^\circ,t^\circ)=(\beta,\alpha)$. In this case, there is a pair of time instants
      $i^*,j^*$ such that $i\leq i^*\leq j^*\leq j$ and $\pi_{i^*}=v_s'$ and $\pi_{j^*}=v_t$. In fact, we still have $\Lambda_{i^*}=s$,
      $\Lambda_{j^*}=t$, and thus $\Lambda$ again satisfies $(s,t)$.
\end{proof}

\subsection{Putting It All Together}

Armed with the above technical ingredients, we can now illustrate Algorithm~\ref{alg:cycle}, named {\sc RideOnCycle}, which computes an optimal
ride for any ride-sharing scenario $\mathcal{R}=\tuple{G,(s_0,t_0),\mathcal{C}}$, with $G$ being a cycle.
The algorithm founds on the idea of enumerating each possible tuple $\tuple{\alpha,\beta,v_{s_0},v_{t_0},s^\circ,t^\circ}$ of elements as in
Theorem~\ref{thm:cycle}. For each given configuration, the optimal ride $\pi^\circ$ over the scenario
$\tuple{G^\circ,(v_{s_0},v_{t_0}),\mathcal{C}_{\alpha,\beta}^\circ\cup\{(s^\circ,t^\circ)\}}$ is computed. Eventually, $\pi^*$ is defined (see
step~\ref{step:confronto}) as the ride with minimum cost (w.r.t. $w^\circ$) over such rides $\pi^\circ$. The ride $\pi^*_1\ {\tt mod}\ n,
\dots,\pi^*_{\len(\pi^*)}\ {\tt mod}\ n$ is then returned.

\begin{algorithm}[t]
\SetKwInput{KwData}{Input} \SetKwInput{KwResult}{Output}

\Indm
\KwData{A ride-sharing scenario $\mathcal{R}=\tuple{G,(s_0,t_0),\mathcal{C}}$, where $G$ is a cycle;}%
\KwResult{An optimal ride for $\mathcal{R}$ ;}

\Indp

   \For{each tuple $\tuple{\alpha,\beta,v_{s_0},v_{t_0},s^\circ,t^\circ}$ of elements as in Theorem~\ref{thm:cycle}}{

   Let $\pi^\circ$ be an optimal ride for $\tuple{G^\circ,(v_{s_0},v_{t_0}),\mathcal{C}_{\alpha,\beta}^\circ\cup\{(s^\circ,t^\circ)\}}$\;

	   \If{$\pi^*$ is not yet defined or $w^\circ(\pi^\circ) < w^\circ(\pi^*)$}{\label{step:confronto}
			$\pi^* \GETS \pi^\circ$\;
		}

}  \Return{$\pi^*_1\ {\tt mod}\ n, \dots,\pi^*_{\len(\pi^*)}\ {\tt mod}\ n$;}

\caption{{\sc RideOnCycle}}\label{alg:cycle}
\end{algorithm}

\begin{theorem}
Algorithm {\sc RideOnCycle} is correct.
\end{theorem}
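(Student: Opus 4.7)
The plan is to verify two things: first, that any ride returned by {\sc RideOnCycle} is feasible for $\R$ with cost equal to $w^\circ(\pi^*)$; and second, that for at least one of the tuples enumerated by the algorithm, the optimal path ride $\pi^\circ$ has cost no larger than the cost of some optimal ride for $\R$. Together these give optimality of the output.

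For feasibility, I would invoke Theorem~\ref{thm:cycle} directly. For every tuple $\tuple{\alpha,\beta,v_{s_0},v_{t_0},s^\circ,t^\circ}$ considered by the algorithm, the ride $\pi^\circ$ returned by the subroutine is by construction a feasible ride for $\tuple{G^\circ,(v_{s_0},v_{t_0}),\mathcal{C}_{\alpha,\beta}^\circ\cup\{(s^\circ,t^\circ)\}}$. Hence $\pi^\circ_1 \mod n,\dots,\pi^\circ_{\len(\pi^\circ)} \mod n$ is feasible for $\R$. In particular, the returned ride $\pi^*_1 \mod n,\dots,\pi^*_{\len(\pi^*)} \mod n$ is feasible for $\R$. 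Moreover, since the definition of $w^\circ$ mirrors $w$ edge-by-edge through the map $v \mapsto v \mod n$, the cost of this cycle ride is exactly $w^\circ(\pi^*)$.

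For optimality, I would start from Lemma~\ref{lem:boundCW}, which guarantees the existence of an optimal ride $\pi^{\rm opt}$ for $\R$ satisfying $\cw(\pi^{\rm opt})\leq 3n$ and $\{\cw(\pi^{\rm opt}) \mod n,\acw(\pi^{\rm opt}) \mod n\}\subseteq V_\mathcal{C} \cup \{s_0,t_0\}$. Choose $\alpha=\acw(\pi^{\rm opt})$, $\beta=\cw(\pi^{\rm opt})$, $v_{s_0}=\tau_{\pi^{\rm opt}}(1)$, $v_{t_0}=\tau_{\pi^{\rm opt}}(\len(\pi^{\rm opt}))$, and $(s^\circ,t^\circ)=(\alpha,\beta)$ if $\acwIndex(\pi^{\rm opt})\leq \cwIndex(\pi^{\rm opt})$, otherwise $(\beta,\alpha)$. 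I would then check that this tuple satisfies conditions \emph{(i)}--\emph{(iii)} of Theorem~\ref{thm:cycle}: condition \emph{(i)} follows from Lemma~\ref{lem:boundCW} together with the fact that each $x\in V_\mathcal{C}\cup\{s_0,t_0\}$ is traversed by $\pi^{\rm opt}$ and hence its $\tau$-image lies in $[\alpha,\beta]$; conditions \emph{(ii)} and \emph{(iii)} are immediate from the definitions of $\cw$ and $\acw$. Since the algorithm enumerates \emph{every} such admissible tuple, this particular configuration is considered in some iteration.

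By Theorem~\ref{thm:cp}, the lifted ride $\Upsilon=\tau_{\pi^{\rm opt}}(1),\dots,\tau_{\pi^{\rm opt}}(\len(\pi^{\rm opt}))$ is a feasible ride for $\tuple{G^\circ,(v_{s_0},v_{t_0}),\mathcal{C}_{\alpha,\beta}^\circ\cup\{(s^\circ,t^\circ)\}}$, and by construction $w^\circ(\Upsilon)=w(\pi^{\rm opt})$. Hence the optimal $\pi^\circ$ computed by the subroutine for this tuple satisfies $w^\circ(\pi^\circ)\leq w^\circ(\Upsilon)=w(\pi^{\rm opt})$. Combined with the feasibility argument, the ride obtained by reducing $\pi^\circ$ modulo $n$ is a feasible ride for $\R$ whose cost is at most $w(\pi^{\rm opt})$, hence optimal. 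Since step~\ref{step:confronto} keeps the best $\pi^*$ across all iterations, the final output costs no more than this value and is therefore optimal for $\R$. The main obstacle I expect is the careful verification of condition \emph{(i)} of Theorem~\ref{thm:cycle}, specifically that each node in $V_\mathcal{C}\cup\{s_0,t_0\}$ admits a $\tau_{\pi^{\rm opt}}$-preimage inside $[\alpha,\beta]$; but this is forced by feasibility of $\pi^{\rm opt}$ together with the definitions of $\alpha$ and $\beta$ as the extreme values of $\tau_{\pi^{\rm opt}}$.
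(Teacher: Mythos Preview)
Your proposal is correct and follows essentially the same approach as the paper's proof: feasibility via Theorem~\ref{thm:cycle}, and optimality by combining Lemma~\ref{lem:boundCW} with Theorem~\ref{thm:cp} to exhibit a tuple in the enumeration whose associated path-optimum is at most the cost of an optimal cycle ride. The only cosmetic difference is that the paper phrases the optimality half as a proof by contradiction, whereas you argue directly; your explicit check of condition~\emph{(i)} (that every $x\in V_\mathcal{C}\cup\{s_0,t_0\}$ has a $\tau$-preimage in $[\alpha,\beta]$) is in fact slightly more careful than the paper, which leaves that verification implicit.
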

\begin{proof}
In order to analyze the correctness, observe that by Theorem~\ref{thm:cycle}, the ride returned as output, say $\Lambda^*=\pi^*_1\ {\tt mod}\
n, \dots,\pi^*_{\len(\pi^*)}\ {\tt mod}\ n$, is necessarily feasible for $\R$. Therefore, assume for the sake of contradiction that there is an
optimal ride $\pi$ for $\R$ such that $w(\pi)<w(\Lambda^*)$. In particular, by construction of $w^\circ$, we derive that
$w(\pi)<w(\Lambda^*)=w^\circ(\pi^*)$.

Now, by Lemma~\ref{lem:boundCW}, we can actually assume, w.l.o.g., that $\cw(\pi)\leq 3n$ and $\{\cw(\pi)\ {\tt mod}\ n,\acw(\pi)\ {\tt
mod}\ n\}\subseteq V_\mathcal{C} \cup \{s_0,t_0\}$ hold. So, we can apply Theorem~\ref{thm:cp} and derive the existence of a tuple
$\tuple{\alpha,\beta,v_{s_0},v_{t_0},s^\circ,t^\circ}$ of elements, with $v_{s_0}=\tau_\pi(1)$ and $v_{t_0}=\tau_\pi(\len(\pi))$, satisfying
properties \emph{(i)}, \emph{(ii)}, and \emph{(iii)} in Theorem~\ref{thm:cycle} and such that $\Upsilon=\tau_\pi(1),...,\tau_\pi(\len(\pi))$ is
feasible for $\tuple{G^\circ,(v_{s_0},v_{t_0}),\mathcal{C}_{\alpha,\beta}^\circ\cup\{(s^\circ,t^\circ)\}}$. In particular, by construction of
$w^\circ$, we derive that $w^\circ(\Upsilon)=w(\pi)$.
However, the algorithm has compared the weight of $\Upsilon$ and $\pi^*$, and hence we know that $w(\pi)=w^\circ(\Upsilon)\geq w^\circ(\pi^*)$,
which is impossible.
\end{proof}

\smallskip

Let us finally discuss about the implementation and running time of  the algorithm. Before starting the loop, we first compute the sets $W = \{w \in V^\circ \mid
1\leq w\leq 3n \mbox{ and } (w  \mod n) \in V_\C \cup \{s_0,t_0\}\}$ and $\C^\circ = \{(s,t) \in W \mid (s\mod n, t\mod n) \in \C\}$; this can
be done in time $O(|\C|)$ by iterating through the requests in $\C$. Note that $|W| = O(|V_\C|)$ and $\C^\circ| = O(|\C|)$.
Now, note that the number of iterations of {\sc RideOnCycle} corresponds to the number tuples
$\tuple{\alpha,\beta,v_{s_0},v_{t_0},s^\circ,t^\circ}$ which satisfy the conditions of Theorem~\ref{thm:cycle}.
The number of possible pairs $(\alpha,\beta)$ is $W^2 = O(|V_\C|^2)$.
Checking whether condition \emph{(i)} in Theorem~\ref{thm:cycle} holds on them can be simply accomplished by checking that every element $x\in
V_\C \cup \{s_0,t_0\}\}$ is such that $\alpha\mod n\leq x \leq \beta\mod n$. So, it can be done in constant time after that, in a
pre-processing step costing $O(|V_C|)$, the minimum and maximum element in $V_\C \cup \{s_0,t_0\}\}$ have been computed.
Moreover, note that since $1\leq \alpha,\beta\leq 3n$, according to Theorem~\ref{thm:cycle}, there are at most 3 possible choices for $s_0$
(resp, $t_0$); in addition, there are just two alternatives for the pair $s^\circ,t^\circ$. Hence, summarizing we have that all tuples
satisfying the conditions of Theorem~\ref{thm:cycle} can be actually build in $O(|V_\C|^2)$.
Then, by inspecting the operations performed at each iteration, for each tuple $\tuple{\alpha,\beta,v_{s_0},v_{t_0},s^\circ,t^\circ}$, we have
to compute the set $\mathcal{C}_{\alpha,\beta}^\circ$. To this end, we search among the elements in $\C^\circ$ for the pairs $(s,t)$ having
both nodes in $V^\circ_{\alpha,\beta}$; this step takes $O(|\C|)$. Finally, on the resulting scenario defined on a path, we apply the algorithm
for computing an optimal ride, which costs $O(|\C| \log |\C|+|V|)$. 
Hence the following theorem follows. 

\begin{theorem}\label{thm:impl:cycle}
Let $\mathcal{R}=\tuple{G,(s_0,t_0),\mathcal{C}}$ be a ride-sharing scenario where $G=(V,E,w)$ is a cycle. Then, an optimal ride for $\R$
(together with its cost) can be computed in time $O(|V_\C|^2\cdot (|\C| \log |\C|+|V|))$.
\end{theorem}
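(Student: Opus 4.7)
The plan is to bound the running time of Algorithm {\sc RideOnCycle}, since its correctness has already been established in the preceding theorem. The total cost decomposes as a one-off pre-processing cost plus the number of iterations of the outer loop times the cost incurred per iteration, and I would prove the claimed bound by attacking these two multiplicands separately.

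First, I would perform some pre-processing. I would compute the auxiliary set $W = \{w \in V^\circ \mid 1\leq w\leq 3n \text{ and } (w \mod n) \in V_\C \cup \{s_0,t_0\}\}$ together with $\C^\circ = \{(s,t) \mid s,t\in W \text{ and } (s\mod n, t\mod n) \in \C\}$, by a single sweep through $\C$, in time $O(|\C|)$; this yields $|W| = O(|V_\C|)$ and $|\C^\circ| = O(|\C|)$. I would also precompute the minimum and maximum of $V_\C \cup \{s_0,t_0\}$ in time $O(|V_\C|)$, so that condition \emph{(i)} of Theorem~\ref{thm:cycle} can be tested in constant time on any candidate pair $(\alpha,\beta)$. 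Next I would bound the number of tuples $\tuple{\alpha,\beta,v_{s_0},v_{t_0},s^\circ,t^\circ}$ meeting the conditions of Theorem~\ref{thm:cycle}: the pair $(\alpha,\beta)$ ranges over $W\times W$, giving $O(|V_\C|^2)$ choices; the constraint $1\leq \alpha,\beta\leq 3n$ forces at most three pre-images in $V^\circ$ for each of $s_0$ and $t_0$, hence only $O(1)$ choices for $(v_{s_0},v_{t_0})$; and $(s^\circ,t^\circ)$ takes two values. Altogether the outer loop runs $O(|V_\C|^2)$ times.

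Next I would analyze one iteration. Given a tuple, I would construct $\mathcal{C}_{\alpha,\beta}^\circ$ by scanning $\C^\circ$ and keeping only the pairs whose two endpoints both fall in $V_{\alpha,\beta}^\circ$ (which itself is a contiguous interval in $V^\circ$ and hence readily characterized), at cost $O(|\C|)$. I would then invoke the path procedure of Theorem~\ref{thm:impl} on $\tuple{G^\circ,(v_{s_0},v_{t_0}),\mathcal{C}_{\alpha,\beta}^\circ\cup\{(s^\circ,t^\circ)\}}$, observing that $|V^\circ|=O(|V|)$ and $|\mathcal{C}_{\alpha,\beta}^\circ \cup\{(s^\circ,t^\circ)\}|=O(|\C|)$, so that the call runs in $O(|\C|\log|\C|+|V|)$. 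The comparison at step~\ref{step:confronto} is $O(1)$ once the cost is returned alongside the ride (which Theorem~\ref{thm:impl} already guarantees). Hence each iteration costs $O(|\C|\log|\C|+|V|)$, and multiplying by the number of iterations yields the claimed $O(|V_\C|^2\cdot(|\C|\log|\C|+|V|))$ bound. A final pass to translate the stored $\pi^*$ on $G^\circ$ into the output ride on $G$ by coordinate-wise reduction modulo $n$ is linear in the length of $\pi^*$ and thus absorbed into the per-iteration cost.

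The only real obstacle is to verify that the enumeration of tuples is genuinely $O(|V_\C|^2)$ rather than scaling with $|V|$ or $|\C|$; this is exactly what the bounds $|W|=O(|V_\C|)$, the $O(1)$ multiplicity of the pre-images $v_{s_0},v_{t_0}$, and the $O(1)$ choice of $(s^\circ,t^\circ)$ secure. Once these observations are in place the rest is routine bookkeeping on top of Theorem~\ref{thm:impl}.
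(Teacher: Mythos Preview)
Your proposal is correct and follows essentially the same approach as the paper: the same pre-processing of $W$ and $\C^\circ$, the same $O(|V_\C|^2)$ bound on the number of tuples via the $|W|=O(|V_\C|)$ observation together with the constant multiplicity of $v_{s_0},v_{t_0},(s^\circ,t^\circ)$, and the same per-iteration cost dominated by the call to the path algorithm of Theorem~\ref{thm:impl}. Your explicit remark about the final $\bmod\ n$ translation of $\pi^*$ is a small addition, but otherwise the argument is the paper's.
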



%


\section{Related Work}\label{sec:related}
\paragraph{Ride Sharing.}
Based on whether or not we allow objects to be temporarily unloaded at some vertex of the transportation network, two versions of ride sharing
problems emerge: \emph{preemptive} (where drops are allowed) and \emph{non-preemptive} (where drops are not allowed).
An orthogonal classification comes, moreover, from the capacity $c$ of the given vehicle. The setting with \emph{unit} capacity ($c=1$) has
received much attention in the literature, where it often comes in the form of a \emph{stacker crane problem}
(see~\cite{doi:10.1137/0207017,JordanSrour2013674} and the references therein). A natural generalization is then when the vehicle can carry
more than one object at time, that is, when $c$ is any given natural number possibly larger than 1.

\begin{figure}[h]\centering \small
\begin{tabular}{cc}
  \begin{tabular}{c}
  \begin{tabular}{|c||c|c|}
    \hline
                    & \emph{preemptive} & \emph{non-preemptive} \\
    \hline \hline
    \textbf{trees} & in $\bf P$~\cite{doi:10.1137/0221066} & $\bf NP$-hard~\cite{Frederickson199329}\\
    \textbf{cycles} & in $\bf P$~\cite{Frederickson:1993:NCS:152322.152334} & in $\bf P$~\cite{Frederickson:1993:NCS:152322.152334}\\
    \textbf{paths}  & in $\bf P$~\cite{doi:10.1137/0217053} & in $\bf P$~\cite{doi:10.1137/0217053}\\
    \hline
  \end{tabular}\\
  $c=1$
  \end{tabular}
       &
    \begin{tabular}{c}
  \begin{tabular}{|c||c|c|}
    \hline
                    & \emph{preemptive} & \emph{non-preemptive} \\
    \hline \hline
    \textbf{trees} & $\bf NP$-hard~\cite{Guan199841} & $\bf NP$-hard~\cite{Frederickson199329}\\
    \textbf{cycles} & in $\bf P$~\cite{guan1998multiple}$^*$ & $\bf NP$-hard~\cite{Guan199841}\\
    \textbf{paths}  & in $\bf P$~\cite{guan1998multiple} & $\bf NP$-hard~\cite{Guan199841}\\
    \hline
  \end{tabular}\\
  $c\geq 1$
  \end{tabular}
  \end{tabular}
   \caption{Summary of results in the literature. $^*$It is assumed that, for each object, the direction of its transportation (either clockwise, or anticlockwise) is is a-priori fixed.}\label{tab:summary}
\end{figure}

Given these two orthogonal dimensions, a total of four different configurations can be studied (cf.~\cite{guan1998multiple}).
In all the possible configurations, vehicle routing is known to be $\bf NP$-hard~\cite{doi:10.1137/0207017,Garey:1979:CIG:578533} when the
underlying transportation network is an arbitrary graph.
In fact, motivated by applications in a wide range of real-world scenarios, complexity and algorithms for ride sharing problems have been
studied for networks with specific topologies, such as path, cycles, and trees.
A summary of the results in the literature referring to these studies is reported in Figure~\ref{tab:summary}.
By looking at the table, consider first the unit capacity setting. In this case, ride sharing is known to be polynomial time solvable on
both paths~\cite{doi:10.1137/0217053} and cycles~\cite{Frederickson:1993:NCS:152322.152334}, no matter of whether drops are allowed. Moving to
trees, instead, the preemptive case remains efficiently solvable~\cite{doi:10.1137/0221066}, while the non-preemptive case becomes $\bf
NP$-hard~\cite{Frederickson199329}.

Consider now the case where $c\geq 1$ holds. Clearly enough, the intractability result over trees established for $c=1$ still holds in this
more general setting.
In fact, in this setting, ride sharing appears to be intrinsically more complex. Indeed, it has been shown that the non-preemptive version of
the problem is $\bf NP$-hard on all the considered network topologies and that the preemptive version is $\bf NP$-hard even on
trees~\cite{Guan199841}.
Good news comes instead when the problem is restricted over paths and cycles in the preemptive case.
Indeed, the problem has been shown to be feasible in polynomial time on paths, formally in $O((k+n)\times n)$ where $k$ is the number of
objects and $n$ is the number of vertices~\cite{guan1998multiple}.
Moreover, the algorithm proposed by~\cite{guan1998multiple} is also applicable to cycles, under the constraint that, for each object, the
direction of the transportation (either clockwise, or anticlockwise) is a-priori given.
More efficient algorithms are know for paths in the special case where the ride starts from one endpoint~\cite{Guan199841,Karp}.

\paragraph{Vehicles of Unlimited Capacity.} The $\bf NP$-hardness results discussed above exploit a given constant bound on the capacity and, hence, they do not immediately
apply to the unbounded setting. However, specific reductions have been exhibited showing the $\bf NP$-hardness on general graphs
(cf.~\cite{Tzoreff:2002:VRP:508259.508261,Chalasani:1999:ACR:323331.323342}). Moreover, heuristic methods (see,
e.g.,~\cite{Gendreau1999699,Mosheiov1998669}) and approximation algorithms (see,
e.g.,~\cite{Asano:1997:CPP:258533.258602,doi:10.1287/moor.10.4.527}) have been defined, too.
On the other hand, a number of tractability results for vehicles with unlimited capacity transporting objects of the same type can be inherited
even in the paired context we are considering. Indeed, by focusing on problems where such identical objects are initially stored at the same
node (or, equivalently, have to be transported to the same
destination)~\cite{Chalasani:1999:ACR:323331.323342,chalasani1996algorithms,charikar2001algorithms,743496}, efficient algorithms have been
designed for transportation networks that are trees and cycles~\cite{Tzoreff:2002:VRP:508259.508261}, with the running time being $O(n)$ and
$O(n^2)$, respectively.
Moreover, the algorithm for paths (and cycles, with the limitation discussed above) proposed by~\cite{guan1998multiple} can be still applied
over the unlimited capacity scenario. However, it was not explored in the literature whether its performances can be improved by means of
algorithms specifically designed for vehicles with unlimited capacity. Addressing this open issue is the distinguishing feature of the research
reported in the paper. Moreover, differently from~\cite{guan1998multiple}, our algorithm to solve the ride sharing problem over cycles does not
require that the direction of the transportation of the objects is fixed beforehand.
%

\section{Conclusion}\label{sec:conclusion}
We have consider a ride sharing problem with a vehicle of unlimited capacity, by completely classifying its complexity w.r.t.~the underlying
network topology. The main result is a $O(|C| \log |C|)$ algorithm for computing an optimal ride over paths, with $C$ denoting the set of the
available requests.
Our results have a wide spectrum of applicability, in particular, to find optimal rides whenever it is a-priori known that the number of
objects to be transported does not exceed the capacity of the vehicle.

In fact, computing an optimal ride might be not enough in some applications. Indeed, especially in the context of transportation of passengers
(such as in \emph{dial-a-ride} problems~\cite{DRP}), the human perspective tend to introduce further requirements leading to balance user
inconvenience against minimizing routing costs; in particular, the time comparison of the chosen route with respect to the shortest path to a
destination is a widely-used measure of customer satisfaction in (the related) \emph{school bus routing problems}~\cite{Park2010311}.
Accordingly, an interesting avenue for further research is to adapt our solution algorithms by taking into account fairness requirements.
Finally, we stress here that another interesting technical question is to assess whether, in our basic optimization setting, further
tractability results can be established by focusing on requests of special kinds, for instance, on requests where the starting and terminating
nodes precisely identify the endpoints of some edge. In this latter case, it would be interesting to analyze the complexity over trees (which
emerged to be intractable with arbitrary requests) and, more generally, over graphs having bounded treewidth.

\paragraph{Acknowledgment.}
This work was partially supported by the project ANR-14-CE24-0007-01 \emph{``CoCoRICo-CoDec"}.
We thank J\'{e}r\^{o}me Lang, from Universit\'{e} Paris-Dauphine, for introducing the subject to us.



\end{document}